\DeclareMathOperator{\poly}{poly}
\definecolor{mygreen}{RGB}{80,180,0}
\definecolor{b2}{RGB}{51,153,255}
\newcommand{\nc}{\newcommand}
\nc{\nnl}{\nn \\ &}  %new new line
\nc{\fot}{\frac{1}{2}} %frac one two
\nc{\oo}[1]{\frac{1}{#1}} % one over
\newcommand{\ben}{\begin{enumerate}}
\newcommand{\een}{\end{enumerate}}
\nc{\mc}{\mathcal}
\nc{\onenorm}[1]{\L\| #1 \R\|_1} %one norm
\nc{\Ra}{\Rightarrow}
\nc{\zo}{\{0,1\}}
\newtheorem{theorem}{Theorem}
\newtheorem*{remark}{Remark}
\newtheorem{lemma}{Lemma}
\newcommand{\pr}{\text{Pr}}
\newcommand{\rmi}{\mathrm{i}}
\newcommand{\soff}{\sigma_{\mathrm{off}}}
\newcommand{\sH}{\sigma_{\mathrm{H}}}
\newcommand{\on}{\mathrm{on}}
\newcommand{\off}{\mathrm{off}}
\newcommand{\test}{\mathrm{test}}
\newcommand{\rmR}{\mathrm{R}}
\newcommand{\rmI}{\mathrm{I}}
\newcommand{\caT}{\mathcal{T}}
\newcommand{\dom}{\mathrm{dom}}
\newcommand{\caF}{\mathcal{F}}
\newcommand{\caD}{\mathcal{D}}
\newcommand{\caO}{\mathcal{O}}
\newcommand{\caK}{\mathcal{K}}
\newcommand{\bbS}{\mathbb{S}}
\newcommand{\fku}{\mathfrak{u}}
\newcommand{\res}{\mathrm{res}}
\newcommand{\vertiii}[1]{{\left\vert\kern-0.25ex\left\vert\kern-0.25ex\left\vert #1 
    \right\vert\kern-0.25ex\right\vert\kern-0.25ex\right\vert}}
\begin{document}

\title{Quantum Phase Estimation by Compressed Sensing}

\author{Changhao Yi}
\thanks{These authors contributed equally to this work.}
\affiliation{State Key Laboratory of Surface Physics, Department of Physics, and Center for Field Theory and Particle Physics, Fudan University, Shanghai, China}
\affiliation{Institute for Nanoelectronic Devices and Quantum Computing, Fudan University, Shanghai, China}
\affiliation{Shanghai Research Center for Quantum Sciences, Shanghai, China}
\author{Cunlu Zhou}
\thanks{These authors contributed equally to this work.}
\affiliation{Department of Computer Science \& Institut Quantique, Université de Sherbrooke, QC, Canada}
\affiliation{Center for Quantum Information and Control \& Department of Physics and Astronomy, University of New Mexico, NM, USA}
\author{Jun Takahashi}
\affiliation{Institute of Solid State Physics, University of Tokyo, Chiba, Japan}
\affiliation{Center for Quantum Information and Control \& Department of Physics and Astronomy, University of New Mexico, NM, USA}
\begin{abstract}
As a signal recovery algorithm, compressed sensing is particularly effective when the data has low complexity and samples are scarce, which aligns natually with the task of quantum phase estimation (QPE) on early fault-tolerant quantum computers. In this work, we present a new Heisenberg-limited, robust QPE algorithm based on compressed sensing, which requires only sparse and discrete sampling of times. Specifically, given multiple copies of a suitable initial state and queries to a specific unitary matrix, our algorithm can recover the phase with a total runtime of $\mathcal{O}(\epsilon^{-1}\text{poly}\log (\epsilon^{-1}))$, where $\epsilon$ is the desired accuracy. Additionally, the maximum runtime satisfies $T_{\max}\epsilon \ll \pi$, making it comparable to state-of-the-art algorithms. Furthermore, our result resolves the basis mismatch problem in certain cases by introducing an additional parameter to the traditional compressed sensing framework.
\end{abstract}

\maketitle

\section{Introduction}

Quantum phase estimation (QPE) \cite{kitaev1995quantum} is one of the most useful subroutines in quantum computing, playing a crucial role in many promising quantum applications \cite{shor1999polynomial,abrams1999quantum,mcardle2020quantum}. Given a unitary matrix $U$ and one of its eigenvectors $|\Phi\rangle$ with eigenphase $e^{i2\pi \theta}$, the task of QPE is to estimate phase $\theta$ within a specified accuracy. Assuming the unitary matrix $U$ represents the evolution operator under a Hamiltonian $H$, the task of QPE becomes equivalent to estimating a specific eigenenergy $E_0$ \cite{lin2022heisenberg,wang2022quantum}. Consequently, this subroutine has numerous applications in condensed matter physics, high-energy physics, and quantum chemistry. As a generalization, the problem of estimating multiple phases of $U$ has been referred to as the quantum eigenvalue estimation problem (QEEP) \cite{somma2019quantum,o2019quantum,lin2022heisenberg,zhang2022computing,wang2022quantum,dut22hei}.

While fully fault-tolerant quantum computers may still be years away, early fault-tolerant quantum computers with a limited number of logical qubits and gates are expected to emerge much sooner, capable of solving nontrivial tasks that demonstrate practical quantum advantages. Given the crucial role of QPE in many such tasks, it is imperative to design QPE algorithms specifically tailored for early fault-tolerant quantum computers. The standard textbook QPE algorithm \cite{nielsen2010quantum} does not require an exact eigenstate as the initial state and involves only one measurement, but it relies on a large number of ancilla qubits and controlled operations, making it experimentally demanding. Although Kitaev's original iterative QPE algorithm \cite{kitaev1995quantum} uses just one ancilla qubit and a single controlled operation (see Fig.~\ref{fig:hadamard}), it requires the initial state to be an exact eigenstate, which can be a challenging task itself. Therefore, neither approach is well-suited for early fault-tolerant quantum computers.

Most of the recent work \cite{lin2022heisenberg,PRXQuantum.4.020331,ni2023low} in QPE for early fault-tolerant quantum computers have focused on designing better protocols to improve various aspects of Kitaev's original QPE algorithm. More specifically, the following properties are desired when designing such algorithms:

\begin{itemize}
    \item The quantum circuit should be simple, using at most one ancilla qubit and one controlled operation.
    \item The initial state is not necessarily an exact eigenstate of $U$. 
    \item The total runtime achieves the Heisenberg limit, i.e., $\mathcal{O}(\epsilon^{-1}\poly\log(\epsilon^{-1}\delta^{-1}))$ for estimating the phase $\theta$ to accuracy $\epsilon$ with probability $1-\delta$.
    \item When the overlap of the initial state and the targeted eigenstate is large, the maximal runtime $T_{\max}$ (hence the maximum circuit depth) can be much smaller than $\pi/\epsilon$. 
\end{itemize}
% In this paper, we emphasize another issue that should be considered for early fault-tolerant quantum computers: \textbf{the experimental complexity. }
In addition, the \emph{experimental complexity} should also be considered carefully for early fault-tolerant quantum computers: It is desirable to have a small number of time samples with a regular choice of values. For early fault-tolerant experiments, one may still need to prepare quantum circuits for each time $t$ by hand, and the total cost would be high if the number of different sample times is large. It is comparably easier to run the same quantum circuit multiple times, instead of running different quantum circuits a few times. Additionally, in quantum simulation algorithms \cite{georgescu2014quantum,childs2021theory} the time evolution by $U(t)$ is approximated by short-time evolutions: $U(t) \approx U_{\mathrm{sim}}(\Delta t)^L $. Thus, it is more convenient to sample from a discrete set of times $\mathcal{T} = \{n\Delta t, n\in \mathbb{N}\}$ rather than to sample from a continuous region. 
% In situation where one can only query $\{U^n\}_{n\in\mathbb{N}}$ as a black box, the setup is similar to the requirement of discrete sampling of times because every $U$ can be written as $e^{-\rmi H t_0}$. 
Most state-of-the-art algorithms sample times from a continuous region \cite{lin2022heisenberg,wang2022quantum,ding2023simultaneous} while only a few consider discrete sampling \cite{PhysRevA.108.062408, ding2024quantum}. 

For a Heisenberg-limited QPE algorithm with maximal runtime $T_{\max}$, if the size of the time samples needed is $\mathcal{O}(\mathrm{poly}\log T_{\max})$, the sampling is considered sparse in our paper. In this work, we design a non-adaptive algorithm that only requires discrete and sparse sampling of times without sacrificing performance. Informally, our main result can be stated as follows.

% Our main contribution is an early fault-tolerant QPE described as follows.

\begin{theorem}[Informal]
Suppose the target signal satisfies an approximately-on-grid assumption and its dominant part has frequency gap $N^{-1}$. Then our compressed sensing based QPE algorithm satisfies the following conditions:
    \begin{equation}
        |\mathcal{T}| = \mathcal{O}(\mathrm{poly}\log N), \quad T_{\max} = \mathcal{O}(N),\quad  T_{\mathrm{total}} = \mathcal{O}(N\mathrm{poly}\log N), \quad \epsilon = \mathcal{O}(\soff N^{-1}), 
    \end{equation}
where $|\mathcal{T}|$ is the number of samples, $T_{\max}$ is the maximum time for each time evolution, $T_{\mathrm{total}}$ is the total runtime, and $\epsilon$ is the desired accuracy, and $\soff$ quantifies the size of the off-grid component. The classical runtime of the algorithm is $\caO(\sigma_\off^{-1}N \poly\log N )$
\end{theorem}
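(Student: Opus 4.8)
The plan is to reduce the phase-estimation task to a sparse recovery problem and then invoke a compressed-sensing guarantee. Running the Hadamard test with the controlled power $U^{t}$ gives access to the autocorrelation signal $g(t) = \langle \phi | U^{t} | \phi \rangle = \sum_k p_k e^{2\pi \rmi t \theta_k}$, where $p_k = |c_k|^2$ are the overlaps of the initial state $|\phi\rangle = \sum_k c_k|\Phi_k\rangle$ with the eigenstates and the $\theta_k$ are the phases to be estimated. By the approximately-on-grid assumption, the dominant phases lie within $\caO(N^{-1})$ of the grid $\{j/N : 0 \le j < N\}$, so that the on-grid part of $g$ is the inverse discrete Fourier transform of an $s$-sparse vector $x \in \mathbb{C}^{N}$, with $s$ the number of dominant phases and the support of $x$ encoding their locations. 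First I would fix the sampling set $\caT$ to be a random subset of $\{0,1,\dots,\caO(N)\}$ and write the measured data as $y = A x + e$, where $A$ is the corresponding row-subsampled $N \times N$ Fourier matrix and $e$ collects the off-grid model mismatch together with the finite-shot statistical error.

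The core step is to recover $x$ by basis pursuit denoising (or an equivalent greedy scheme) and to control the error. For this I would invoke the standard result that a random row-subsampled Fourier matrix satisfies the restricted isometry property of order $s$ once $|\caT| = \caO(s\,\polylog N)$; treating the number of dominant phases $s$ as $\caO(1)$ this already yields the sparse-sampling bound $|\caT| = \caO(\polylog N)$. The RIP then guarantees that the minimizer $\hat{x}$ of $\|\,\cdot\,\|_1$ subject to $\|A\,\cdot\, - y\|_2 \le \eta$ obeys $\|\hat{x} - x\|_2 = \caO(\|e\|_2)$, so that the occupied grid bins are identified correctly as soon as $\|e\|_2$ is small compared with the smallest dominant weight $p_k$.

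The main obstacle is the basis mismatch: because the true phases are not exactly on the grid, the grid representation of $g$ is not genuinely sparse, and the leakage into neighbouring bins enters $e$ and caps the attainable resolution at the grid spacing $N^{-1}$. To push the error below this floor I would introduce the extra continuous parameter advertised in the abstract, namely a grid offset $\tau$, and re-solve the recovery over the shifted grid $\{(j+\tau)/N\}$, optimising $\tau$ by a one-dimensional search or a local Newton refinement seeded at the recovered bins. The key estimate to establish is that the residual off-grid energy is quantified by $\soff$ and that, after the refinement, the localisation error of each recovered phase scales as $\epsilon = \caO(\soff\,N^{-1})$; bounding how the mismatch propagates through the $\ell_1$ program and the refinement is the delicate part, and I expect it to rely on a second-order expansion of the leakage around the aligned grid.

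Finally I would account for the runtime. The largest sampling time gives $T_{\max} = \caO(N)$, and estimating each $g(t)$ to the precision required to keep the statistical part of $\|e\|_2$ below $\soff$ costs $\caO(\polylog N)$ repetitions, so summing over $\caT$ yields $T_{\mathrm{total}} = \caO(N\,\polylog N)$, matching the Heisenberg limit. The classical cost is dominated by the convex recovery together with the offset search over an $\caO(N)$-point grid refined to accuracy $\soff$, giving the stated $\caO(\soff^{-1} N\,\polylog N)$.
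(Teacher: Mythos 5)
Your skeleton matches the paper's: Hadamard tests yield the autocorrelation signal, a randomly row-subsampled Fourier matrix satisfies RIP with $\caO(S\,\polylog N)$ rows, and basis pursuit denoising recovers the on-grid part up to the noise level. The genuine gap is exactly where you flag it yourself: ``bounding how the mismatch propagates through the $\ell_1$ program and the refinement is the delicate part.'' That part is not a deferrable technicality --- it is the entire content of the result --- and your proposed mechanism (a Newton refinement of the offset seeded at the recovered bins, justified by a second-order expansion of the leakage) fails as stated. There is a chicken-and-egg problem: the support is only guaranteed to be recovered once the grid shift is already approximately correct. If the true frequencies sit near half-integer grid points, the leakage with respect to the unshifted grid is $\Theta(1)$, the mismatch term $e$ is not small compared with $p_{\min}$, and the initial $\ell_1$ solve that is supposed to seed your refinement can return the wrong bins entirely. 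Moreover, the relevant dependence of the off-grid energy on the shift is first order, not second order: the paper's key technical lemma establishes, via exact Dirichlet-kernel identities, the two-sided \emph{linear} bounds $C[x]\,|\nu-\mathfrak{u}| \le \|x^{\mathrm{I}}_\nu\|_2 \le (2\pi/\sqrt{3})\,|\nu-\mathfrak{u}|$, and it is the \emph{lower} bound that makes a wrong shift detectable at all.

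The mechanism you would need, and which the paper supplies, is: (i) define the optimal shift $\mathfrak{u}$ as the minimizer of the off-grid $2$-norm and prove that all dominant frequencies are simultaneously approximately on-grid with respect to this single $\mathfrak{u}$ (Lemma~\ref{lemma:off_grid}); (ii) brute-force search over $J = \caO(\soff^{-1}\log N)$ candidate shifts, solving the convex program for each and selecting by \emph{smallest $1$-norm of the recovered vector}, not by residual fit; (iii) validate each candidate on a second, independent random sample set $\mathcal{T}_2$, because a badly shifted basis can still fit the first sample set with a spuriously small-$1$-norm solution. Steps (ii)--(iii), combined with the linear lower bound, certify $|\nu_\ast - \mathfrak{u}| = \caO(\soff)$, and only then does the RIP stability bound convert into the frequency accuracy $\caO(\soff/N)$. (The paper explicitly lists replacing the brute-force search by an optimization of the kind you propose as an open problem.) Your runtime accounting for $|\mathcal{T}|$, $T_{\max}$, $T_{\mathrm{total}}$, and the classical cost is otherwise consistent with the paper's.
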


A signal $y$ is approximately-on-grid if there exists a ``Fourier" matrix (with certain parameters) $F$ such that $y = Fx, x\in\mathbb{R}^{N}$. Note that under this assumption, we do not require $p_0 \ge \frac{1}{2}$, where $p_0$ is the overlap between the initial state and the ground state.  

The rest of the paper is organized as follows. We start with preliminaries about QEEP, sparse Fourier transformation and compressed sensing in Sec.~\ref{sec:preliminary}. We then introduce our QPE algorithm based on compressed sensing in Sec.~\ref{sec:on_grid} and prove several analytical results, including its Heisenberg-limit scaling. We also numerically test the performance of our algorithm and compare it to previous works in Sec.~\ref{sec:comparison}. In the final section, we summarize several open problems and potential future research directions in Sec.~\ref{sec:discussion}.

\section{Main idea}
\label{sec:preliminary}

\subsection{Setup}

The QEEP can be formulated as a sparse signal recovery problem. Given an initial state $|\Phi\rangle$ and a Hamiltonian with spectrum decomposition $H = \sum_{\ell=0}^{D-1} E_{\ell}P_{\ell}$, where $\{E_\ell\}_{\ell=0}^{D-1}$ are energy levels and $\{P_\ell = |\phi_\ell\rangle\langle\phi_\ell|\}_{\ell=0}^{D-1}$ are projectors onto the corresponding eigenstates, the time domain signal in QEEP can be written as 
\begin{equation}
    y^0(t) = \langle\Phi|e^{-\rmi Ht}|\Phi\rangle = \sum_{\ell=0}^{D-1}|\langle\Phi|\phi_{\ell}\rangle|^2 e^{-\rmi E_{\ell}t}.
\label{equ:time_domain_signal1}
\end{equation}
In QEEP we assume that $|\Phi\rangle$ has the following decomposition:
\begin{gather}
|\Phi\rangle = \sum_{\ell\in\mathcal{L}_{\mathrm{dom}}}\sqrt{p_{\ell}}|\phi_\ell\rangle + \sum_{\ell\in \mathcal{L}_{\mathrm{res}}}\sqrt{p_{\ell}}|\phi_\ell\rangle,\quad \sum_{\ell\in\mathcal{L}_{\mathrm{dom}}}p_\ell \approx 1,\quad |\mathcal{L}_{\mathrm{dom}}| \ll D,
\label{equ:Ldom}
\end{gather}
where $\mathcal{L}_{\mathrm{dom}}$ denotes the dominant component of the signal, and $\mathcal{L}_{\mathrm{res}}$ is the residue component. Under this assumption, we can regard $y^0(t)$ as a sparse signal. The formal definition of sparsity will be given in the main text. In particular, when $|\mathcal{L}_{\text{dom}}| = 1$, the task becomes QPE. For QPE, without loss of generality\footnote{In this work, we do not consider the hardness of state preparation. From the viewpoint of phase estimation, there is nothing special about the ground state energy compared to other eigenvalues as long as one can prepare an initial state that is close enough to the target eigenstate.}, we will mainly discuss the estimation of the ground energy $E_0$, i.e., the smallest eigenvalue of $H$.

The sparsity assumption applies to a wide range of situations. For instance, if we regard $|\Phi\rangle$ as the ground state of a perturbed Hamiltonian $H + V$, then the overlap $|\langle\Phi|\phi_\ell\rangle|$ tends to decay exponentially with the energy difference $|E_0 - E_\ell|$ (Because $|\langle\phi_\ell|V|\phi_0\rangle|$ decays exponentially with it. See \cite{arad2016connecting} for details). Therefore, $|\Phi\rangle$ almost has no overlap with excited states with high energies, and $\mathcal{L}_{\mathrm{dom}}$ only contains a small amount of energy levels.

The objective of a QEEP algorithm is to estimate $\mathcal{L}_{\text{dom}}$ within a specified accuracy $\epsilon$ using estimations of $y^0_t$ on a time set $\mathcal{T}$. An algorithm of this type can be separated into the quantum part and the classical post-processing part. Usually, the quantum part is a combination of Hamiltonian simulation \cite{georgescu2014quantum} and the Hadamard test (see Fig.~\ref{fig:hadamard}). Hamiltonian simulation algorithms are used to prepare the evolution operator $U(t)$. Longer evolution time requires more quantum gates, and the best-known circuit complexity for running $e^{-\rmi Ht}$ without ancilla qubit is almost linear in $\|Ht\|$ ($t$ can be negative) \cite{childs2019nearly,childs2021theory}.  The total runtime $T_{\text{total}}$ is thus
\begin{equation}
    T_{\text{total}} = \sum_{t\in \mathcal{T}} M_{\mathrm{H}} |t|.
\label{equ:totaltime}
\end{equation}
where $M_{\mathrm{H}}$ is the number of Hadamard tests required for each $y^0_t$. Usually, we set $M_\mathrm{H}$ to be in order $\mathcal{O}(\log(|\mathcal{T}|)\sH^{-2})$, so that $\{y^0_t\}_{t\in\mathcal{T}}$ can all be estimated within error $\sH$.   The total runtime $T_{\text{total}}$ reflects the total circuit depth for running the algorithm. If $T_{\text{total}} = \mathcal{O}(\epsilon^{-1}\mathrm{poly}(\epsilon^{-1}))$, we say that the algorithm satisfies the Heisenberg limit. Another important metric of complexity is the maximal runtime $T_{\max} = \max_{t\in\mathcal{T}}|t|$, which reflects the maximum circuit depth. Due to the difficulty in constructing large-size quantum circuits, the restriction on the maximal runtime is particularly important for early fault-tolerant quantum computers. 

The notations frequently used in the main text is summarized in Table~\ref{table:notations}.

\begin{table}[h!]
    \centering
    \caption{Notations}
    \begin{tabular}{ c c }
    \hline
    \hline
       Notation  &  Meaning \\
       \hline
        $y(t)$ & noisy time domain signal\\ 
        $y^0(t)$ & ideal time domain signal\\
        $z(t)$ & noise in time domain signal\\
        $x(k)$ & ideal frequency domain signal\\
        $\tau$ & unit time step\\
        $\mathcal{T}$ & sample of times divided by $\tau$\\
        $N$ & signal length, $T_{\max}/\tau$\\
        $\epsilon$ & accuracy on energy level\\
        $\sigma$ & noise tolerance parameter\\
        $\caF$ & set of frequencies\\
        $S$ & sparsity\\
    \hline
    \hline
    \end{tabular}
\label{table:notations}
\end{table}

\subsection{Previous work}

 The classical aspect of QPE and QEEP involves estimating frequencies from statistically sampled sparse signals, a process akin to the objectives of sparse Fourier transformation (SFT) algorithms \cite{hassanieh2012nearly}. Based on the data types, SFT algorithms can be classified into discrete setting algorithms and continuous setting algorithms. A discrete SFT algorithm performs discrete Fourier transformation, where both the time and the frequency of the signal are restricted on a discrete set:
\begin{equation}
    y_t = \sum_{k=0}^{N-1}x_k e^{-\rmi 2\pi kt/N},\quad \{y_t\}_{t=0}^{N-1} \xrightarrow{\mathrm{discrete \ SFT}} \{x_k\}_{k=0}^{N-1}.
\end{equation}
A continuous SFT algorithm \cite{liao2016music,song2023quartic,ding2024esprit} aims to accomplish a more general task:
\begin{equation}
    y(t) = \sum_{f\in\mathcal{F}}p_f e^{-\rmi 2\pi ft} \xrightarrow{\mathrm{continuous \ SFT}} x(k) = \sum_{f\in\mathcal{F}} p_f \delta(k-f),
\end{equation}
where $y(t)$ is the time domain signal, and $x(k)$ is the frequency domain signal. For QPE, we do not assume that the frequencies (energies) live in a discrete space. Thus, continuous SFT algorithms are more appropriate. In both setups, sparsity $S$ represents the number of distinct frequencies.

There are several aspects of evaluating the performance of an SFT algorithm. Its runtime complexity, sample complexity, and resolution are all important ingredients to consider. Here the runtime complexity refers to how long the algorithm takes on a classical computer, the sample complexity measures the number of time domain signal samples required in the algorithm, and the resolution quantifies the differences between the true frequencies and their estimates. For example, the Fast Fourier Transformation algorithm \cite{cochran1967fast} has runtime complexity $\mathcal{O}(N\log N)$ with sample complexity $\mathcal{O}(N)$. So far, the best runtime complexity is $\mathcal{O}(S\log^c(N)\log(N/S))$ with $c > 2$ \cite{gilbert2005improved}, and the most sample-efficient algorithm requires only $\mathcal{O}(S\log S \log N)$ samples \cite{indyk2014nearly}. In practical scenarios, we most likely have noisy data, necessitating the need for algorithmic robustness. Given the unique characteristics of our quantum setting, we prioritize the sample complexity, resolution, and robustness of an algorithm.

Several continuous SFT algorithms have been applied to QEEP. To the best of our knowledge, \cite{somma2019quantum} was the first work to address QEEP using Hadamard tests, treating it as a time-series analysis problem. Later, \cite{lin2022heisenberg} emphasized the importance of the Heisenberg-limited scaling. By applying Fourier-filter function techniques, it presented a Heisenberg-limited QPE algorithm for early fault-tolerant quantum computers. The algorithm was further improved by the other follow-up works \cite{zhang2022computing, wang2022quantum}, where the Gaussian derivative filter function was used to reduce the maximal runtime of the algorithm. In \cite{wang2022quantum},  $T_{\max}$ was reduced to a ``constant" depth, i.e., a quantity that only depends on the spectral gap, at the expense of increasing $T_{\mathrm{total}}$ from $O(\epsilon^{-1}\mathrm{poly}\log(\epsilon^{-1}))$ to $O(\epsilon^{-2}\mathrm{poly}\log(\epsilon^{-1}))$, which made the algorithm not Heisenberg-limited.

Two recent QPE algorithms \cite{PRXQuantum.4.020331, ni2023low}, inspired by Robust Phase Estimation (RPE) \cite{higgins2009demonstrating, kimmel2015robust, Bel20ach}, can also efficiently reduce the maximal runtime. These recent algorithms also improved the relation between $T_{\max}$, the initial overlap $p_0$, and the final accuracy $\epsilon$. When the overlap $p_0$ is large, \cite{PRXQuantum.4.020331} reduces the prefactor $\tau_c$ in the maximum runtime scaling $T_{\max} = \tau_c / \epsilon$ by using a subroutine called  the quantum complex exponential least squares (QCELS). In contrast to \cite{lin2022heisenberg} in which the prefactor $\tau_c$ is at least $\pi$, the prefactor in \cite{PRXQuantum.4.020331} can be arbitrarily close to $0$ as $p_0\to 1$. In \cite{ding2023simultaneous} and \cite{PhysRevA.108.062408}, the last two QPE algorithms have been extended to the QEEP setup.

 Another recent work \cite{ding2024quantum} proposed an efficient and versatile phase estimation algorithm named Quantum Multiple Eigenvalue Gaussian Filtered Search (QMEGS), which has most of the good properties mentioned above. Here we would like to emphasize its connection to a signal processing algorithm named Orthogonal Matching Pursuit (OMP) \cite{cai2011orthogonal}. OMP is a greedy algorithm that searches for the dominant frequencies of a signal by maximizing the overlaps step by step. QMEGS can be regarded as an OMP algorithm with a modified time sampling procedure to reduce the maximal and total runtime. The OMP algorithm has a strong connection with compressed sensing and can be potentially combined with our algorithm.

\subsection{QPE by compressed sensing}
\label{sec:mainidea}

Our main contribution is a simple and robust classical post-processing algorithm for QPE based on compressed sensing \cite{candes2006near,candes2008restricted,candes2006robust}. Our algorithm only requires sparse sampling of times from a discrete set.

Compressed sensing is a prominent signal-processing algorithm with wide applications in various domains such as time-frequency analysis, image processing, and quantum state and process tomography \cite{gross2010quantum,magesan2013compressing,Smith2013,magesan2013compressing,Kalev2015}. It aims to solve special types of underdetermined linear inverse problems, i.e.,  given $y\in \mathbb{R}^{M}$ and $A\in \mathbb{R}^{M\times N}$ with $M \ll N$, finding the unique sparse solution to $Ax = y, x\in\mathbb{R}^{N}$. Certainly, the solution is not unique without further restrictions. If we assume $x$ is $S$-sparse and $A$ satisfies the restricted isometry property (RIP) \cite{candes2008restricted} over $\caO(S)$-sparse signals, then $x$ can be uniquely recovered by solving a linear programming problem:
\begin{equation}
    \min_{x\in\mathbb{R}^N}\|x\|_1,\quad \mathrm{s.t.}\quad Ax = y.
\end{equation}
If we set $x$ as the frequency domain signal, $y$ as the signal on the time samples, and $A$ as the partial Fourier transformation operator, then this compressed sensing subroutine can be used for discrete SFT. It has been proved that with $\mathcal{O}(S\poly\log N)$ number of samples, one can successfully recover the frequency domain signal $x$ with high probability \cite{candes2006near}. For noisy situations, the signal can still be recovered by solving the following quadratic programming problem \cite{candes2008restricted}:
\begin{equation}
    \min_{x\in\mathbb{R}^N}\|x\|_1,\quad \mathrm{s.t.}\quad \|Ax - y\|_2 \le b.
\end{equation}
The small number of required samples and the robustness against noisy sampling make compressed sensing an appealing post-processing algorithm.

Unfortunately, there is a significant drawback of compressed sensing: it only works for discrete SFT, not for continuous SFT. In other words, frequencies are assumed to be on a grid in the sense that $f = n/N, n\in[N]$. The on-grid assumption is unnatural for many signals in practice. The gap between the continuous world and the discrete model is formally termed as \textit{basis mismatch} in signal analysis. Although off-grid compressed sensing algorithms have been proposed for solving the basis mismatch problem \cite{tang2013compressed}, the performance in our numerical test is not ideal. We show that with a slight modification, the vanilla compressed sensing algorithm can be used for special types of continuous SFT tasks. In other words, our algorithm partially solves the basis mismatch problem.

An overview of our algorithm is described as follows. For signal vectors with size $N$, when the frequencies are all nearly on-grid ($f \approx n/N, n\in\mathbb{Z}$) and the noise for each sample is bounded by a constant, the convex relaxation algorithm can recover the frequencies with only $\mathcal{O}(\poly\log N)$ samples, which satisfies the Heisenberg limit. With no prior knowledge about $f$ (i.e., $f$ could be off-grid), we introduce a grid shift parameter $\nu$ such that after shifting the signal by $e^{-\rmi 2\pi ft} \to e^{-\rmi 2\pi(f - \nu/N)t}$, the dominant frequencies of the new signal become nearly on-grid. This step requires an assumption on the signal, but we will show that a wide range of signals satisfy such an assumption. For each trial of $\nu$, we run the compressed sensing subroutine on the data set $\{y_t\}_{t\in\mathcal{T}}$ to obtain a trial solution $s_\nu$. The optimal $\nu$ is the one with the smallest $\|s_\nu\|_1$. By searching for the optimal grid-shift parameter in a finite set $\mathcal{V}$, the accuracy of the dominant frequencies is $\mathcal{O}(\soff/N)$, where $\soff$ is the maximal entry of the minimal off-grid component. This quantity is related to the noise, the frequency gap, and the residual part of the signal. In terms of the maximum runtime $T_{\max}$, since the samples of the compressed sensing algorithm are integers in $[1,N]$, $T_{\max}$ scales linearly in $N$, and $T_{\text{total}}$ is $\mathcal{O}(N\poly\log N)$.

\section{Main results}
\label{sec:on_grid}

\subsection{Algorithm}

\begin{figure}[h!]
    \centering
    \includegraphics[width = 8cm]{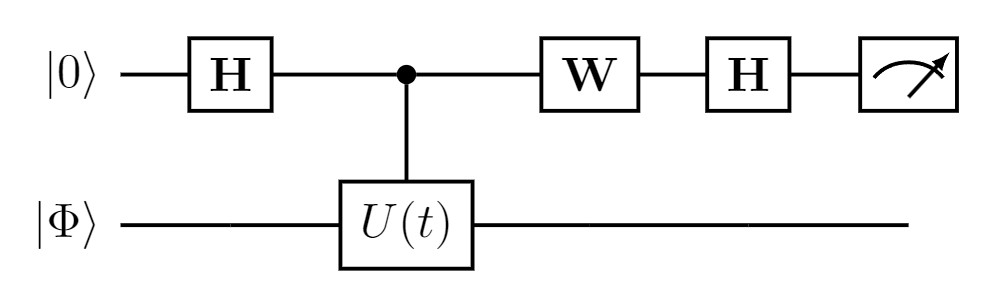}
    \caption{The one-ancilla quantum circuit used in Kitaev-type QPE algorithms. The final measurement is done in the $Z$ basis. In terms of the measurement outcome, we regard the $|0\rangle$ state as obtaining value $+1$, and the $|1\rangle$ state as obtaining value $-1$. $\mathbf{H}$ is the Hadamard gate; $\mathbf{W}$ has two choices: when $\mathbf{W} = I$, the measurement outcome is $\pm 1$ with probability $(1 \pm \text{Re}(\langle\Phi|U(t)|\Phi\rangle))/2$ respectively. When $\mathbf{W} = \bbS^{\dagger}$, the complex conjugation of the phase gate, the measurement outcome is $\pm 1$ with probability $(1 \pm \text{Im}(\langle\Phi|U(t)|\Phi\rangle))/2$ instead. After averaging over many test outcomes, we obtain an estimate of the true signal $\langle\Phi|U(t)|\Phi\rangle$.}\label{fig:hadamard}
\end{figure}

In this subsection, we present an overview of our algorithm for QPE using compressed sensing. The quantum part of the algorithm is represented as follows. 

\begin{algorithm}
\caption{Signal estimation by Hadamard test}\label{euclid}
\begin{algorithmic}[1]
 \Require{Set of sampled integers $\mathcal{T}$, unit time step $\tau$,  Hamiltonian $H$, initial state $|\Phi\rangle$, error tolerance parameter $\sH$, failure probability $\delta$.}
 \Ensure{$\{y(n\tau), n\in\mathcal{T}\}$.}
    \For{$n\in \mathcal{T}$}
        \State Prepare the initial state $|\Phi\rangle$ and unitary operator $e^{-\rmi Hn\tau}$;
        \State Perform Hadamard tests for $\mathcal{O}(\log(|\mathcal{T}|/\delta)\sH^{-2})$ times;
        \State Compute the average value of the test outcomes as $y(n\tau)$.
    \EndFor
\end{algorithmic}
\label{algorithm:hadamard_test}
\end{algorithm}

For each time $t\in\mathcal{T}$, $y^0(t) = \langle\Phi|e^{-\rmi Ht}|\Phi\rangle$ can be estimated from averaging over the Hadamard tests. More precisely, by choosing $\mathbf{W} = I$ in Fig.~\ref{fig:hadamard}, the measurement outcome of the ancilla qubit is a random variable $h_x(t)$ such that 
\begin{equation}
    \mathrm{Pr}[h_x(t) = +1] = \frac{1}{2}[1 + \text{Re}(y^0(t))],\quad \mathrm{Pr}[h_x(t) = -1] = \frac{1}{2}[1 - \text{Re}(y^0(t))].
\end{equation}
Similarly, when $\mathbf{W} = \bbS^{\dagger}$ ($\bbS$ is the phase gate), the measurement outcome is another random variable $h_y(t)$ such that
\begin{equation}
    \mathrm{Pr}[h_y(t) = +1] = \frac{1}{2}[1 + \text{Im}(y^0(t))],\quad \mathrm{Pr}[h_y(t) = -1] = \frac{1}{2}[1 - \text{Im}(y^0(t))].
\end{equation}
The summation of the two gives us the estimate of $y^0(t)$:
\begin{equation}
   \mathbb{E}[h_x(t) + \rmi h_y(t)] = y^0(t).
\end{equation}
After sampling the random variables $h_x(t),h_y(t)$ for $M_\mathrm{H}$ times, we obtain a noisy signal:
\begin{equation}
    y(t) = \overline{h_x(t) + \rmi h_y(t)} = y^0(t) + z(t).
\end{equation}
 Here the overline represents the average over samples, and the noise $z(t)$ originates from the statistical uncertainty of the Hadamard tests. Hoeffding's inequality ensures that with probability $1-\delta'$, we have
 \begin{equation}
     |z(t)|   =   \mathcal{O}\left(\sqrt{\frac{1}{M_\mathrm{H}}\log\frac{1}{\delta'}}\right).
 \end{equation}
In the rest of the paper, the meanings of $z(t)$ are not identical, but they always represent the noisy part of the signal. Let $\sH$ be the noise tolerance parameter. To guarantee $|z(t)| < \sH, \forall t\in\mathcal{T}$, with probability at least $1 - \delta$, we require $\delta = \mathcal{O}(\delta'|\mathcal{T}|^{-1})$ so that $M_{\mathrm{H}} = \Omega(\log(|\mathcal{T}|/\delta)/\sH^2)$. For a more rigorous proof, see Appendix A of \cite{PRXQuantum.4.020331}. The total runtime is thus
\begin{equation}
    T_{\text{total}} = \sum_{t\in \mathcal{T}} M_\mathrm{H}|t| = \mathcal{O}\left(\log(|\mathcal{T}|\delta^{-1})\sH^{-2} \sum_{t\in|\mathcal{T}|}|t|\right).
\end{equation}
The quantum part of the algorithm is shown in Algorithm~\ref{algorithm:hadamard_test}.

\begin{algorithm}[t]
\caption{Quantum phase estimation by compressed sensing}\label{euclid}
\begin{algorithmic}[1]
 \Require{Signal length $N$, signal sparsity $S$, sampling ratio $r$, unit time step $\tau$, Hamiltonian $H$, initial state $|\Phi\rangle$, noise tolerance parameter $\sH,\sigma_\test,\sigma$, failure probability $\delta$, size of the trial set $J$, amplitude lower bound $p_{\min}$.}
 \Ensure{$\nu_\ast = \nu_{j^\ast},\ s_{\nu_\ast} = s_{\nu_{j^\ast}},\ E^{\ast} = 2\pi (\min \mathcal{K} + \nu_{j^\ast})/N$.}
    \State Sample integers from $[N]$ with sampling ratio $r$. Denote the set of the samples by $\mathcal{T}$.
    \State Apply Algorithm~\ref{algorithm:hadamard_test} with input $(\mathcal{T},\tau,H,|\Phi\rangle,\sH,\delta)$. Denote the output by $\{y_n\}_{n\in\mathcal{T}_1}$.
    \For{$j = 0,1,\cdots,J-1$}
        \State Set $\nu_j = -1/2 + j/J$.
        \State Solve 
        \begin{equation*}
            \min_{s\in\mathbb{R}^N} \|s\|_1,\quad \mathrm{s.t.}\quad \|F_{\nu_j,\mathcal{T}}s - y_{\mathcal{T}}\|_2 \le \sqrt{|\mathcal{T}|}\sigma
        \end{equation*}
        to obtain $s_{\nu_j}$. If there is no feasible solution, set $s_{\nu_j} = (1,1,\cdots,1)$.
        \State Record $(\nu_j,s_{\nu_j})$ as the $j$-th solution.
    \EndFor
    \State Sample integers from $[N]$ with sampling ratio $r$. Denote the set of samples by $\mathcal{T}_2$. 
    \State Apply Algorithm~\ref{algorithm:hadamard_test} with input $(\mathcal{T}_2, \tau, H,|\Phi\rangle, \sH, \delta)$. Denote the output by $\{y'_n\}_{n\in\mathcal{T}_2}$.
    \For{$j = 0,1,\cdots,J-1$}
        \State Apply Algorithm~\ref{algorithm:test} with input $(\nu_j,s_{\nu_j},\{y'_n\}_{n\in\mathcal{T}_2}, \sigma_\test)$. Denote the output by $\mathrm{o}_j$.
        \If{$\mathrm{o}_j=0$}\State $\ell_j = N+1$. \Else \State $\ell_j = \|s_{\nu_j}\|_1$.
        \EndIf
    \EndFor
    \State Let $j^\ast = \arg\min \ell_j$
    \State Find all entries of $s_{\nu_{j^\ast}}$ satisfying $s_{\nu_{j^\ast}} \ge p_{\min}$. Denote the set of indices as $\mathcal{K}$. 
\end{algorithmic}
\label{algorithm:single_eigenvalue}
\end{algorithm}

\begin{algorithm}
\caption{Test of another sampling}\label{euclid}
    \begin{algorithmic}[1]
     \Require{Parameter $\nu$, frequency-domain signal $s_\nu$, sampled data $\{y_m\}_{m\in\mathcal{T}_2}$, mean squared error threshold 
 $\sigma_\test$.}
     \Ensure{0 if the data fails the test; 1 if the data passes the test.}
        \State Compute the total empirical error with respect to the new set
        \begin{equation*}\mathcal{E} = \sum_{m\in\mathcal{T}_2}|(F_{\nu} s_\nu)_m - y_m|^2.
        \end{equation*}
        \If{$\mathcal{E} \ge |\mathcal{T}_2|\sigma_\test^2$}
            \State Return 0.
        \Else \quad Return 1.
        \EndIf 
    \end{algorithmic}
\label{algorithm:test}
\end{algorithm}

For the classical post-processing part, the goal is to recover the dominant frequencies of $y^0_t$ with the noisy samples $\{y(t)\}_{ t\in\mathcal{T}}$. To write the QEEP in Eq.~(\ref{equ:time_domain_signal1}) in the form of a compressed sensing problem, we first reformulate the problem on a ``grid''. Introduce a unit time step $\tau$ such that
\begin{equation}
    y^0_n = \sum_{f\in\mathcal{F}} p_f e^{-\rmi 2\pi f n},\quad \mathcal{F} = \left\{\frac{E_{\ell}\tau}{2\pi} : \ \ell\in \mathcal{L}_{\text{dom}}\right\}.
\end{equation}
The dominant eigenenergy set $\mathcal{L}_{\text{dom}}$ defined in Eq.~(\ref{equ:Ldom}) determines the frequency support $\mathcal{F}$. To keep the order of energy levels unchanged, we require that $E_{\ell}\tau \in [0,2\pi), \ \forall \ell$.
This condition can always be satisfied by adding a constant to the Hamiltonian $H$ and choosing $\tau$ properly. The actual data to be processed is thus $\{y_n = y^0_n + z_n\}_{n\in\mathcal{T}}$. Recall that the choice of $M_H$ guarantees that $|z_n| \le \sH\ \forall n\in\mathcal{T}$, with high probability.

As discussed in Sec.~\ref{sec:mainidea}, because we cannot always assume $f\approx n/N,\ \forall f\in\mathcal{F}$, the regular compressed sensing algorithm is not guaranteed to work. Our algorithm significantly relaxes the assumption by introducing a grid-shift parameter $\nu$. As a simple instance, suppose the frequency support $\mathcal{F}$ satisfies
\begin{equation}
   f = \frac{n + \nu}{N}, \quad \ n \in [N],\ \forall f\in\mathcal{F}.
\label{equ:on-grid-condition}
\end{equation}
Then $y^0_n$ can be transformed into an on-grid signal in a new basis as $y^0_n = F_{\nu} x, x \in\mathbb{R}^N$, where $F_{\nu}$ is the shifted Fourier transformation:
\begin{equation}
    (F_{\nu})_{nk} \equiv e^{-\rmi 2\pi(k+\nu)n/N}\quad n,k\in[N].
\end{equation}
The signal can then be recovered by solving
\begin{equation}
    \min_{s\in\mathbb{R}^N} \|s\|_1,\quad \mathrm{s.t.}\quad \left\|\mathcal{P}_\mathcal{T}\left(F_{\nu}s - y\right)\right\|_2 \le \sqrt{|\mathcal{T}|}\sigma,
\end{equation}
where $\mathcal{P}_{\mathcal{T}}$ represents the projector
\begin{equation}
    \left(\mathcal{P}_{\mathcal{T}}\right)_{ij} = \delta_{ij}1_{i\in\mathcal{T}}.
\end{equation}
Define $F_{\nu,\mathcal{T}} \equiv \mathcal{P}_{\mathcal{T}}F_\nu$ and $y_\mathcal{T} \equiv \mathcal{P}_\mathcal{T}y$ for later use. One can argue that a general signal does not satisfy the condition in Eq.~(\ref{equ:on-grid-condition}), even approximately. We will address this issue in the next section and validate the universality of our algorithm with numerical evidence. 

In the case that the condition in Eq.~(\ref{equ:on-grid-condition}) is indeed satisfied, we can find $\nu$ by a brute-force search. Let $\mathcal{V}$ be a trial set of the grid shift parameters. For each $\nu\in\mathcal{V}$\footnote{Throughout this paper, we require that $|\nu| \le 1/2$.}, let $s_{\nu}$ be the solution of the compressed sensing subroutine and $s_{\nu_\ast}$ be the final solution with the smallest 1-norm: 
\begin{equation}
    \nu_\ast \equiv \arg\min_{\nu\in\mathcal{V}}\|s_{\nu}\|_1.
\end{equation}

To bound the algorithmic error properly, we need a rough bound for $|\nu_\ast - \mathfrak{u}|$ first, where $\mathfrak{u}$ is the optimal parameter introduced in the next section. This can be guaranteed by the test of another sampling (Algorithm~\ref{algorithm:test}). The intuition is as follows. The output of a compressed sensing subroutine always matches the true signal on $\mathcal{T}$. If the solution is ``good'' in the sense that $|\nu_\ast - \mathfrak{u}|$ is smaller than a critical value, then the recovered signal should be close to the true signal on another random sample set $\mathcal{T}_2$. If the solution is ``bad'', then the difference between the two signals will be very large on $\mathcal{T}_2$. 
The classical post-processing part of the algorithm is stated in Algorithm~\ref{algorithm:single_eigenvalue} with Algorithm~\ref{algorithm:test} as a subroutine.

\subsection{Statement of the main theorem}

The error bound and runtime of our algorithm are given in this subsection. Suppose the original signal is
\begin{equation} \label{equ:signal}
    y^0_t = \sum_{f\in\mathcal{F}}p_f e^{-\rmi 2\pi ft},\quad f = \frac{n_f + \nu_f}{N},
\end{equation}
where $n_f$ ($\nu_f$) is the integer (decimal) part of frequency $f$. Given a Fourier matrix $F_{\nu}$, $y^0$ can be uniquely decomposed as $y^0 = F_\nu(x_{\mathrm{re}} + \rmi x_{\mathrm{im}}),\ x_{\mathrm{re}},  x_{\mathrm{im}}\in \mathbb{R}^N$, and $F_\nu x_{\mathrm{re}}$ ($\rmi F_\nu x_{\mathrm{im}}$) is termed as the on-grid (off-grid) component of $y^0$ with respect to $F_{\nu}$. Define
\begin{equation}\label{equ:opt_nu}
    \mathfrak{u} \equiv \arg\min_\nu \|F_\nu x_{\mathrm{im}}\|_2,
\end{equation}
and denote the corresponding grid decomposition as
\begin{equation} \label{equ:xonxoff}
    y^0 = F_{\mathfrak{u}} (x_{\on} + \rmi x_{\mathrm{off}})\quad x_{\on}, x_{\off} \in \mathbb{R}^N,
\end{equation}
which we call the optimal grid decomposition of $y^0$. 

Let $\sigma_{\off} \equiv \|F_{\fku}x_{\off}\|_{\infty}$. We assume that the signal $y^0_t$ satisfies the following conditions:
\begin{equation}\label{equ:assumption}
\begin{aligned}
    \caF = \caF_\dom \bigsqcup \caF_\res,\quad & \min_{f\in\caF_\dom}p_f \ge p_{\min},\quad \sum_{f\in\caF_\res}p_f = \caO(\sigma_\off),\\
    |\caF_{\dom}| = S, & \quad \min_{f_1\neq f_2\in\caF_{\dom}}|f_1 - f_2| > \frac{1}{N}.
\end{aligned}
\end{equation}

In contrast to other QPE algorithms \cite{PRXQuantum.4.020331,ni2023low} where $p_0 > \frac{1}{2}$ is required, our algorithm outputs the dominant on-grid approximation of the signal ($y_t \approx \sum_n p_n e^{-\rmi 2\pi nt/N}$), instead of the dominant single-frequency approximation of the signal ($y_t \approx p e^{-\rmi 2\pi ft}, f\in [0,1)$). Hence, even if $p_0 < \frac{1}{2}$, as long as the dominant part is large enough, the ground-state energy can still be well estimated.
 
Even signals with no frequency gap can be put into this framework. Indeed, when $|f_1 - f_2| \ll  N^{-1}$, the two frequencies can be replaced by a single frequency $f_3$, and the tiny frequency gap is absorbed into the off-grid component. If $f_3$ can be approximated with high accuracy, it can be used as an estimate for $f_1$ because $|f_1 - f_3| \ll N^{-1}$.

In Appendix~\ref{app:proof_of_lemma1}, we prove the following lemma.
\begin{lemma}\label{lemma:off_grid}
    Given a length-$N$ signal $y^0$ with optimal grid decomposition $y^0 = F_{\mathfrak{u}}(x_{\on} + \rmi x_{\off}), \soff \equiv \|F_{\mathfrak{u}} x_{\off}\|_{\infty}$. If $y^0$ satisfies the assumptions in Eq.~(\ref{equ:assumption}), then for all $f\in\caF_{\dom}$, $|p_f (\nu_f - \fku)| = \caO(\sigma_\off)$.
\end{lemma}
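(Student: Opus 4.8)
The plan is to pass to the frequency domain of the optimal grid and read off the contribution of each dominant frequency at its own grid index. Introduce the single-frequency vectors $e_f\in\mathbb{C}^N$ with $(e_f)_t = e^{-\rmi 2\pi f t}$, so $y^0 = \sum_{f\in\caF} p_f e_f$ with all $p_f\ge 0$. Since $F_{\fku}$ is invertible (a scaled unitary DFT conjugated by a diagonal phase), the optimal decomposition satisfies $x_{\on}+\rmi x_{\off} = F_{\fku}^{-1}y^0 = \sum_f p_f F_{\fku}^{-1}e_f$, and as every $p_f$ is real, $x_{\off} = \sum_f p_f\,\mathrm{Im}(F_{\fku}^{-1}e_f)$. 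The first step is to compute $c^{(f)}\equiv F_{\fku}^{-1}e_f$ in closed form. Writing $\delta_f\equiv\nu_f-\fku$ (and, if necessary, re-choosing $n_f$ to be the nearest $\fku$-grid index so that $|\delta_f|\le\tfrac12$), a geometric sum gives the Dirichlet kernel
\begin{equation}
  c^{(f)}_k = \frac{e^{-\rmi 2\pi\delta_f}-1}{N\bigl(e^{\rmi 2\pi(k-n_f-\delta_f)/N}-1\bigr)} .
\end{equation}

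Next I would isolate the diagonal entry $k=n_f$. The closed form yields $c^{(f)}_{n_f} = \frac{\sin(\pi\delta_f)}{N\sin(\pi\delta_f/N)}e^{-\rmi\pi\delta_f(N-1)/N}$, whose imaginary part obeys $|\mathrm{Im}(c^{(f)}_{n_f})|\gtrsim|\delta_f|$ for $|\delta_f|\le\tfrac12$ (it is $\approx\pi|\delta_f|$ for small $\delta_f$, and the exact kernel stays bounded below by a fixed multiple of $|\delta_f|$). Hence the self-contribution to $(x_{\off})_{n_f}$ is a constant times $p_f\delta_f$. The key structural input here is the gap assumption: since $\min_{f_1\ne f_2}|f_1-f_2|>1/N$ while $|\delta|\le\tfrac12$, distinct dominant frequencies occupy distinct grid indices, so no other dominant frequency sits on the diagonal at $n_f$.

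Then I would control the cross-contributions to $(x_{\off})_{n_f}$. Residual frequencies contribute at most $\sum_{f\in\caF_\res}p_f = \caO(\soff)$, using $|c^{(f)}_k|\le 1$. For the remaining dominant frequencies $f'$, the kernel decays with the grid distance $d_{ff'}=n_f-n_{f'}-\delta_{f'}$, which the gap forces to satisfy $|d_{ff'}|\ge\tfrac12$; a short estimate of $\mathrm{Im}(c^{(f')}_{n_f})$ then bounds each such term by $\caO(p_{f'}|\delta_{f'}|/N)+\caO(p_{f'}\delta_{f'}^2/|d_{ff'}|)$. Summing over the $S-1$ other dominant frequencies, whose grid indices are distinct integers, gives a total of order $MS/N+(M^2/p_{\min})\log S$, where $M\equiv\max_{f\in\caF_\dom}p_f|\delta_f|$ is the quantity to be bounded.

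Finally I would combine these with the elementary estimate $\|x_{\off}\|_\infty\le\soff$, which follows because each entry $(x_{\off})_k=\tfrac1N\sum_t e^{\rmi 2\pi(k+\fku)t/N}(F_{\fku}x_{\off})_t$ is an average of terms of modulus at most $\soff$. Reading the identity $(x_{\off})_{n_f}=p_f\,\mathrm{Im}(c^{(f)}_{n_f})+(\text{cross terms})$ and maximizing over $f\in\caF_\dom$ produces a self-consistent inequality of the form $M\lesssim\soff+MS/N+(M^2/p_{\min})\log S$. The main obstacle is exactly this near-circularity, since each bound on $p_f\delta_f$ feeds on the other offsets through the cross terms; I expect to close it by noting that the cross terms are genuinely higher order — the dominant-frequency interaction is quadratic in $M$ and the rest is suppressed by $S/N$ — so for $N$ large and $\soff$ (hence $M$) small, the linear term dominates and the inequality yields $M=\caO(\soff)$, i.e. $|p_f(\nu_f-\fku)|=\caO(\soff)$ for every $f\in\caF_\dom$. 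Additional care is needed to make the kernel estimates rigorous without assuming $\delta_f$ small a priori (handled via the exact closed form) and to account for the circular wrap-around of near-integer frequency differences when lower-bounding $|d_{ff'}|$.
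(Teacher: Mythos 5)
Your setup is sound: the closed form for $c^{(f)}=F_{\fku}^{-1}e_f$, the elementary bound $\|x_{\off}\|_\infty\le\soff$, the lower bound $|\mathrm{Im}(c^{(f)}_{n_f})|\gtrsim|\delta_f|$, and the off-diagonal estimate $|\mathrm{Im}(c^{(f')}_{n_f})|=\caO(\delta_{f'}^2/|d_{ff'}|)+\caO(|\delta_{f'}|/N)$ are all correct, and your strategy is essentially the frequency-domain dual of the paper's argument. The paper instead works in the time domain, using the exact identity $\rmi y^0_{\off,t}=\sum_{f}p_f\sin(\pi\nu_f)e^{\rmi\pi\nu_f}e^{-\rmi2\pi ft}$ (plus an $\caO(\soff)$ residual term), so the quantities to be bounded appear \emph{exactly and linearly} as coefficients of $>1/N$-separated exponentials and are extracted in a single linear step from $\|y^0_{\off}\|_\infty=\soff$ via near-orthogonality; there is no Taylor expansion in $\delta_f$ and hence no feedback of the unknowns into the error terms.

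The genuine gap is your final step. The self-consistent inequality has the form $M\le C_1\soff+C_2MS/N+C_3(M^2/p_{\min})\log S$ with $M=\max_{f\in\caF_\dom}p_f|\delta_f|$, and an inequality of the shape $M\le a+bM^2$ does \emph{not} imply $M=\caO(a)$: it is also satisfied by every $M$ above the larger root $M_+\approx 1/b$, so one must independently certify $M<p_{\min}/(C_3\log S)$ before the quadratic term can be absorbed. The only a priori bound available is $M\le\tfrac12\max_f p_f\le\tfrac12$, which does not lie below that threshold for any admissible $p_{\min}\le1$ once $S\ge2$; and if you instead linearize the cross term using $|\delta_{f'}|\le\tfrac12$, its coefficient becomes $\Theta(\log S)\ge1$, which again cannot be absorbed. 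Your proposed resolution --- ``for $\soff$ small, hence $M$ small, the linear term dominates'' --- assumes the conclusion, since ``$\soff$ small $\Rightarrow M$ small'' is exactly the statement of the lemma. (For $S=1$ there are no dominant cross terms and your argument closes; for $S\ge2$ it does not as written.) To repair it you would need either an independent a priori estimate placing $M$ in the small branch --- e.g.\ an $\ell_2$ argument combining $\|x_{\off}\|_2\le\soff$ with a lower bound $\|x_{\off}\|_2^2\gtrsim\sum_f p_f^2\delta_f^2-(\text{controlled cross terms})$, in the spirit of the appendix's lower bound $\|x^{\rmI}_\nu\|_2\ge C[x]\,|\nu|$ --- or to avoid the expansion in $\delta_f$ altogether and extract the exact linear coefficients as the paper does.
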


From Lemma \ref{lemma:off_grid}, we know that if $\soff$ is small, then all $f$ with large enough $p_f$ are approximately on-grid simultaneously. 

Define
\begin{equation}
    \caD \equiv \left\{n: \exists f\in\caF_\dom, |Nf - n| \le \frac{1}{2}\right\}.
\end{equation}
Note that $|\caD| = |\caF_\dom| = S$. In the next lemma, we show that the assumptions in Eq.~(\ref{equ:assumption}) ensure that $\caD$ is the dominant part of $x_{\on}$ as well. The proof is presented in Appendix~\ref{app:proofofrecoverlemma}.
\begin{lemma}\label{lemma:recover}
    Suppose that the length-$N$ ($N\ge 100$) signal satisfies Eq.~(\ref{equ:assumption}). Then for any $f\in\caF_\dom, |Nf - n| \le 1/2$, we have
    \begin{equation}
        x_{\on,n} = p_f - \caO(S\sigma_\off),
    \end{equation}
    and for all $n\not\in \caD$, we have 
    \begin{equation}
        x_{\on,n} = \caO(S\sigma_\off). 
    \end{equation}
\end{lemma}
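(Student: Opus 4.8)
The plan is to compute the on-grid vector $x_{\on}$ in closed form and then bound each entry. Since $F_{\fku}$ factors as a diagonal phase matrix times the unnormalized DFT matrix, it is invertible with $(F_{\fku}^{-1})_{kn}=\frac{1}{N}e^{\rmi 2\pi(k+\fku)n/N}$, so the optimal grid decomposition Eq.~(\ref{equ:xonxoff}) reads $x_{\on}=\mathrm{Re}\,(F_{\fku}^{-1}y^0)$ entrywise. Substituting the signal model Eq.~(\ref{equ:signal}) and summing the resulting geometric series gives
\begin{equation*}
(F_{\fku}^{-1}y^0)_k=\sum_{f\in\caF}p_f\,D_N\!\left(k+\fku-n_f-\nu_f\right),\qquad D_N(\alpha)\equiv\frac{1}{N}\,\frac{e^{\rmi 2\pi\alpha}-1}{e^{\rmi 2\pi\alpha/N}-1},
\end{equation*}
where $D_N$ is the normalized Dirichlet kernel. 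The whole lemma then reduces to estimating a sum of shifted Dirichlet kernels.

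Next I would record the three properties of $D_N$ that do the work: (i) $D_N(0)=1$ and $\mathrm{Re}\,D_N(\epsilon)=1-\caO(\epsilon^2)$ for $|\epsilon|\le 1$; (ii) $D_N(m)=0$ for every nonzero integer $m\not\equiv 0\pmod N$; and (iii) the near-integer decay bound $|D_N(m+\epsilon)|=\caO(|\epsilon|/|m|)$, which follows from $|\sin(\pi(m+\epsilon))|=|\sin\pi\epsilon|\le\pi|\epsilon|$ together with $|\sin(\pi\alpha/N)|\ge 2|\alpha|/N$ (clean because $N\ge100$). Property (ii) is the engine: by Lemma~\ref{lemma:off_grid}, every dominant $\nu_f$ lies within $\caO(\soff/p_{\min})$ of the single shift $\fku$, so each kernel argument is a nonzero integer $m_f$ plus a small perturbation $\epsilon_f:=\fku-\nu_f$, and the integer part contributes nothing to leading order.

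For a target index $n=n_{f^\ast}$ (the nearest integer to $Nf^\ast$, $f^\ast\in\caF_\dom$) I would split the sum into the diagonal term $f=f^\ast$, the remaining dominant terms, and the residue. The diagonal term is $p_{f^\ast}\,\mathrm{Re}\,D_N(\epsilon_{f^\ast})=p_{f^\ast}-\caO(p_{f^\ast}\epsilon_{f^\ast}^2)$, and since $p_{f^\ast}\epsilon_{f^\ast}^2=(p_{f^\ast}|\epsilon_{f^\ast}|)\,|\epsilon_{f^\ast}|=\caO(\soff)$ by Lemma~\ref{lemma:off_grid} with $|\epsilon_{f^\ast}|\le1$, this equals $p_{f^\ast}-\caO(\soff)$. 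Each remaining dominant term has argument $m_f+\epsilon_f$ with $m_f=n_{f^\ast}-n_f$ a nonzero integer, so by (iii) and Lemma~\ref{lemma:off_grid} it is bounded by $\caO(p_f|\epsilon_f|/|m_f|)=\caO(\soff)$; summing the at most $S-1$ of them gives $\caO(S\soff)$. The residue contributes at most $\sum_{f\in\caF_\res}p_f\,|D_N|\le\sum_{f\in\caF_\res}p_f=\caO(\soff)$ via $|D_N|\le1$. Collecting the pieces yields $x_{\on,n}=p_{f^\ast}-\caO(S\soff)$. For $n\notin\caD$ no dominant frequency rounds to $n$, so \emph{every} dominant term is of the cross type and the same estimate gives $x_{\on,n}=\caO(S\soff)$.

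The step I expect to be the main obstacle is making the cross-term bound uniform for nearest-neighbour frequencies, i.e.\ $|m_f|=1$, where the denominator $|m_f+\epsilon_f|$ could a priori be small and property (iii) degrade. There I must combine the gap hypothesis $\min_{f_1\neq f_2}|f_1-f_2|>1/N$ (which forces the $n_f$ to be distinct, hence $m_f\neq0$, and underlies $|\caD|=S$) with the smallness $|\epsilon_f|=\caO(\soff/p_{\min})$ from Lemma~\ref{lemma:off_grid}, so that $|m_f+\epsilon_f|$ stays bounded away from $0$, rather than relying on a crude triangle inequality. The remainder is bookkeeping of the $\caO$-constants, for which $N\ge100$ lets me replace the phase factor $e^{\rmi\pi\alpha(N-1)/N}$ and the ratio $(N-1)/N$ by $1$ up to controlled error; the stated bounds are informative precisely in the regime $\soff\ll p_{\min}$, consistent with the asymptotic notation used throughout.
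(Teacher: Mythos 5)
Your proposal is correct and follows essentially the same route as the paper's proof: expand $x_{\on,n}$ as a sum of (phase-modulated) Dirichlet kernels, split into the diagonal term, the $S-1$ cross terms from other dominant frequencies, and the residue, then bound them by $p_f-\caO(\soff)$, $\caO(\soff)$ each (via Lemma~\ref{lemma:off_grid}), and $\caO(\soff)$ respectively. The one obstacle you flag --- the denominator $|m_f+\epsilon_f|$ for nearest-neighbour integer offsets --- is handled in the paper without invoking the smallness of $\epsilon_f$, simply because the decimal part satisfies $|\nu_f|\le 1/2$, so $|m_f-\nu_f|\ge 1/2$ by the triangle inequality and the kernel bound $|D_N(m+\nu)|\le \pi|\nu|/(2|m+\nu|)$ already suffices.
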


If $\caD$ can be recovered using Algorithm~\ref{algorithm:single_eigenvalue}, and $\fku$ is estimated by $\nu_{\ast}$, then all $f = (n+\fku)/N\in\caF_\dom$ are estimated by $(n+\nu_\ast)/N$, and their error bounds are all $\mathcal{O}(|\nu_\ast - \fku|/N)$. The formal error bound of our protocol is guaranteed by the following arguments.
\begin{itemize}
\item Let $\nu_1$ be the parameter in the trial set $\mathcal{V}$ that is closest to $\mathfrak{u}$, hence $|\nu_1 - \mathfrak{u}| = \mathcal{O}(|\mathcal{V}|^{-1})$. If $|\mathcal{V}|^{-1}$ is small enough, then all dominant frequencies of $x_{\mathrm{on}}$ are preserved in $s_{\nu_1}$. 
 
\item Recall that the solution with the smallest 1-norm is $s_{\nu_\ast}$. Using the test of another sampling, we can obtain a rough bound for $|\nu_\ast - \mathfrak{u}|$. Then we can prove that $s_{\nu_\ast}$ is close enough to $s_{\nu_1}$, so that all the dominant frequencies of $s_{\nu_1}$ are preserved in $s_{\nu_\ast}$.
 
\item Because all dominant frequencies of $x_{\mathrm{on}}$ are preserved in $s_{\nu_\ast}$, the accuracy of the final result is $\mathcal{O}(|\nu^\ast - \fku|/N)$.

\end{itemize}

Combining these arguments together, we obtain the following theorem regarding the accuracy and complexity of our algorithm. The formal proof can be found in Appendix~\ref{app:proof_of_main_theorem}.

\begin{theorem}

Suppose $y^0$ is a length-$N$ ($N \ge 100$) signal with optimal grid decomposition $y^0 = F_{\mathfrak{u}}(x_{\on}  + \rmi x_\off), \soff \equiv \|F_{\fku}x_{\off}\|_{\infty}$, and satisfies Eq.~(\ref{equ:assumption}). If $\|x_{\on}\|_2 \gg \|x_{\on}\|_1/\sqrt{N}$,  $E_0\tau/2\pi\in \caF_\dom$, and the parameters in Algorithm~\ref{algorithm:single_eigenvalue} satisfy 
    \begin{equation}
        4\pi \sqrt{S} < \sqrt{3}\log N,\quad 
        r = \mathcal{O}(N^{-1} S \log^2(S)\poly\log N),\quad\delta^{-1} = \mathcal{O}(\poly\log N),
    \end{equation}
    and $\exists \sigma_0 > 0$ such that
    \begin{equation} \label{equ:sigmas}
    \begin{aligned}
    p_{\min} \gg S\sigma_0,\quad 
        & J \ge \left\lceil \frac{\log N}{\sqrt{S}\sigma_0}\right\rceil,\quad\sigma_0 \ge \soff, \quad \sigma_0 \ge \sH,\\ 
        \sigma = \caO(\sigma_0) & \ge 2\sigma_0\left(1 - \frac{4\pi\sqrt{S}}{\log N \sqrt{3}}\right)^{-1},
    \end{aligned}
    \end{equation}
    then there exists $\sigma_\test = \mathcal{O}\left(\sigma_0\right)$ such that with probability at least $1 - 1/\mathrm{poly}(N)$, the output of Algorithm~\ref{algorithm:single_eigenvalue} $(\nu_\ast,s_{\nu_\ast}, E^\ast)$ satisfies 
    \begin{align}
        |\nu_\ast - \mathfrak{u}| = \mathcal{O}\left(\sigma_0\right),\quad \|s_{\nu_\ast} - x_{\mathrm{on}}\|_2 = \mathcal{O}\left(\sigma_0\right),\quad |E^\ast - E_0| = \mathcal{O}\left(\frac{\sigma_0}{Np_{\min}}\right).
    \end{align}
\label{theorem:main}
\end{theorem}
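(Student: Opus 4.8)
The plan is to reduce the entire procedure to one stable compressed-sensing recovery at the trial shift $\nu_1\in\mathcal{V}$ nearest the optimal shift $\fku$, and then to \emph{transfer} that guarantee to the shift $\nu_\ast$ actually selected by Algorithm~\ref{algorithm:single_eigenvalue}, using the test of another sampling together with the $\ell_1$-minimality of $s_{\nu_\ast}$. First I would set up the measurement model: after normalization $F_{\nu,\caT}=\caP_\caT F_\nu$ is a randomly subsampled shifted Fourier matrix, and since $\caT$ is drawn at ratio $r=\mathcal{O}(N^{-1}S\log^2(S)\poly\log N)$ the number of samples is $\mathcal{O}(S\poly\log N)$, enough for the standard partial-Fourier restricted isometry property of order $\mathcal{O}(S)$ to hold with a RIP constant below the $\ell_1$-recovery threshold. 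A union bound over the $J=\mathcal{O}(\poly\log N)$ trials and over the second sample set $\caT_2$ makes this simultaneous with probability $1-1/\poly(N)$. The hypothesis $4\pi\sqrt S<\sqrt 3\log N$ is exactly what keeps the relevant RIP constant, and hence the factor $(1-4\pi\sqrt S/(\sqrt 3\log N))^{-1}$ appearing in the prescribed $\sigma$, bounded.

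Next I would analyze the good trial $\nu_1$, for which $|\nu_1-\fku|=\mathcal{O}(1/J)$. Writing $F_{\nu_1}=\Delta F_{\fku}$ with the diagonal phase $\Delta=\mathrm{diag}(e^{-\rmi 2\pi(\nu_1-\fku)n/N})$, the on-grid target $x_{\on}$ is carried to the $F_{\nu_1}$-coefficients of $y^0$ by a near-identity map whose spectral-leakage deviation is controlled by $|\nu_1-\fku|$; the prescription $J\ge\lceil\log N/(\sqrt S\sigma_0)\rceil$ is chosen precisely so this leakage stays $\mathcal{O}(\sigma_0)$. I would then check that $x_{\on}$ is feasible for the program at $\nu_1$: the residual $\|\caP_\caT(F_{\nu_1}x_{\on}-y)\|_2$ is bounded by the off-grid component $\soff$ (its definition together with Lemma~\ref{lemma:off_grid}), the statistical noise $\sH$, and the leakage term, all $\mathcal{O}(\sigma_0)$ and hence within $\sqrt{|\caT|}\sigma$. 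The standard noise-robust $\ell_1$ recovery bound for RIP matrices, combined with the approximate sparsity of $x_{\on}$ from Lemma~\ref{lemma:recover} (the part on $\caD$ dominates at level $p_{\min}$, the rest is $\mathcal{O}(S\soff)$), then yields $\|s_{\nu_1}-x_{\on}\|_2=\mathcal{O}(\sigma_0)$; the same residual bound shows $\nu_1$ passes Algorithm~\ref{algorithm:test}.

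The crux, and the step I expect to be the main obstacle, is the analysis of the test of another sampling and the resulting transfer to $\nu_\ast$. Because $\caT_2$ is drawn independently of every $s_{\nu_j}$, the empirical error $\mathcal{E}$ in Algorithm~\ref{algorithm:test} concentrates (Bernstein/Hoeffding over the random subset, plus Hadamard noise) around the true residual $\|F_{\nu_j}s_{\nu_j}-y^0\|_2^2$ up to the $|\caT_2|/N$ scaling, so with $\sigma_\test=\mathcal{O}(\sigma_0)$ a trial passes exactly when its reconstruction genuinely matches $y^0$. The delicate direction is showing that a badly mismatched shift cannot pass: if $|\nu_j-\fku|$ is too large, the spiky on-grid structure of $y^0$ cannot be reproduced by any feasible vector without leaving a detectable residual on $\caT_2$, and it is precisely here that the hypothesis $\|x_{\on}\|_2\gg\|x_{\on}\|_1/\sqrt N$ is needed — it prevents a spread-out feasible solution from mimicking the concentrated signal. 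Since $\nu_\ast$ minimizes $\|s_\nu\|_1$ among passing trials and $\nu_1$ passes with $\|s_{\nu_1}\|_1\approx\|x_{\on}\|_1$, the selected $\nu_\ast$ also passes, giving the rough bound $|\nu_\ast-\fku|=\mathcal{O}(\sigma_0)$. Both $F_{\nu_1}s_{\nu_1}$ and $F_{\nu_\ast}s_{\nu_\ast}$ are then close to $y^0$ on the combined random samples, so invoking RIP in the two nearby bases (close because $|\nu_\ast-\nu_1|=\mathcal{O}(\sigma_0)$) forces $s_{\nu_\ast}$ close to $s_{\nu_1}$, hence $\|s_{\nu_\ast}-x_{\on}\|_2=\mathcal{O}(\sigma_0)$. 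With $p_{\min}\gg S\sigma_0$ this gap lets the thresholding step recover exactly $\caK=\caD$.

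For the final bound I would split the frequency error into its two sources. The recovered energy is $E^\ast=2\pi(\min\caK+\nu_\ast)/N$, while the ground energy corresponds to $f_0=(\min\caK+\nu_{f_0})/N$, so $|E^\ast-E_0|=\tfrac{2\pi}{N}|\nu_\ast-\nu_{f_0}|\le\tfrac{2\pi}{N}(|\nu_\ast-\fku|+|\fku-\nu_{f_0}|)$. The first term is $\mathcal{O}(\sigma_0/N)$ from the previous step, and Lemma~\ref{lemma:off_grid} bounds the second by $\mathcal{O}(\soff/(Np_{\min}))=\mathcal{O}(\sigma_0/(Np_{\min}))$, which dominates and yields the stated $|E^\ast-E_0|=\mathcal{O}(\sigma_0/(Np_{\min}))$. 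Collecting the failure probabilities of the RIP events, the Hadamard-test concentration, and the $\caT_2$ concentration — each arranged to be $1/\poly(N)$ — completes the argument.
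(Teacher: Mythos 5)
Your proposal follows essentially the same route as the paper's proof: stable $\ell_1$ recovery at the nearest trial shift $\nu_1$ via the RIP of the randomly subsampled Fourier matrix, the test of another sampling (whose discriminating power rests on the lower bound $C[x_{\on}]\,|\nu-\fku|\le\|x^{\mathrm{I}}_\nu\|_2$, which is exactly where $\|x_{\on}\|_2\gg\|x_{\on}\|_1/\sqrt N$ enters) to certify the rough bound $|\nu_\ast-\fku|=\mathcal{O}(\sigma_0)$, transfer from $s_{\nu_1}$ to $s_{\nu_\ast}$ through the closeness of the two shifted bases, and the final split $|\nu_\ast-\nu_{f_0}|\le|\nu_\ast-\fku|+|\fku-\nu_{f_0}|$ with Lemma~\ref{lemma:off_grid} supplying the $\mathcal{O}(\soff/p_{\min})$ term. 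The one slip is attributing the condition $4\pi\sqrt S<\sqrt 3\log N$ to the RIP constant: in the paper it instead guarantees that the spectral leakage $\|F_{\nu,\mathcal{T}}x^{\mathrm{I}}_\nu\|_2\le 4\pi\sqrt{|\mathcal{T}|/3}\,|\nu-\fku|$ can be absorbed into the noise budget $\sigma$ so that the on-grid component stays feasible --- which is the mechanism you in fact describe correctly in your second paragraph.
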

Note that the parameter $\sigma_0$ is essentially determined by $\sigma_\off$. If $\soff = 0$, meaning the $y^0$ is on-grid, then we can choose $\sigma_0 \to 0$, so that the error bound can approach 0 as well.

\begin{remark}
    The algorithm in \cite{Judi13} solves the compressed sensing subroutine within 1-norm error $\mathcal{O}(1)$ with runtime $\caO(N \poly\log N)$. In our numerical tests, we simply use the `Minimize' subroutine in CVXPY.
\end{remark}

\section{Numerical results}
\label{sec:comparison}

\subsection{Previous algorithms}

We begin with a brief review of three different early fault-tolerant QPE algorithms: ML-QCELS \cite{PRXQuantum.4.020331}, MM-QCELS \cite{ding2023simultaneous}, and QMEGS \cite{ding2024quantum}, with the latter two algorithms applicable to QEEP.

The outlines of the first two algorithms are as follows. The ML-QCELS algorithm features a hierarchical structure. In each hierarchy, Hadamard tests are used to estimate the signal $y(t)$ at $N_0$ different time points. The algorithm then outputs an estimate for the dominant frequency by minimizing the cost function:
\begin{equation}
L(r,E) = \frac{1}{N_0}\sum_{n=1}^{N_0}\left| re^{-\rmi En\tau} - y(n\tau)\right|^2.
\end{equation}
In the subsequent hierarchy, the search is refined to a narrower region, yielding a new estimate. This iterative process ultimately produces an accurate estimate of the dominant frequency. While ML-QCELS has proven effective for single-phase estimation, it is not suitable for multiple-phase estimation.

To address this limitation, the authors later introduced a multiple-phase version called MM-QCELS \cite{ding2023simultaneous}. This algorithm modifies the original ML-QCELS in two significant ways: the time samples $\mathcal{T}$ are drawn from a probability distribution, and the cost function is redefined as
\begin{equation}
    L\left(\{r_k, E_k\}_{k=1}^K\right) = \frac{1}{|\mathcal{T}|}\sum_{t\in\mathcal{T}}\left| \sum_{k=1}^K r_k e^{-\rmi E_k t} - y(t)\right|^2.
\end{equation}
Additionally, when applying MM-QCELS to single-phase estimation, the hierarchical structure can be eliminated, rendering the algorithm non-adaptive.

A more recent algorithm, QMEGS \cite{ding2024quantum}, also samples time from a continuous probability distribution. Instead of estimating the frequencies by minimizing $L\left(\{r_k, E_k\}_{k=1}^K\right)$, the algorithm identifies the dominant frequency of the target signal by solving
\begin{equation}
    \min_{f\in[0,1)} \sum_{t\in\mathcal{T}}\left|y(t)e^{\rmi 2\pi ft}\right|^2.
\end{equation}
Let the solution of this optimization be $f^\ast$. In the next step, the algorithm narrows its search to the region $[0,1]/(f^\ast - \delta f, f^\ast+\delta f)$ to estimate the sub-dominant frequency. By iterating this procedure, all dominant frequencies can eventually be determined.

Compressed sensing, in essence, is similar to non-adaptive MM-QCELS. Both methods aim to fit the sampled data using a signal ansatz. However, they differ in sampling strategies and optimization objectives. In MM-QCELS, times are sampled from a continuous probability distribution, and the cost function minimizes the total empirical error in the time domain. In contrast, compressed sensing samples times uniformly at random from a discrete set, and the cost function minimizes the 1-norm of the signal in the frequency domain.

\subsection{Models and results}

\begin{figure}
    \centering
    \includegraphics[width=0.9\linewidth]{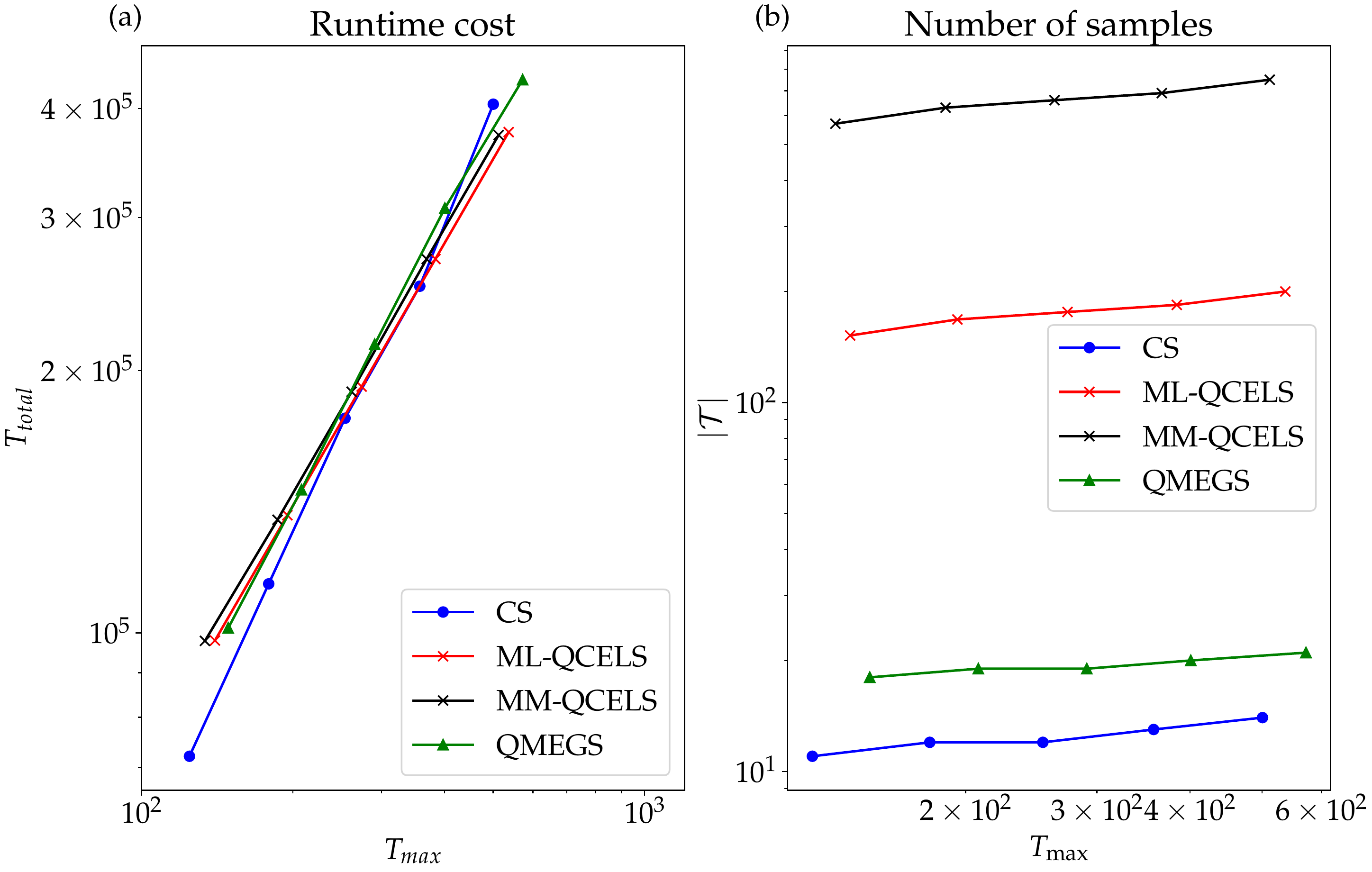}
    \caption{Comparison between the setups of the four algorithms. Here we set $T_n = \lfloor 100 \times (1.4)^{n}\rfloor, n = 1,2,3,4,5$. In Algorithm~\ref{algorithm:single_eigenvalue}, we set $S = 1, r = 2.3\ln T_n/T_n,\tau = 1, \sigma = 0.2\sqrt{2.3\ln T_n}, J = 100, M_{\mathrm{H}} = 100$ and let $\caK = \{\arg\max s_{\nu_\ast}\}$. In ML-QCELS, we set $N = 8, N_s = 50, J = \lfloor 4\ln T_n\rfloor$ (the meanings of the parameters can be found in \cite{PRXQuantum.4.020331}). In MM-QCELS, we set $K = 2, N_T = 30, \gamma = 1, N_s = 100, J = \lfloor 4\ln T_n\rfloor$ (the meanings of the parameters can be found in \cite{ding2023simultaneous}). In QMEGS, we set $K = 10, \mathrm{dx} = 10^{-4}, \alpha = 5, N = 10 + 2\lfloor\ln (2T_n)\rfloor$ (the meanings of the parameters can be found in \cite{ding2024quantum}). The codes are available online \cite{github}.}
    \label{fig:sample_number}
\end{figure}

\begin{figure}
    \centering
    \includegraphics[width = \textwidth]{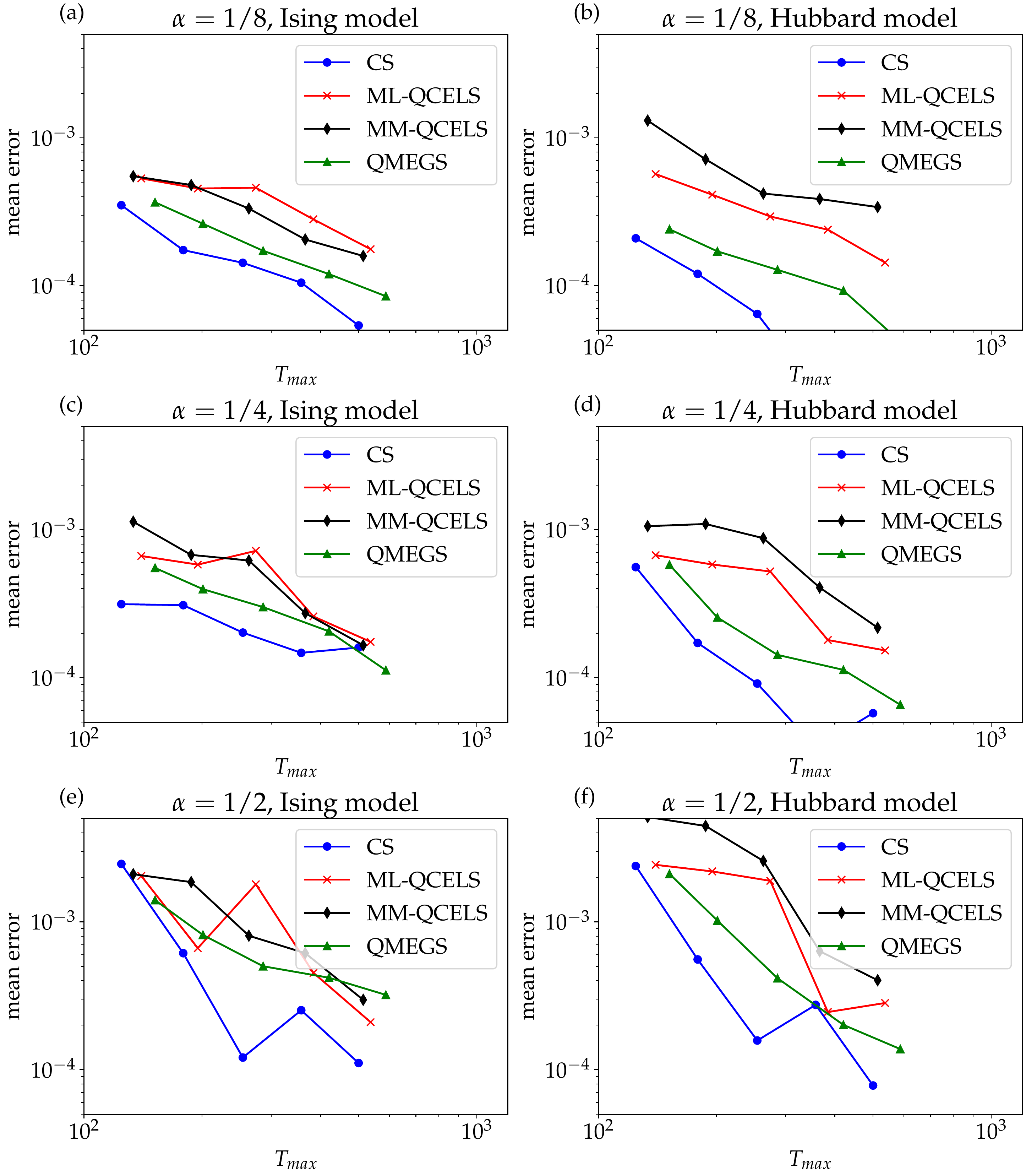}
    \caption{\label{fig:mean_error} Comparison between Algorithm~\ref{algorithm:single_eigenvalue},  ML-QCELS, MM-QCELS and QMEGS with respect to signals generated by Ising model (Eq.~(\ref{equ:ising}), left column) or Hubbard model (Eq.~(\ref{equ:hubbard}), right column). Subfigures (a),(b) represent $\alpha = 1/8$ ($p_0 \approx 0.88$); (c),(d) represent $\alpha = 1/4$ ($p_0 \approx 0.75$); (e),(f) represent $\alpha = 1/2$ ($p_0 \approx 0.5$).}
    \label{fig:qpe_ising}
\end{figure}

In this subsection, we present several numerical tests and compare with previous works. 

Given a Hamiltonian $H$, we first renormalize it to
\begin{equation}
    \overline{H} = \frac{\pi H}{4\|H\|_2}
\end{equation}
so that the spectra of $\overline{H}$ belong to $[-\pi/4, \pi/4]$. We further shift the Hamiltonian to $\overline{H}' = \overline{H} + \pi/2$ so that the spectrum of $\overline{H}'$ belong to $[\pi/4,3\pi/4] \subset [0,2\pi]$, as assumed in the initial setup of our algorithm. To have a better control of the parameters in the signal, we design the following family of initial states: choose a parameter $\alpha \in (0,1)$, and set the initial state as
\begin{equation}
    |\Psi_\alpha\rangle \propto \sum_{\ell=0}^{9} \sqrt{\alpha^{\ell}}|\phi_\ell\rangle,
\end{equation}
where $|\phi_\ell\rangle$ is the $\ell$-th eigenstate of $\overline{H}$ with eigenvalue $E_\ell$. Then, the target signal is as follows:
\begin{equation}
    y_n = \sum_{\ell=0}^{9} p_{\ell} e^{-\rmi  E_\ell n \tau},\quad p_\ell = \frac{(1-\alpha)\alpha^\ell}{1 - \alpha^{10}}.
\end{equation}
For example, when $\alpha = 1/2$, we have $p_0 \approx 1/2$. Clearly, $p_0$ decreases with $\alpha$.

In the first set of numerical tests, $H$ is the normalized Transverse Field Ising (TFI) model on 8 sites:
\begin{equation}
    \quad H_{\mathrm{TFI}} = -\sum_{j=1}^7 Z_j Z_{j+1} - Z_8 Z_1 - 4\sum_{j=1}^8 X_j.
\label{equ:ising}
\end{equation}
In the second set of numerical tests, $H$ is the Fermi-Hubbard (FH) model on 4 sites:
\begin{equation}  \label{equ:hubbard}
    H_{\mathrm{FH}} = -\sum_{j=1}^{3}\sum_{\sigma = \uparrow,\downarrow}c^\dagger_{j,\sigma}c_{j+1,\sigma} + 10\sum_{j=1}^4 \left(n_{j,\uparrow} - \frac{1}{2}\right)\left(n_{j,\downarrow} - \frac{1}{2}\right).
\end{equation}
For each Hamilonian, we set $\alpha = 1/8, 1/4, 1/2$ separately. 

To have a rather fair comparison between the algorithms, we deliberately choose the parameters to ensure that the runtimes $(T_{\max}, T_{\text{total}})$ of the algorithms are approximately on the same line, which is demonstrated in Fig.~\ref{fig:sample_number}. In the same figure, we also plot the number of distinct $t$ used in these algorithms. As shown in Fig.~\ref{fig:sample_number}, our algorithm requires a much smaller size of time samples $|\mathcal{T}|$, in contrast with that of ML-QCELS and MM-QCELS. The mean errors of the outputs are shown in Fig.~\ref{fig:mean_error}. When the initial overlap is comparably large $(\alpha = 1/8)$, the performance of compressed sensing based algorithm is better than the other methods. This observation supports the claim that Algorithm~\ref{algorithm:single_eigenvalue} is prominent in its sparse sampling of time and high level of accuracy.

\section{Discussions}
\label{sec:discussion}

In this paper, we presented a simple and robust algorithm for QPE using compressed sensing. For the single eigenvalue estimation (i.e., QPE), we rigorously established its Heisenberg-limit scaling in Theorem~\ref{theorem:main} and numerically demonstrated its performance compared to other state-of-the-art QPE algorithms in Sec.~\ref{sec:comparison}. Our algorithm has a smaller average error when the initial overlap is large, provided that the runtime costs $(T_{\mathrm{total}},T_{\mathrm{max}})$ are approximately the same. Similarly to QMEGS, our algorithm is non-adaptive, which means we can perform all the measurements first, and then focus on the classical post-processing part. As a comparison, RPE-inspired algorithms are usually adaptive \cite{PRXQuantum.4.020331,ni2023low}. Our algorithm requires a rather small size of $\mathcal{T}$ on a discrete set. In Fig.~\ref{fig:qpe_ising}, for a signal of length $N \in (100, 600)$, our algorithm only requires $\approx 10^1$ different time samples while MM-QCELS requires $\approx 10^2$ number of time samples from a continuous region. Our numerical tests also show that QMEGS only requires $\approx 10^1$ different samples to obtain an accurate estimation. As mentioned in the end of Sec.~\ref{sec:mainidea}, there is a close connection between QMEGS and algorithm OMP, and we conjecture that it is possible to analyze QMEGS using the framework of compressed sensing. 

Lastly, we list a few open questions:
\begin{itemize}
    \item In discrete sampling protocols, would it be possible to shorten the maximal runtime by biased sampling of times? What is the limitation in the discrete scenario? Can we achieve a similar improvement to the Gaussian filter method in \cite{ding2024quantum}? 
    \item In our numerical experiments, the test of another sampling (Algorithm~\ref{algorithm:test}) is actually unnecessary. Is it possible to show this analytically as well?
    \item It is possible to find the optimal grid shift parameter by optimization instead of doing brute-force search in a trial set, which should significantly improve our current result.
\end{itemize}

\section{Acknowledgement}

We thank Tianyu Wang for the helpful discussions. C.Y. acknowledges support from the National Natural
Science Foundation of China (Grant No.~92165109),
National Key Research and Development Program
of China (Grant No.~2022YFA1404204), and Shanghai
Municipal Science and Technology Major Project
(Grant No.~2019SHZDZX01). C.Z. and J.T. acknowledge support from the U.S. National Science Foundation under Grant No. 2116246, the U.S. Department of Energy, Office of Science, National Quantum Information Science Research Centers, and Quantum Systems Accelerator.

\bibliographystyle{quantum}
\bibliography{main}

\begin{thebibliography}{10}

\bibitem{kitaev1995quantum}
Alexei Y~Kitaev.
\newblock ``Quantum measurements and the {Abelian} stabilizer problem''.
\newblock \href{https://dx.doi.org/10.48550/arXiv.quant-ph/9511026}{quant-ph/9511026}~(1995).

\bibitem{shor1999polynomial}
Peter~W Shor.
\newblock ``Polynomial-time algorithms for prime factorization and discrete logarithms on a quantum computer''.
\newblock \href{https://dx.doi.org/10.1137/S0097539795293172}{SIAM review {\bf 41}, 303--332}~(1999).

\bibitem{abrams1999quantum}
Daniel~S Abrams and Seth Lloyd.
\newblock ``Quantum algorithm providing exponential speed increase for finding eigenvalues and eigenvectors''.
\newblock \href{https://dx.doi.org/10.1103/PhysRevLett.83.5162}{Phys. Rev. Lett. {\bf 83}, 5162}~(1999).

\bibitem{mcardle2020quantum}
Sam McArdle, Suguru Endo, Al{\'a}n Aspuru-Guzik, Simon~C Benjamin, and Xiao Yuan.
\newblock ``Quantum computational chemistry''.
\newblock \href{https://dx.doi.org/10.1103/RevModPhys.92.015003}{Rev. Mod. Phys. {\bf 92}, 015003}~(2020).

\bibitem{lin2022heisenberg}
Lin Lin and Yu~Tong.
\newblock ``Heisenberg-limited ground-state energy estimation for early fault-tolerant quantum computers''.
\newblock \href{https://dx.doi.org/10.1103/PRXQuantum.3.010318}{PRX Quantum {\bf 3}, 010318}~(2022).

\bibitem{wang2022quantum}
Guoming Wang, Daniel Stilck-Fran{\c{c}}a, Ruizhe Zhang, Shuchen Zhu, and Peter~D Johnson.
\newblock ``Quantum algorithm for ground state energy estimation using circuit depth with exponentially improved dependence on precision''.
\newblock \href{https://dx.doi.org/10.22331/q-2023-11-06-1167}{Quantum {\bf 7}, 1167}~(2023).

\bibitem{somma2019quantum}
Rolando~D Somma.
\newblock ``Quantum eigenvalue estimation via time series analysis''.
\newblock \href{https://dx.doi.org/10.1088/1367-2630/ab5c60}{New J. Phys. {\bf 21}, 123025}~(2019).

\bibitem{o2019quantum}
Thomas~E O’Brien, Brian Tarasinski, and Barbara~M Terhal.
\newblock ``Quantum phase estimation of multiple eigenvalues for small-scale (noisy) experiments''.
\newblock \href{https://dx.doi.org/10.1088/1367-2630/aafb8e}{New J. Phys. {\bf 21}, 023022}~(2019).

\bibitem{zhang2022computing}
Ruizhe Zhang, Guoming Wang, and Peter Johnson.
\newblock ``Computing ground state properties with early fault-tolerant quantum computers''.
\newblock \href{https://dx.doi.org/10.22331/qv-2022-07-22-65}{Quantum {\bf 6}, 761}~(2022).

\bibitem{dut22hei}
Alicja Dutkiewicz, Barbara~M Terhal, and Thomas~E O’Brien.
\newblock ``Heisenberg-limited quantum phase estimation of multiple eigenvalues with few control qubits''.
\newblock \href{https://dx.doi.org/10.22331/q-2022-10-06-830}{Quantum {\bf 6}, 830}~(2022).

\bibitem{nielsen2010quantum}
Michael~A Nielsen and Isaac~L Chuang.
\newblock ``Quantum computation and quantum information''.
\newblock \href{https://dx.doi.org/10.1017/CBO9780511976667}{Cambridge university press}. ~(2010).

\bibitem{PRXQuantum.4.020331}
Zhiyan Ding and Lin Lin.
\newblock ``Even shorter quantum circuit for phase estimation on early fault-tolerant quantum computers with applications to ground-state energy estimation''.
\newblock \href{https://dx.doi.org/10.1103/PRXQuantum.4.020331}{PRX Quantum {\bf 4}, 020331}~(2023).

\bibitem{ni2023low}
Hongkang Ni, Haoya Li, and Lexing Ying.
\newblock ``On low-depth algorithms for quantum phase estimation''.
\newblock \href{https://dx.doi.org/10.22331/q-2023-11-06-1165}{Quantum {\bf 7}, 1165}~(2023).

\bibitem{georgescu2014quantum}
Iulia~M Georgescu, Sahel Ashhab, and Franco Nori.
\newblock ``Quantum simulation''.
\newblock \href{https://dx.doi.org/10.1103/RevModPhys.86.153}{Rev. Mod. Phys. {\bf 86}, 153}~(2014).

\bibitem{childs2021theory}
Andrew~M Childs, Yuan Su, Minh~C Tran, Nathan Wiebe, and Shuchen Zhu.
\newblock ``Theory of {T}rotter error with commutator scaling''.
\newblock \href{https://dx.doi.org/10.1103/PhysRevX.11.011020}{Phys. Rev. X {\bf 11}, 011020}~(2021).

\bibitem{ding2023simultaneous}
Zhiyan Ding and Lin Lin.
\newblock ``Simultaneous estimation of multiple eigenvalues with short-depth quantum circuit on early fault-tolerant quantum computers''.
\newblock \href{https://dx.doi.org/10.22331/q-2023-10-11-1136}{Quantum {\bf 7}, 1136}~(2023).

\bibitem{PhysRevA.108.062408}
Haoya Li, Hongkang Ni, and Lexing Ying.
\newblock ``Adaptive low-depth quantum algorithms for robust multiple-phase estimation''.
\newblock \href{https://dx.doi.org/10.1103/PhysRevA.108.062408}{Phys. Rev. A {\bf 108}, 062408}~(2023).

\bibitem{ding2024quantum}
Zhiyan Ding, Haoya Li, Lin Lin, HongKang Ni, Lexing Ying, and Ruizhe Zhang.
\newblock ``Quantum {M}ultiple {E}igenvalue {G}aussian filtered {S}earch: an efficient and versatile quantum phase estimation method''.
\newblock \href{https://dx.doi.org/10.22331/q-2024-10-02-1487}{Quantum {\bf 8}, 1487}~(2024).

\bibitem{arad2016connecting}
Itai Arad, Tomotaka Kuwahara, and Zeph Landau.
\newblock ``Connecting global and local energy distributions in quantum spin models on a lattice''.
\newblock \href{https://dx.doi.org/10.1088/1742-5468/2016/03/033301}{J. Stat. Mech. Theor. Exp. {\bf 2016}, 033301}~(2016).

\bibitem{childs2019nearly}
Andrew~M Childs and Yuan Su.
\newblock ``Nearly optimal lattice simulation by product formulas''.
\newblock \href{https://dx.doi.org/10.1103/PhysRevLett.123.050503}{Phys. Rev. Lett. {\bf 123}, 050503}~(2019).

\bibitem{hassanieh2012nearly}
Haitham Hassanieh, Piotr Indyk, Dina Katabi, and Eric Price.
\newblock ``Nearly optimal sparse {F}ourier transform''.
\newblock In Proceedings of the forty-fourth annual ACM symposium on Theory of computing.
\newblock \href{https://dx.doi.org/10.1145/2213977.2214029}{Pages 563--578}.
\newblock ~(2012).

\bibitem{liao2016music}
Wenjing Liao and Albert Fannjiang.
\newblock ``{MUSIC} for single-snapshot spectral estimation: {S}tability and super-resolution''.
\newblock \href{https://dx.doi.org/10.1016/j.acha.2014.12.003}{Appl. Comput. Harmon. Anal. {\bf 40}, 33--67}~(2016).

\bibitem{song2023quartic}
Zhao Song, Baocheng Sun, Omri Weinstein, and Ruizhe Zhang.
\newblock ``Quartic samples suffice for {F}ourier interpolation''.
\newblock In 2023 IEEE 64th Annual Symposium on Foundations of Computer Science (FOCS).
\newblock \href{https://dx.doi.org/FOCS57990.2023.00087}{Pages 1414--1425}.
\newblock IEEE~(2023).

\bibitem{ding2024esprit}
Zhiyan Ding, Ethan~N Epperly, Lin Lin, and Ruizhe Zhang.
\newblock ``The {ESPRIT} algorithm under high noise: {Optimal} error scaling and noisy super-resolution''.
\newblock In 2024 IEEE 65th Annual Symposium on Foundations of Computer Science (FOCS).
\newblock \href{https://dx.doi.org/10.1109/FOCS61266.2024.00137}{Pages 2344--2366}.
\newblock IEEE~(2024).

\bibitem{cochran1967fast}
William~T Cochran, James~W Cooley, David~L Favin, Howard~D Helms, Reginald~A Kaenel, William~W Lang, George~C Maling, David~E Nelson, Charles~M Rader, and Peter~D Welch.
\newblock ``What is the fast {F}ourier transform?''.
\newblock \href{https://dx.doi.org/10.1109/PROC.1967.5957}{Proceedings of the IEEE {\bf 55}, 1664--1674}~(1967).

\bibitem{gilbert2005improved}
Anna~C Gilbert, Shan Muthukrishnan, and Martin Strauss.
\newblock ``{Improved time bounds for near-optimal sparse Fourier representations}''.
\newblock In Manos Papadakis, Andrew~F. Laine, and Michael~A. Unser, editors, Wavelets XI.
\newblock \href{https://dx.doi.org/10.1117/12.615931}{Volume 5914, page 59141A}.
\newblock International Society for Optics and PhotonicsSPIE~(2005).

\bibitem{indyk2014nearly}
Piotr Indyk, Michael Kapralov, and Eric Price.
\newblock ``{(Nearly)} sample-optimal sparse {F}ourier transform''.
\newblock In Proceedings of the twenty-fifth annual ACM-SIAM symposium on Discrete algorithms.
\newblock \href{https://dx.doi.org/10.1109/FOCS.2019.00092}{Pages 480--499}.
\newblock SIAM~(2014).

\bibitem{higgins2009demonstrating}
Brendon~L Higgins, Dominic~W Berry, Stephen~D Bartlett, Morgan~W Mitchell, Howard~M Wiseman, and Geoff~J Pryde.
\newblock ``Demonstrating {H}eisenberg-limited unambiguous phase estimation without adaptive measurements''.
\newblock \href{https://dx.doi.org/10.1088/1367-2630/11/7/073023}{New J. Phys. {\bf 11}, 073023}~(2009).

\bibitem{kimmel2015robust}
Shelby Kimmel, Guang~Hao Low, and Theodore~J Yoder.
\newblock ``Robust calibration of a universal single-qubit gate set via robust phase estimation''.
\newblock \href{https://dx.doi.org/10.1103/PhysRevA.92.062315}{Phys. Rev. A {\bf 92}, 062315}~(2015).

\bibitem{Bel20ach}
Federico Belliardo and Vittorio Giovannetti.
\newblock ``Achieving {H}eisenberg scaling with maximally entangled states: {A}n analytic upper bound for the attainable root-mean-square error''.
\newblock \href{https://dx.doi.org/10.1103/physreva.102.042613}{Phys. Rev. A{\bf 102}}~(2020).

\bibitem{cai2011orthogonal}
T~Tony Cai and Lie Wang.
\newblock ``Orthogonal matching pursuit for sparse signal recovery with noise''.
\newblock \href{https://dx.doi.org/10.1109/TIT.2011.2146090}{IEEE Transactions on Information theory {\bf 57}, 4680--4688}~(2011).

\bibitem{candes2006near}
Emmanuel~J Cand{\`e}s and Terence Tao.
\newblock ``Near-optimal signal recovery from random projections: {U}niversal encoding strategies?''.
\newblock \href{https://dx.doi.org/10.1109/TIT.2006.885507}{IEEE transactions on information theory {\bf 52}, 5406--5425}~(2006).

\bibitem{candes2008restricted}
Emmanuel~J Cand{\`e}s.
\newblock ``The restricted isometry property and its implications for compressed sensing''.
\newblock \href{https://dx.doi.org/10.1016/j.crma.2008.03.014}{Comptes rendus. Mathematique {\bf 346}, 589--592}~(2008).

\bibitem{candes2006robust}
Emmanuel~J Cand{\`e}s, Justin Romberg, and Terence Tao.
\newblock ``Robust uncertainty principles: {E}xact signal reconstruction from highly incomplete frequency information''.
\newblock \href{https://dx.doi.org/10.1109/TIT.2005.862083}{IEEE Transactions on information theory {\bf 52}, 489--509}~(2006).

\bibitem{gross2010quantum}
David Gross, Yi-Kai Liu, Steven~T Flammia, Stephen Becker, and Jens Eisert.
\newblock ``Quantum state tomography via compressed sensing''.
\newblock \href{https://dx.doi.org/10.1103/PhysRevLett.105.150401}{Phys. Rev. Lett. {\bf 105}, 150401}~(2010).

\bibitem{magesan2013compressing}
Easwar Magesan, Alexandre Cooper, and Paola Cappellaro.
\newblock ``Compressing measurements in quantum dynamic parameter estimation''.
\newblock \href{https://dx.doi.org/10.1103/PhysRevA.88.062109}{Phys. Rev. A {\bf 88}, 062109}~(2013).

\bibitem{Smith2013}
Aaron Smith, Riofr{\'\i}o Carlos, Brielle~Evelyn Anderson, Hector~Sosa Martinez, Ivan~H Deutsch, and Poul Jessen.
\newblock ``Quantum state tomography by continuous measurement and compressed sensing''.
\newblock \href{https://dx.doi.org/10.1103/PhysRevA.87.030102}{Phys. Rev. A {\bf 87}, 030102}~(2013).

\bibitem{Kalev2015}
Amir Kalev, Robert~L Kosut, and Ivan~H Deutsch.
\newblock ``Quantum tomography protocols with positivity are compressed sensing protocols''.
\newblock \href{https://dx.doi.org/10.1038/npjqi.2015.18}{Npj Quantum Inf. {\bf 1}, 15018}~(2015).

\bibitem{tang2013compressed}
Gongguo Tang, Badri~Narayan Bhaskar, Parikshit Shah, and Benjamin Recht.
\newblock ``Compressed sensing off the grid''.
\newblock \href{https://dx.doi.org/10.1109/TIT.2013.2277451}{IEEE transactions on information theory {\bf 59}, 7465--7490}~(2013).

\bibitem{Judi13}
Juditsky Anatoli, Kilinc~Karzan Fatma, and Nermirovski Arkadi.
\newblock ``Randomized first order algorithms with applications to $\ell_1$-minimization''.
\newblock \href{https://dx.doi.org/10.1007/s10107-012-0575-2}{Math. Program. {\bf 142}, 269--310}~(2013).

\bibitem{github}
\url{https://github.com/CYI1995/QEEP/tree/main/Paper_QPE}.

\bibitem{rudelson2008sparse}
Mark Rudelson and Roman Vershynin.
\newblock ``On sparse reconstruction from {F}ourier and {G}aussian measurements''.
\newblock \href{https://dx.doi.org/10.1002/cpa.20227}{Comm. Pure Appl. Math. {\bf 61}, 1025--1045}~(2008).

\end{thebibliography}

\newpage

\appendix

\section{Standard results in compressed sensing} \label{app:compressed_sensing}

Given a vector $v = [v_1, v_2, \cdots, v_N]^{\top}$, its $p$-norm is defined as
\begin{equation}
    \|v\|_p \equiv \left(\sum_{n=1}^N |v_n|^p\right)^{1/p}.
\end{equation}
Note that $\|v\|_{\infty} = \max_n|v_n|$ and $\|v\|_0$ is the sparsity of $v$. In the following paragraph, we use $k$ to label the indices of entries in the frequency domain, and use $n$ to label the indices of entries in the time domain. Denote the set of integers from 1 to $N$ as $[N]$. In regular compressed sensing, we deal with a time domain discrete signal $y^0$ in the form of
\begin{equation}
    y^0_n = \sum_{f\in\mathcal{F}}p_f e^{-\rmi 2\pi fn}\quad \forall n\in [N],
\label{equ:time_domain_signal2}
\end{equation}
where $p_f > 0, \sum_f p_f = 1, f\in [0,1)$, and $\mathcal{F}$ is the set of frequencies. In the context of compressed sensing, sparsity means $|\mathcal{F}| = \mathcal{O}(\log N)$ \cite{candes2006near}. The time domain signal can thus be written as an $N$-dimensional vector
\begin{equation}
    y^0 = [y^0_1, y^0_2, \cdots, y^0_N]^{\top}.
\end{equation}
Define the Fourier matrix by $F_{kn} \equiv e^{-\rmi 2\pi kn/N}$ with $k,n\in[N]$. If all $f\in\mathcal{F}$ are on-grid, the frequency domain signal $x$ can be written in the form of a real vector:
\begin{equation}
    x = \frac{1}{N}F^{\dagger}y^0 = \sum_{f\in\mathcal{F}}p_f \delta_{Nf}.
\end{equation}
The purpose of compressed sensing is to recover $x$ from noisy signal samples. The algorithm is accomplished in the following sequence. Choose a sampling ratio $r\in (0,1)$, and assign each integer $n$ in $[N]$ a random variable $1_n$ such that
\begin{equation}
    \mathrm{Pr}\{1_n = 1\} = r,\quad \mathrm{Pr}\{1_n = 0\} = 1-r.
\label{equ:indicator}
\end{equation}
Draw one sample from each $1_n$, and denote the set of integers with sampled value $1$ as the sample set $\mathcal{T}$. Given $\mathcal{T}$, we define the projection operator $\mathcal{P}_\mathcal{T}$ by
\begin{equation}
    (\mathcal{P}_\mathcal{T})_{t_1,t_2} = 1_{t_1\in\mathcal{T}}\cdot \delta_{t_1,t_2}\quad t_1,t_2\in[N],
\end{equation}
and $F_{\mathcal{T}} \equiv \mathcal{P}_\mathcal{T} F, y^0_{\mathcal{T}} \equiv \mathcal{P}_\mathcal{T} y^0$.

With these notations, the compressed sensing subroutine is to solve the following optimization problem
\begin{equation}
    \min_{s\in\mathbb{R}^N} \|s\|_1,\quad \mathrm{s.t.}\quad F_{\mathcal{T}}s = y^0_{\mathcal{T},}
\label{equ:cs1}
\end{equation}
which can be rewritten as a linear programming problem. When $|\mathcal{T}| \approx Nr = \mathcal{O}(\poly\log N)$ and the frequency support $\mathcal{F}$ is sparse in the sense that $|\mathcal{F}| = \mathcal{O}(\poly\log N)$, the optimal solution $s^\#$ is equal to the frequency domain signal $x$ with high probability. Rigorous statements can be found in \cite{candes2006near}. 

Provided that the signal has extra noise $y_n = y^0_n + z_n$, then the signal can be approximately recovered by the convex relaxation algorithm \cite{candes2008restricted}:
\begin{equation}
    \min_{s\in\mathbb{R}^N} \|s\|_1,\quad \mathrm{s.t.}\quad \|F_{\mathcal{T}}s - y_{\mathcal{T}}\|_2 \le \sqrt{|\mathcal{T}|}\sigma,
\label{equ:convex_relaxation}
\end{equation}
where $\sigma$ is the expected mean-square-root noise. The differences between the solution to Eq.~(\ref{equ:convex_relaxation}) and $x$ depend on $\sigma$. The subroutine itself is a convex quadratic programming problem.

The robustness of compressed sensing solution can be analyzed through the restricted isometry property (RIP) of random Fourier matrices. If a matrix $M$ satisfies $(1-\eta)\|x\|_2^2 \le \|Mx\|_2^2 \le (1+\eta)\|x\|_2^2$ for all $x\in\mathcal{X}$, then we say that the matrix $M$ satisfies $\eta$-RIP over set $\mathcal{X}$. We present a few standard results of compressed sensing in the following. 

\begin{theorem}\cite{rudelson2008sparse} \label{theorem:rip}
    Suppose $\mathcal{T}$ is generated by sampling from $[N]$ with ratio 
    \begin{equation}
        r = \mathcal{O}\left(S \log^2 (S) \eta^{-2} \mathrm{poly}\log(N) N^{-1}\right).
    \end{equation}
    Then $F_\mathcal{T}/\sqrt{|\mathcal{T}|}$ satisfies $\eta$-RIP over the set
       $\{x : x\in \mathbb{C}^N, \|x\|_0 \le S\}$
    with probability
    \begin{equation}
        1 - \mathcal{O}\left(\frac{\mathrm{poly}(Se^{\eta^{-1}})}{\mathrm{poly}(N)}\right).
    \end{equation}
\end{theorem}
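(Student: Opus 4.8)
The plan is to reduce the $\eta$-RIP over $S$-sparse vectors to a uniform control, over all supports $T\subseteq[N]$ with $|T|\le S$, of how far the Gram matrix of the corresponding column submatrix sits from the identity. Writing $A \equiv F_{\mathcal{T}}/\sqrt{|\mathcal{T}|}$ and letting $A_T$ be the submatrix of $A$ keeping only the columns indexed by $T$, the desired property is equivalent to $\sup_{|T|\le S}\|A_T^{*}A_T - I_T\| \le \eta$, since $\|Ax\|_2^2 - \|x\|_2^2 = x^{*}(A_T^{*}A_T - I_T)x$ for $x$ supported on $T$. First I would pass from the Bernoulli sampling model of Eq.~(\ref{equ:indicator}) (each $n$ included independently with probability $r$) to an i.i.d.\ model of $m \approx Nr$ rows drawn uniformly from $[N]$; the two models are interchangeable up to constants in the failure probability, so it suffices to treat $m$ independent sampled rows.

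For a \emph{fixed} support $T$, I would write $A_T^{*}A_T - I_T = \frac{1}{m}\sum_{t=1}^{m}\left(\phi_t\phi_t^{*} - I_T\right)$, where $\phi_t\in\mathbb{C}^{|T|}$ is the restriction to the coordinates of $T$ of the $t$-th sampled Fourier row. The crucial structural input is the \emph{flatness} of the Fourier basis: every entry of $F$ has modulus one, so $\E[\phi_t\phi_t^{*}] = I_T$ and $\|\phi_t\|_2^2 = |T| \le S$ deterministically. Thus $A_T^{*}A_T - I_T$ is an average of independent, centered, bounded rank-one perturbations, to which matrix concentration applies.

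The heart of the argument is a sharp bound on $\E\,\|A_T^{*}A_T - I_T\|$. I would symmetrize with a Rademacher sequence $\varepsilon_t$ and invoke Rudelson's lemma (the noncommutative Khintchine inequality), which bounds $\E\,\|\frac{1}{m}\sum_t \varepsilon_t \phi_t\phi_t^{*}\|$ by roughly $\sqrt{S\log S/m}\,\|A_T\|$ after inserting $\max_t\|\phi_t\|_2 \le \sqrt{S}$. Since $\|A_T\| \le \sqrt{1 + \|A_T^{*}A_T - I_T\|}$, this yields a self-bounding inequality $E \lesssim \sqrt{S\log S/m}\,\sqrt{1+E}$ for $E \equiv \E\,\|A_T^{*}A_T - I_T\|$, which closes to give $E \lesssim \sqrt{S\log S/m}$, hence $E \le \eta/2$ once $m \gtrsim \eta^{-2}S\log S$. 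I would then upgrade this expectation bound to a tail bound through a Talagrand-type concentration inequality for the supremum of the empirical process (or the matrix Bernstein inequality), obtaining $\|A_T^{*}A_T - I_T\| \le \eta$ for the fixed $T$ with probability at least $1 - \exp(-c\,\eta^2 m/S)$.

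Finally I would make the estimate uniform over all $\binom{N}{S}\le (eN/S)^{S}$ supports. A crude union bound costs a factor of order $S\log(N/S)$ in the exponent, which is absorbed by strengthening the requirement to $m \gtrsim \eta^{-2}S\log^2 S\,\polylog N$, i.e.\ the stated sampling ratio $r = m/N$, and leaves failure probability $1 - \poly(Se^{1/\eta})/\poly(N)$; for the sharper polylogarithmic $N$-dependence one instead runs Dudley's entropy integral / generic chaining over the sparse sphere, which is the route taken in \cite{rudelson2008sparse}. The main obstacle is the expectation estimate of the third step: obtaining the correct $\sqrt{S\log S/m}$ scaling, rather than a naive $\sqrt{S/m}$ followed by a lossy union bound, hinges on Rudelson's noncommutative Khintchine bound together with the self-bounding trick to close the inequality, and keeping the $N$-dependence polylogarithmic requires the chaining argument in place of the crude support-by-support union bound.
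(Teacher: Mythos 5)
The paper does not actually prove this statement: its entire ``proof'' is a pointer to Theorem 3.11 of the cited reference \cite{rudelson2008sparse}, together with the parameter substitutions $s = \mathcal{O}(\log N)$ and $\varepsilon = \mathcal{O}(\eta/\log N)$. What you have written is, instead, a sketch of how that cited theorem is itself established: reduction of RIP to $\sup_{|T|\le S}\|A_T^{*}A_T - I_T\|\le\eta$, equivalence of the Bernoulli and i.i.d.\ sampling models, isotropy and flatness of the Fourier rows, symmetrization plus Rudelson's noncommutative-Khintchine lemma with the self-bounding trick to control the expectation, and a concentration-plus-uniformization step. That is the correct architecture of the Rudelson--Vershynin argument, so your route is faithful to the source the paper leans on; the paper simply black-boxes it, while you reconstruct it.

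One quantitative step in your last paragraph is off. If the fixed-support tail bound is $\exp(-c\,\eta^{2}m/S)$, then a support-by-support union bound over the $\binom{N}{S}\le (eN/S)^{S}$ subsets forces $c\,\eta^{2}m/S \gtrsim S\log(N/S)$, i.e.\ $m \gtrsim \eta^{-2}S^{2}\log(N/S)$ --- a \emph{quadratic} loss in $S$, which cannot be ``absorbed'' into the stated $m \gtrsim \eta^{-2}S\log^{2}(S)\,\polylog N$. The chaining / Dudley entropy-integral argument over the set of $S$-sparse unit vectors is therefore not an optional refinement for the $N$-dependence, as your wording suggests, but the essential ingredient that recovers the near-linear dependence on $S$ appearing in the theorem's sampling ratio. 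Since you correctly identify that this is the route taken in \cite{rudelson2008sparse}, the gap is one of bookkeeping rather than of missing ideas, but as written the union-bound sentence does not yield the claimed sample complexity.
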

\begin{proof}
    Follow Theorem 3.11 of \cite{rudelson2008sparse}. The theorem can be proved by setting $s = \mathcal{O}(\log N)$ and $\varepsilon = \mathcal{O}(\eta/\log N)$. The meaning of the notations can be found in the same paper.
\end{proof}

Given a vector $v\in \mathbb{R}^N$, we rearrange the entries of the vector by their absolute values in the non-increasing order to obtain $(v^{\downarrow}_1, v^{\downarrow}_2, \cdots, v^{\downarrow}_N)$. Define $v_{\mathrm{res}}$ as the vector generated from $v$ by removing $v^{\downarrow}_1, v^{\downarrow}_2, \ldots, v^{\downarrow}_S$. The accuracy of Eq.~(\ref{equ:convex_relaxation}) can be estimated by the following theorem.

\begin{theorem}\cite{candes2008restricted} \label{theorem:compressed_sensing}
    Suppose $M\in \mathbb{C}^{N'\times N}$ satisfies $\eta$-RIP over the set $\{x : x\in \mathbb{R}^N, \|x\|_0 \le 2S\}$, $0 < \eta < \sqrt{2}-1$, and $x_1,x_2\in \mathbb{R}^N$. If
    \begin{equation}
        \|x_1\|_1 \le \|x_2\|_1,\quad \|M(x_1-x_2)\|_2 \le \sigma,
    \end{equation}
    then
    \begin{gather}
        \|x_1 - x_2\|_2 \le C_1 \sigma + C_2\frac{\|x_{2,\mathrm{res}}\|_1}{\sqrt{S}},
    \end{gather}
    where \begin{equation} \label{equ:C1C2}
        C_1 \equiv \frac{2 + 2(\sqrt{2}-1)\eta}{1 - (\sqrt{2}+1)\eta},\quad C_2 \equiv \frac{4\sqrt{1 + \eta}}{1 - (\sqrt{2}+1)\eta}.
    \end{equation}
\end{theorem}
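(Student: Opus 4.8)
The plan is to reconstruct the stable $\ell_1$-recovery argument of Candès, with the two hypotheses $\|x_1\|_1 \le \|x_2\|_1$ and $\|M(x_1-x_2)\|_2 \le \sigma$ playing the roles of, respectively, $\ell_1$-minimality of the candidate and feasibility of the constraint. Set $h = x_1 - x_2$. Let $T_0 \subseteq [N]$ index the $S$ entries of $x_2$ of largest magnitude, so that $x_{2,\mathrm{res}} = x_{2,T_0^c}$, and partition $T_0^c$ into consecutive blocks $T_1, T_2, \dots$ of size $S$, ordered so that $T_1$ holds the $S$ largest-magnitude entries of $h$ on $T_0^c$, $T_2$ the next $S$, and so on. The quantity to control is $\|h\|_2$, which I will split as $\|h\|_2 \le \|h_{T_0\cup T_1}\|_2 + \sum_{j\ge 2}\|h_{T_j}\|_2$ and then bound each piece.

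First I would derive the cone constraint from minimality: expanding $\|x_2 + h\|_1 \le \|x_2\|_1$ across $T_0$ and $T_0^c$ and applying the reverse triangle inequality on the $T_0$ block yields $\|h_{T_0^c}\|_1 \le \|h_{T_0}\|_1 + 2\|x_{2,\mathrm{res}}\|_1$. Second, the block-decay estimate $\|h_{T_j}\|_2 \le S^{-1/2}\|h_{T_{j-1}}\|_1$ for $j \ge 2$ (every entry of a later block is dominated by the mean of the preceding one), summed over $j$ and combined with $\|h_{T_0}\|_1 \le \sqrt{S}\,\|h_{T_0}\|_2$ and the cone constraint, gives $\sum_{j\ge 2}\|h_{T_j}\|_2 \le \|h_{T_0}\|_2 + 2 S^{-1/2}\|x_{2,\mathrm{res}}\|_1$. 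This simultaneously controls the tail term appearing in the split of $\|h\|_2$.

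The crux is the RIP estimate on the head $\|h_{T_0\cup T_1}\|_2$. I would write $M h_{T_0\cup T_1} = M h - \sum_{j\ge 2} M h_{T_j}$, invoke the RIP lower bound $(1-\eta)\|h_{T_0\cup T_1}\|_2^2 \le \|M h_{T_0\cup T_1}\|_2^2$, and expand the right-hand side as $\langle M h_{T_0\cup T_1}, M h\rangle - \sum_{j\ge 2}\langle M h_{T_0\cup T_1}, M h_{T_j}\rangle$. The first piece is at most $\sqrt{1+\eta}\,\sigma\,\|h_{T_0\cup T_1}\|_2$ by feasibility and the RIP upper bound. Each cross term is handled by the near-orthogonality lemma: for disjointly supported $u,v$ with combined support at most $2S$, $\eta$-RIP gives $|\langle M u, M v\rangle| \le \eta\,\|u\|_2\,\|v\|_2$, a parallelogram-identity consequence of applying RIP to $u \pm v$. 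Splitting $h_{T_0\cup T_1}$ into its $T_0$ and $T_1$ parts, using $\|h_{T_0}\|_2 + \|h_{T_1}\|_2 \le \sqrt{2}\,\|h_{T_0\cup T_1}\|_2$, and feeding in the tail bound from the previous step, I divide through by $\|h_{T_0\cup T_1}\|_2$ and rearrange to obtain $\|h_{T_0\cup T_1}\|_2 \le \big(\sqrt{1+\eta}\,\sigma + 2\sqrt{2}\,\eta\,S^{-1/2}\|x_{2,\mathrm{res}}\|_1\big)/\big(1 - (1+\sqrt{2})\eta\big)$. Here the hypothesis $\eta < \sqrt{2} - 1$ is precisely what keeps the denominator positive.

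Finally I would assemble $\|h\|_2 \le 2\|h_{T_0\cup T_1}\|_2 + 2 S^{-1/2}\|x_{2,\mathrm{res}}\|_1$ from the two splits, substitute the head bound, and collect the coefficient of $\sigma$ and of $S^{-1/2}\|x_{2,\mathrm{res}}\|_1$ separately to recover a bound of the form in Eq.~\eqref{equ:C1C2}, with both constants controlled by the common factor $1/(1-(1+\sqrt{2})\eta)$. I expect the main obstacle to be this last bookkeeping: the tail coefficient receives contributions from two sources, the $2\sqrt{2}\,\eta$ term carried through the RIP estimate and the direct $2 S^{-1/2}\|x_{2,\mathrm{res}}\|_1$ term, and these must be combined over the common denominator so that the algebra lands on the stated $C_1$ and $C_2$. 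Every step upstream is routine once the near-orthogonality lemma is in place; the delicacy is entirely in tracking the numerical constants to their final form.
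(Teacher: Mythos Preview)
The paper does not supply its own proof of this theorem; it is quoted as a standard compressed-sensing result from Cand\`es~\cite{candes2008restricted}. Your outline is exactly the argument in that reference: the cone constraint from $\ell_1$-minimality, the size-$S$ block decomposition of $h_{T_0^c}$ ordered by magnitude, the near-orthogonality consequence of RIP on disjointly supported pieces, and the final assembly via $\|h\|_2 \le 2\|h_{T_0\cup T_1}\|_2 + 2S^{-1/2}\|x_{2,\mathrm{res}}\|_1$. So your approach coincides with what the paper implicitly invokes.

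One remark on the bookkeeping you flag as delicate: carrying your arithmetic through with the hypothesis $\|Mh\|_2 \le \sigma$ yields noise coefficient $\dfrac{2\sqrt{1+\eta}}{1-(1+\sqrt{2})\eta}$ and tail coefficient $\dfrac{2+2(\sqrt{2}-1)\eta}{1-(1+\sqrt{2})\eta}$. The constants displayed in Eq.~\eqref{equ:C1C2} appear to have these two expressions transposed, with $C_2$ carrying the extra factor of $2$ that in Cand\`es's original comes from the convention $\|Ah\|_2 \le 2\epsilon$ rather than $\le \sigma$. This is immaterial downstream, since the paper only ever uses $C_1, C_2 = \mathcal{O}(1)$, but it means you should not expect your algebra to reproduce the stated $C_1, C_2$ verbatim; your instinct that the final matching of constants is the fragile step is correct, though for a different reason than a flaw in the argument.
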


\section{Proof of Theorem~\ref{theorem:main}}
\label{app:proof_of_main_theorem}

The input signal is
\begin{equation} 
    y = y^0_{\on} + y^0_{\off} + z,
\end{equation}
where $z$ is the uncertainty from the Hadamard tests that satisfies $\|z\|_{\infty} \le \sH$. Recall that $s_\nu$ is the solution of
\begin{equation}\label{equ:compressed_sensing}
    \min_{s\in\mathbb{R}^N}\|s\|_1\quad \mathrm{s.t.}\quad \|F_{\nu,\mathcal{T}}s - y_{\mathcal{T}}\|_2 \le \sqrt{|\mathcal{T}|}\sigma.
\end{equation}
Let $\nu_1$ be the parameter in $\mathcal{V}$ that is closest to $\mathfrak{u}$, and $\nu_\ast$ be the one with the smallest $\|s_{\nu_\ast}\|_1$. 
Therefore,
\begin{align}
     \|F_{{\nu_1},\mathcal{T}}(s_{\nu_\ast} - s_{\nu_1})\|_2 &\le \|F_{{\nu_1},\mathcal{T}} s_{\nu_\ast} - y_\mathcal{T}\|_2 + \|F_{{\nu_1},\mathcal{T}} s_{\nu_1} - y_\mathcal{T}\|_2\nonumber\\
     &\le \|F_{\nu_1,\mathcal{T}}s_{\nu_\ast} - y_\mathcal{T}\|_2 + \sqrt{|\mathcal{T}|}\sigma\nonumber\\
     &\le \|F_{\nu_1,\mathcal{T}}s_{\nu_\ast} - F_{{\nu_\ast},\mathcal{T}} s_{\nu_\ast}\|_2 + \|F_{\nu_\ast,\mathcal{T}}s_{\nu_\ast} - y_\mathcal{T}\|_2 + \sqrt{|\mathcal{T}|}\sigma\nonumber\\
     &\le \|F_{\nu_1,\mathcal{T}}s_{\nu_\ast} - F_{{\nu_\ast},\mathcal{T}} s_{\nu_\ast}\|_2 + 2\sqrt{|\mathcal{T}|}\sigma,\\
    \|F_{\nu_\ast,\mathcal{T}}s_{\nu_\ast} - F_{{\nu_1},\mathcal{T}} s_{\nu_\ast}\|_2^2 &= \sum_{t\in\mathcal{T}}\left|\sum_{k\in[N]}s_{\nu_\ast,k} \left(e^{-\rmi 2\pi (k + \nu_\ast)t/N} - e^{-\rmi 2\pi (k + \nu_1)t/N}\right)\right|^2\nonumber\\
    &= \sum_{t\in\mathcal{T}}\left|\sum_{k\in[N]} s_{\nu_\ast,k}e^{-\rmi 2\pi kt/N}\right|^2 4\sin^2\left[\frac{\pi(\nu_{\ast} - \nu_1)t}{N}\right]\nonumber\\
    &= \sum_{t\in\mathcal{T}}|(F s_{\nu_\ast})_t|^2 4\sin^2\left[\frac{\pi(\nu_{\ast} - \nu_1)t}{N}\right]\nonumber\\
    &\le 4\pi^2|\mathcal{T}| |\nu_{\ast} - \nu_1|^2 \quad \textrm{with high probability}.
\end{align}
The last inequality holds with high probability because $|\caT| = \mathcal{O}(\poly\log N)$. 

Combining Eqs.~(\ref{equ:lemma6nulowerbound}) and (\ref{equ:sigmatestlowerbound}) in Lemma~\ref{lemma:bad_nu}, if we set
\begin{equation}
    \sigma_\test = c \sqrt{\frac{3}{2}}\left[\left(C_3^2 + \frac{8\pi^2 S}{3\log^2 N}\right)^{1/2} + C_0\right]\sigma,
\end{equation}
where $c$ is a small constant larger than 1, then with high probability, we have
\begin{align}
    |\nu_\ast - \mathfrak{u}| \le C[x_{\on}]^{-1}(\sqrt{2}\sigma_\test + C_0\sigma) = \caO(\sigma).
\end{align}
The condition $\|x_{\on}\|_2 \gg \|x_{\on}\|_1/\sqrt{N}$ ensures that $C[x_{\on}] = \caO(1)$. Therefore,
\begin{equation}
    \begin{aligned}
        \|F_{{\nu_1},\mathcal{T}}(s_{\nu_\ast} - s_{\nu_1})\|_2 \le 2\sqrt{|\mathcal{T}|}\sigma + 2\pi|\nu_1 - \fku| + 2\pi|\nu_\ast - \fku|:= \sqrt{|\caT|}C_4\sigma_0.
    \end{aligned}
\end{equation}
The condition on $J$ in Eq.~(\ref{equ:sigmas}) gives us
\begin{equation} \label{equ:Jbound}
    |\nu_1 - \mathfrak{u}| \le J^{-1} \le \frac{\sqrt{S}\sigma_0}{\log N}.
\end{equation}
The conditions in Eq.~(\ref{equ:sigmas}) implies that $\sqrt{S}/\log N = \caO(1), C_0 = \caO(1), C_3 = \caO(1)$, which further implies $C_4 = \caO(1)$.

Consider the grid decomposition of $y^0_{\mathrm{on}}$ with respect to $\nu$:
\begin{equation} \label{equ:gird_decomposition}
    x^{\mathrm{R}}_\nu \equiv \mathrm{Re}(F^{-1}_\nu y^0_{\mathrm{on}}) ,\quad x^{\mathrm{I}}_\nu \equiv \mathrm{Im}(F^{-1}_\nu y^0_{\mathrm{on}}),
\end{equation}
Note that $x^{\mathrm{R}}_{\mathfrak{u}} = x_{\mathrm{on}}, x^{\mathrm{I}}_{\mathfrak{u}} = 0$. By virtue of Theorem~\ref{theorem:compressed_sensing}, we have
\begin{equation}
    \|s_{\nu_\ast} - s_{\nu_1}\|_2 \le C_1 C_4\sigma + C_2\frac{\|s_{\nu_1,\mathrm{res}}\|_1}{\sqrt{S}},
\end{equation}
where we have used the fact that $C_1,C_2,C_4 = \caO(1)$. The condition on $p_{\min}$ ensures that the dominant part of $s_{\nu_1}$ is the same as the dominant part of $x_{\on}$. By virtue of Lemma \ref{lemma:good_nu}, we have $\|s_{\nu_1}\|_1 \le \|x^\rmR_{\nu_1}\|_1$. Therefore,
\begin{align}
    \|s_{\nu_1,\mathrm{res}}\|_1 &\le \sum_{n\in\caD}\left(|x^R_{\nu_1}| - |s_{\nu_1,n}|\right) + \sum_{n\in\caD_{\res}}|x^\rmR_{\nu_1,n}| \quad \nonumber\\
    &\le \|x^{\mathrm{R}}_{\nu_1} - s_{\nu_1}\|_1 + \pi^2|\nu_1 - \fku|\log N \quad \ \textrm{(Lemma \ref{lemma:bounds})}\nonumber\\
    &\le \sqrt{S}(C_1 + C_2 \pi)\soff + \sqrt{S}\pi^2 \sigma_0 = \caO(\sqrt{S}\sigma_0).
\end{align}
Hence $\|s_{\nu_\ast} - s_{\nu_1}\|_2 = \caO(\sigma_0)$. According to Lemma~\ref{lemma:good_nu}, we have $\|s_{\nu_1} - x_{\mathrm{on}}\|_2 \le C_3 \sigma = \mathcal{O}(\sigma_0)$. Therefore,
\begin{equation}
    \|s_{\nu_\ast} - x_{\mathrm{on}}\|_2 \le \|s_{\nu_\ast} - s_{\nu_1}\|_2 + \|s_{\nu_1} - x_{\on}\|_2 = \caO(\sigma_0).
\end{equation}
Finally, the accuracy on $E_0$ can be confirmed by combining Lemma~\ref{lemma:recover} and the above 2-norm error bound.

\subsection{Auxiliary lemmas}

The following lemmas are critical for the proof of Theorem~\ref{theorem:main}. Henceforth, we assume $s_\nu$ is the solution of
\begin{equation}
    \min_{s\in\mathbb{R}^N} \|s\|_1,\quad \mathrm{s.t.}\quad \|F_{\mathcal{T}}s - y_{\mathcal{T}}\|_2 \le \sqrt{|\mathcal{T}|}\sigma.
\end{equation}
Consider a signal $x = \sum_{n\in\caD'}q_n \delta_n$ that follows the description of Lemma \ref{lemma:recover}. Suppose $p_{\min} \gg S\sigma_{\off}$. Then $\caD$ is the indices corresponding to the $S$ largest entries of $x$, and the remaining part is $\caD_{\res}$, and
\begin{equation}\label{equ:xon}
    x_{\dom} = \sum_{n\in\caD}q_n \delta_n,\quad x_{\res} = \sum_{n\in \caD_{\res}}q_n \delta_n,\quad 
    |\caD| = S,\quad \|x_\res\|_1 = \caO(\soff).
\end{equation}

\begin{lemma} \label{lemma:bounds}
    Suppose $y = Fx$ is a length-$N$ ($N \ge 100$) on-grid signal and $x$ satisfies the condition in Eq.~(\ref{equ:xon}) and $\|x\|_1 \le 1$. Let
    \begin{equation}\label{equ:Cx}
        C[x] \equiv \left[(4 + 2\pi^2/N)\|x\|_2^2 - 2\pi^2\|x\|_1^2/N\right]^{1/2}.
    \end{equation}If $y = F_\nu(x^\rmR_\nu + \rmi x^\rmI_\nu)$, then
    \begin{gather}
        C[x]|\nu| \le \|x^{\mathrm{I}}_\nu\|_2 \le \frac{2\pi}{\sqrt{3}}|\nu|, \label{equ:lemma5fst} \\
        \|x^{\mathrm{R}}_\nu - x\|_2 \le \frac{2\pi}{\sqrt{3}}|\nu|, \label{equ:lemma5snd}\\
        \sum_{n\in\caD_{\res}}|x^{\mathrm{R}}_{\nu,n}| \le \pi^2|\nu|\log N. \label{equ:lemma5trd}
    \end{gather}
\end{lemma}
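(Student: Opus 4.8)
The plan is to reduce all three estimates to a single observation: a grid shift acts as a pointwise modulation in the time domain. Writing $(F_\nu)_{nk}=e^{-\rmi2\pi(k+\nu)n/N}=e^{-\rmi2\pi\nu n/N}F_{nk}$, the grid decomposition of $y=Fx$ with respect to $\nu$ is $x^\rmR_\nu+\rmi x^\rmI_\nu=F^{-1}(\Lambda_\nu y)$, where $\Lambda_\nu$ is the diagonal modulation $(\Lambda_\nu y)_n=e^{\rmi2\pi\nu n/N}y_n$. Since $x=F^{-1}y$ is real, setting $w_n\equiv(e^{\rmi2\pi\nu n/N}-1)y_n$ gives $x^\rmR_\nu-x=\mathrm{Re}(F^{-1}w)$ and $x^\rmI_\nu=\mathrm{Im}(F^{-1}w)$. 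First I would establish the two Parseval-type identities $\sum_k|(F^{-1}w)_k|^2=\tfrac1N\|w\|_2^2=\tfrac1N\sum_n4\sin^2(\pi\nu n/N)|y_n|^2=:A$ and $\sum_k(F^{-1}w)_k^2=\tfrac1N\sum_n w_n w_{N-n}$, the latter using $\sum_k e^{\rmi2\pi k(n+n')/N}=N\,\mathbbm{1}[n+n'\equiv 0]$.

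For the two upper bounds (\eqref{equ:lemma5fst} right-hand side and \eqref{equ:lemma5snd}) I would simply use $\|x^\rmI_\nu\|_2,\|x^\rmR_\nu-x\|_2\le\|F^{-1}w\|_2=\sqrt A$ and then bound $A$ crudely: with $\sin^2(\pi\nu n/N)\le(\pi\nu n/N)^2$, $|y_n|\le\|x\|_1\le1$, and $\tfrac1{N^3}\sum_{n=0}^{N-1}n^2=\tfrac{(N-1)(2N-1)}{6N^2}\le\tfrac13$, one gets $A\le\tfrac{4\pi^2\nu^2}{3}\|x\|_1^2\le\tfrac{4\pi^2}{3}\nu^2$, which is exactly the claimed $\tfrac{2\pi}{\sqrt3}|\nu|$.

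The lower bound in \eqref{equ:lemma5fst} is the delicate part. Using $\mathrm{Im}(z)^2=\tfrac12|z|^2-\tfrac12\mathrm{Re}(z^2)$ summed over $k$ together with the two identities above yields the exact formula $\|x^\rmI_\nu\|_2^2=\tfrac12(A-B)$ with $B=\tfrac1N\mathrm{Re}\sum_n w_n w_{N-n}$. A short trigonometric simplification (writing $\phi=2\pi\nu$, $\psi_n=2\pi\nu n/N$) collapses the summand to $(1-\cos\phi)(1-\cos\psi_n)+\sin\phi\sin\psi_n$, every term of which is nonnegative because $|\nu|\le\tfrac12$ forces $\phi,\psi_n\in[0,\pi)$; hence $\|x^\rmI_\nu\|_2^2\ge\tfrac12\sin(2\pi\nu)\cdot\tfrac1N\sum_n\sin(2\pi\nu n/N)|y_n|^2$. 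The engine of the bound is then the exact first-moment identity, which I would prove by the conjugate-pairing trick $\tfrac1{\omega-1}+\tfrac1{\bar\omega-1}=-1$ for $N$-th roots of unity: $\tfrac1{N^2}\sum_n n|y_n|^2=\tfrac12\|x\|_2^2-\tfrac1{2N}(\sum_k x_k)^2$. Combining this with $\sin t\ge t-t^3/6$, $|\sum_k x_k|\le\|x\|_1$, and $|y_n|\le\|x\|_1$ to control the cubic remainder produces a lower bound of the form $[(4+2\pi^2/N)\|x\|_2^2-2\pi^2\|x\|_1^2/N]\nu^2=C[x]^2\nu^2$; the generous gap between the true leading coefficient $\pi^2\|x\|_2^2$ and the target $4\|x\|_2^2$ gives the slack needed to absorb the remainders for $N\ge100$.

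Finally, for \eqref{equ:lemma5trd} I would abandon the $2$-norm machinery (which would cost a spurious $\sqrt N$) and work entrywise through the circulant kernel $x^\rmR_{\nu,n}=\sum_{k'}x_{k'}\mathrm{Re}(K_{n-k'})$, $K_m=\tfrac1N\tfrac{e^{\rmi2\pi\nu}-1}{e^{\rmi2\pi(m+\nu)/N}-1}$. The estimates $|\sin\pi\nu|\le\pi|\nu|$ and $|\sin(\pi(m+\nu)/N)|\ge2|m+\nu|/N$ give the decay $|K_m|\le\pi|\nu|/(2|m+\nu|)$; summing over the residual positions $n\in\caD_\res$ turns this into a harmonic sum $\sum_m 1/|m|\sim\log N$ against the dominant weight $\sum_{k'\in\caD}|x_{k'}|\le1$, while the residual self-contribution is controlled by $\|x_\res\|_1=\caO(\soff)$; together these yield the claimed $\pi^2|\nu|\log N$. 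I expect the main obstacle to be the lower bound: carrying the exact moment identity and the sign analysis through with enough care that the clean closed form $C[x]$ emerges, rather than merely an asymptotic $\caO(\nu^2)$ statement.
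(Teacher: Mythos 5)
Your route is genuinely different from the paper's: you work in the time domain, writing the grid shift as a pointwise modulation $w_n=(e^{\rmi 2\pi\nu n/N}-1)y_n$ and invoking Parseval, whereas the paper works in the frequency domain, expanding $x^{\rmR}_\nu,x^{\rmI}_\nu$ over shifted Dirichlet-kernel vectors $\boldsymbol{c}^{(n)}_\nu,\boldsymbol{s}^{(n)}_\nu$ and computing their Gram matrix exactly. Your treatment of the two upper bounds and of \eqref{equ:lemma5trd} is correct and arguably cleaner (the Parseval bound $A\le \tfrac{4\pi^2}{3}\nu^2\|x\|_1^2$ replaces several kernel estimates, and your circulant kernel $K_m$ is exactly the paper's $D_N(m+\nu)$ up to phase). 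Your exact identity $\|x^{\rmI}_\nu\|_2^2=\tfrac{1}{2N}\sum_n|y_n|^2\left[(1-\cos\phi)(1-\cos\psi_n)+\sin\phi\sin\psi_n\right]$ and the first-moment identity $\tfrac{1}{N^2}\sum_n n|y_n|^2=\tfrac12\|x\|_2^2-\tfrac{1}{2N}(\sum_k x_k)^2$ are both correct (the former reproduces the paper's $(1-1/N)\sin^2(\pi\nu)$ for $x=\delta_0$).

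The gap is in the lower bound of \eqref{equ:lemma5fst}. You discard the nonnegative term $(1-\cos\phi)(1-\cos\psi_n)$ and keep only $\tfrac{1}{2N}\sin\phi\sum_n\sin\psi_n|y_n|^2$, but this retained piece vanishes identically at $|\nu|=1/2$ (where $\sin\phi=\sin(2\pi\nu)=0$) while $C[x]^2\nu^2>0$ there, so it cannot prove the claim on the full range $|\nu|\le 1/2$. The failure is not confined to the endpoint: for $x=\delta_0$ the retained term is asymptotically $\sin\phi(1-\cos\phi)/(2\phi)$, which at $\phi=2$ (i.e.\ $\nu\approx 0.318$) equals $\approx 0.322 < 4\nu^2\approx 0.405=C[\delta_0]^2\nu^2$; relatedly, the remainder control $\sin t\ge t-t^3/6$ only yields a useful factor $(1-\phi^2/6)^2\ge 4/\pi^2$ for $|\nu|\lesssim 0.24$. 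To close this you must retain both nonnegative terms — e.g.\ via the product form $(1-\cos\phi)(1-\cos\psi)+\sin\phi\sin\psi=4\sin\tfrac{\psi}{2}\sin\tfrac{\phi}{2}\cos\tfrac{\phi-\psi}{2}$ together with a cosine/second-moment identity — or follow the paper's computation, which evaluates the Gram matrix exactly as $\mathcal{M}_{n,n}=(1-1/N)\sin^2(\pi\nu)$ and $\mathcal{M}_{n,m}=-2\sin^2(\pi\nu)/N$, so that every term carries the common factor $\sin^2(\pi\nu)\ge 4\nu^2$, an inequality valid on all of $|\nu|\le 1/2$; one then bounds the off-diagonal contribution by $-2(\|x\|_1^2-\|x\|_2^2)\sin^2(\pi\nu)/N$ with $\sin^2(\pi\nu)\le\pi^2\nu^2$ to land exactly on $C[x]^2\nu^2$.
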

The proof of Lemma~\ref{lemma:bounds} is presented in Appendix~\ref{app:proof_of_lemma5}, which is a direct corollary of Lemma~\ref{lemma5:first part} and Lemma~\ref{lemma5:second part}.

 In Appendix~\ref{app:proof_of_lemma2}, we prove the following lemma.
\begin{lemma}[A good $\nu$ generates a good solution]
    Suppose $y^0$ is a length-$N$ signal with optimal grid decomposition $y^0 = F_{\mathfrak{u}}(x_{\on}  + \rmi x_\off), \soff = \|F_{\mathfrak{u}}x_\off\|_{\infty}$; the target signal is $y = y^0 + z$ with $\|z\|_{\infty} \le \sH$; let
    
    \begin{equation}\label{equ:C0C3}C_0 = (\soff + \sH)/\sigma,\quad C_3 = 2C_1 + C_2 \pi +\frac{2\pi}{\log N}\sqrt{\frac{S}{3}};\end{equation}
    $\mathcal{T}$ is an integer set generated by sampling from $[N]$ with ratio 
    \begin{equation}
        r = \mathcal{O}\left(S \log^2 (S) \mathrm{poly}\log(N) N^{-1}\right),
    \end{equation}
    and $s_\nu$ is the solution of Eq.~(\ref{equ:compressed_sensing}).
    If
    \begin{equation}\label{equ:lemma5condition}
        |\nu| \le \frac{\sqrt{S}\sigma}{\log N},\quad C_0 \le 1 - \frac{4\pi}{\log N}\sqrt{\frac{S}{3}},
    \end{equation}
    then and with probability at least $1 - 1/\poly(N)$, we have
    \begin{equation} \label{equ:lemma3conclusion}
        \|s_{\nu}\|_1 \le \|x^\rmR_\nu\|_1,\quad \|s_{\nu} - x_{\mathrm{on}}\|_2 \le C_3\sigma.
    \end{equation}
\label{lemma:good_nu}
\end{lemma}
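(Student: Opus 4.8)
The plan is to establish the two conclusions of \cref{lemma:good_nu} in sequence: first the $\ell_1$ comparison $\|s_\nu\|_1 \le \|x^\rmR_\nu\|_1$, and then feed it into the robust recovery bound of \cref{theorem:compressed_sensing}. The whole argument rests on a single algebraic identity: substituting $y = y^0_\on + y^0_\off + z$ and $y^0_\on = F_\nu(x^\rmR_\nu + \rmi x^\rmI_\nu)$ gives $F_\nu x^\rmR_\nu - y = -\rmi F_\nu x^\rmI_\nu - y^0_\off - z$. Projecting onto $\caT$ and applying the triangle inequality splits the constraint residual into an off-grid piece, a noise piece, and an ``imaginary-component'' piece, each of which I would bound separately.

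For the first conclusion I would show that $x^\rmR_\nu$ is feasible for the program in Eq.~(\ref{equ:compressed_sensing}). The off-grid and noise pieces are immediate: $\|\caP_\caT y^0_\off\|_2 \le \sqrt{|\caT|}\,\soff$ and $\|\caP_\caT z\|_2 \le \sqrt{|\caT|}\,\sH$, which together contribute $\sqrt{|\caT|}\,C_0\sigma$. The remaining piece $\|\caP_\caT F_\nu x^\rmI_\nu\|_2$ I would control using the $\ell_2$ estimate $\|x^\rmI_\nu\|_2 \le \tfrac{2\pi}{\sqrt 3}|\nu|$ from \cref{lemma:bounds} together with the randomness of $\caT$. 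Since $|\nu| \le \sqrt S\sigma/\log N$, this gives $\|x^\rmI_\nu\|_2 \le \tfrac{2\pi}{\log N}\sqrt{S/3}\,\sigma$, and the hypothesis $C_0 \le 1 - \tfrac{4\pi}{\log N}\sqrt{S/3}$ is arranged precisely so that the budget $(1-C_0)\sqrt{|\caT|}\sigma$ left for the imaginary piece is at least twice the expected magnitude $\sqrt{|\caT|}\,\|x^\rmI_\nu\|_2$ of $\|\caP_\caT F_\nu x^\rmI_\nu\|_2$. Summing the three pieces then yields $\|\caP_\caT(F_\nu x^\rmR_\nu - y)\|_2 \le \sqrt{|\caT|}\sigma$, so $x^\rmR_\nu$ is feasible and, as $s_\nu$ is the minimum-$\ell_1$ feasible point, $\|s_\nu\|_1 \le \|x^\rmR_\nu\|_1$.

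For the second conclusion I would invoke \cref{theorem:compressed_sensing} with $M = F_{\nu,\caT}/\sqrt{|\caT|}$, which satisfies $\eta$-RIP over $2S$-sparse vectors by \cref{theorem:rip} at the stated sampling ratio $r$, taking $x_1 = s_\nu$ and $x_2 = x^\rmR_\nu$. The hypothesis $\|x_1\|_1 \le \|x_2\|_1$ is the first conclusion, and $\|M(s_\nu - x^\rmR_\nu)\|_2 \le 2\sigma$ follows from the triangle inequality since both $s_\nu$ and $x^\rmR_\nu$ satisfy the constraint. This produces $\|s_\nu - x^\rmR_\nu\|_2 \le 2C_1\sigma + C_2\,\|x^\rmR_{\nu,\res}\|_1/\sqrt S$. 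The condition $p_{\min} \gg S\soff$ guarantees, via \cref{lemma:recover} and Eq.~(\ref{equ:xon}), that the $S$ largest entries of $x^\rmR_\nu$ lie on $\caD$, so $x^\rmR_{\nu,\res}$ is supported off $\caD$ and $\|x^\rmR_{\nu,\res}\|_1 \le \pi^2|\nu|\log N \le \pi^2\sqrt S\sigma$ by \cref{lemma:bounds}; hence the residual term is $\caO(\sigma)$. Finally the triangle inequality with $\|x^\rmR_\nu - x_\on\|_2 \le \tfrac{2\pi}{\sqrt3}|\nu| \le \tfrac{2\pi}{\log N}\sqrt{S/3}\,\sigma$ (again \cref{lemma:bounds}, using $x_\on = x^\rmR_\fku$) assembles the three contributions into the constant $C_3 = 2C_1 + C_2\pi + \tfrac{2\pi}{\log N}\sqrt{S/3}$, giving $\|s_\nu - x_\on\|_2 \le C_3\sigma$.

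The step I expect to be the main obstacle is the bound on $\|\caP_\caT F_\nu x^\rmI_\nu\|_2$ inside the feasibility argument. The vector $x^\rmI_\nu$ is \emph{not} sparse: shifting off the optimal grid spreads its frequency content into a slowly decaying, Dirichlet-kernel tail whose $\ell_1$ norm scales like $|\nu|\log N$. Consequently the sparse RIP of \cref{theorem:rip} does not apply to $x^\rmI_\nu$ directly, and both the general-vector RIP bound (which pays a $\|x^\rmI_\nu\|_1/\sqrt S$ penalty) and the crude estimate $\sqrt{|\caT|}\,\|F_\nu x^\rmI_\nu\|_\infty$ overshoot the tight budget by a factor of order $\log N/\sqrt S$. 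The correct route is to show that the random partial sum $\sum_{t\in\caT}|(F_\nu x^\rmI_\nu)_t|^2$ concentrates around its mean $|\caT|\,\|x^\rmI_\nu\|_2^2$ with probability $1 - 1/\poly(N)$, which holds once the sparse sample size $|\caT| = \caO(\poly\log N)$ is large enough relative to $\|F_\nu x^\rmI_\nu\|_\infty^2/\|x^\rmI_\nu\|_2^2$. This is exactly where the $\ell_2$ control of \cref{lemma:bounds} and the deliberate factor-of-two slack built into the $C_0$ hypothesis are consumed.
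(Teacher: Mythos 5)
Your proposal follows essentially the same route as the paper's proof: the identity $F_{\nu,\caT}x^\rmR_\nu - y_\caT = -\rmi F_{\nu,\caT}x^\rmI_\nu - y^0_{\off,\caT} - z_\caT$, a concentration bound showing $\|F_{\nu,\caT}x^\rmI_\nu\|_2 \le 4\pi\sqrt{|\caT|/3}\,|\nu|$ with high probability (the paper's Lemma~\ref{lemma:concentration}, proved via Bernstein's inequality using the explicit form $|(F_\nu x^\rmI_\nu)_t| = |\sin[\pi(\nu-\fku)]|\,|y_{\on,t}|$), feasibility of $x^\rmR_\nu$ from the $C_0$ hypothesis, and then Theorems~\ref{theorem:rip} and~\ref{theorem:compressed_sensing} together with Lemma~\ref{lemma:bounds} to assemble $C_3$. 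You correctly isolated the one genuinely delicate step — the non-sparsity of $x^\rmI_\nu$ forcing a concentration argument rather than an RIP or $\ell_\infty$ bound — which is exactly where the paper spends its effort.
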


In Appendix~\ref{app:proof_of_lemma3}, we prove the following lemma.
\begin{lemma}[A bad $\nu$ generates a bad solution]
    Suppose $y^0$ is a length-$N$ signal with optimal grid decomposition $y^0 = F_{\mathfrak{u}}(x_{\on}  + \rmi x_\off), \soff = \|F_{\mathfrak{u}}x_\off\|_{\infty}$ and satisfies Eq.~(\ref{equ:assumption}); the target signal is $y = y^0 + z$ with $\|z\|_\infty \le \sH$;
    $\mathcal{T}$ is an integer set generated by sampling from $[N]$ with ratio 
    \begin{equation}
        r = \mathcal{O}\left(S \log^2 (S)  \mathrm{poly}\log(N) N^{-1}\right),
    \end{equation}
    and $s_\nu$ is the solution of Eq.~(\ref{equ:compressed_sensing}); $\mathcal{I}$ is an integer set generated by sampling from $[N]$ with equal probability for $L$ times; let $\sigma_\test$ be a constant with lower bound
    \begin{equation} \label{equ:sigmatestlowerbound}
        \sigma_\test \ge \sqrt{\frac{3}{2}}\left[\left(C_3^2 + \frac{8\pi^2 S}{3\log^2 N}\right)^{1/2} + C_0\right]\sigma.
    \end{equation}
    If $|\nu| \le \sqrt{S}\sigma/\log N$, then with probability at least $(1 - e^{-L/2})(1 - 1/\poly(N))$,
    \begin{equation} \label{equ:lemma6result1}
        \sum_{t\in\mathcal{I}}\left|y_t - \left(F_\nu s_\nu\right)_t\right|^2 \le L\sigma^2_{\mathrm{test}},
    \end{equation}
    If instead
    \begin{equation} \label{equ:lemma6nulowerbound}
        |\nu| > C[x_{\on}]^{-1}(\sqrt{2}\sigma_\test + C_0\sigma),
    \end{equation}
    then with probability at least $(1 - e^{-L/2})(1 - 1/\poly(N))$, we have
        \begin{equation} \label{equ:lemma6result2}
        \sum_{t\in\mathcal{I}}\left|y_t - \left(F_\nu s_\nu\right)_t\right|^2 > L\sigma^2_{\mathrm{test}}.
    \end{equation}
\label{lemma:bad_nu}
\end{lemma}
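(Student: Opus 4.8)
The plan is to read the test statistic $\sum_{t\in\mathcal I}|y_t-(F_\nu s_\nu)_t|^2$ as an empirical estimate, over the fresh i.i.d.\ uniform sample $\mathcal I$ of size $L$, of the \emph{population} mean-square residual $\mu_\nu\equiv \frac1N\|y-F_\nu s_\nu\|_2^2$. Since $\mathcal I$ is drawn independently of $s_\nu$ (which depends only on $\caT$ and the first batch of Hadamard data), I would first bound $\mu_\nu$ deterministically in the two regimes, and then transfer the bound to the sampled sum by concentration. Both claims Eq.~(\ref{equ:lemma6result1}) and Eq.~(\ref{equ:lemma6result2}) then reduce to separating the good-case value of $\mu_\nu$ from the bad-case value by a constant factor straddling $\sigma_\test^2$, with the constants $\sqrt{3/2}$ in Eq.~(\ref{equ:sigmatestlowerbound}) and $\sqrt2$ in Eq.~(\ref{equ:lemma6nulowerbound}) precisely engineered to leave room for the concentration slack.

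The common starting point is a residual decomposition in the shifted basis. Writing $x^{\rmR}_\nu,x^{\rmI}_\nu$ as in Eq.~(\ref{equ:gird_decomposition}) and using that $F_\nu/\sqrt N$ is unitary (the phase $\nu$ cancels in $F_\nu^\dagger F_\nu=NI$),
\[
y-F_\nu s_\nu=F_\nu\big[(x^{\rmR}_\nu-s_\nu)+\rmi x^{\rmI}_\nu\big]+(y^0_\off+z),
\]
so $\tfrac1{\sqrt N}\|y-F_\nu s_\nu\|_2$ equals $\big\|(x^{\rmR}_\nu-s_\nu)+\rmi x^{\rmI}_\nu\big\|_2$ up to the additive slack $\tfrac1{\sqrt N}\|y^0_\off+z\|_2\le \soff+\sH=C_0\sigma$ (using $\|y^0_\off\|_2\le\sqrt N\soff$ and $\|z\|_\infty\le\sH$, cf.\ Eq.~(\ref{equ:C0C3})). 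Because $s_\nu,x^{\rmR}_\nu,x^{\rmI}_\nu$ are real, the parts decouple: $\|(x^{\rmR}_\nu-s_\nu)+\rmi x^{\rmI}_\nu\|_2^2=\|x^{\rmR}_\nu-s_\nu\|_2^2+\|x^{\rmI}_\nu\|_2^2$. In the good regime $|\nu|\le\sqrt S\sigma/\log N$, Lemma~\ref{lemma:good_nu} gives $\|s_\nu-x_{\on}\|_2\le C_3\sigma$, while Lemma~\ref{lemma:bounds} gives $\|x^{\rmR}_\nu-x_{\on}\|_2\le\frac{2\pi}{\sqrt3}|\nu|$ and $\|x^{\rmI}_\nu\|_2\le\frac{2\pi}{\sqrt3}|\nu|$; feeding in $|\nu|\le\sqrt S\sigma/\log N$ and absorbing cross terms bounds the bracket by $\big(C_3^2+\tfrac{8\pi^2S}{3\log^2N}\big)^{1/2}\sigma$, so $\sqrt{\mu_\nu}\le\big[(C_3^2+\tfrac{8\pi^2S}{3\log^2N})^{1/2}+C_0\big]\sigma=\sqrt{2/3}\,\sigma_\test$, i.e.\ $\mu_\nu\le\tfrac23\sigma_\test^2$. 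In the bad regime I would use that a real $s_\nu$ cannot touch the imaginary component, so $\|(x^{\rmR}_\nu-s_\nu)+\rmi x^{\rmI}_\nu\|_2\ge\|x^{\rmI}_\nu\|_2$; then the lower bound $\|x^{\rmI}_\nu\|_2\ge C[x_{\on}]|\nu|$ of Lemma~\ref{lemma:bounds} together with the hypothesis Eq.~(\ref{equ:lemma6nulowerbound}) yields $\sqrt{\mu_\nu}\ge\|x^{\rmI}_\nu\|_2-C_0\sigma>\sqrt2\sigma_\test$, i.e.\ $\mu_\nu>2\sigma_\test^2$.

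The transfer step is then concentration. Conditioned on the probability-$(1-1/\poly(N))$ RIP event that makes Lemma~\ref{lemma:good_nu} and the displayed bounds valid, $\sum_{t\in\mathcal I}|y_t-(F_\nu s_\nu)_t|^2$ is a sum of $L$ i.i.d.\ bounded nonnegative variables with mean $L\mu_\nu$. A multiplicative Chernoff/Bernstein bound with the factor-$\tfrac32$ (upper) and factor-$\tfrac12$ (lower) deviations gives, with probability $1-e^{-L/2}$, that the empirical mean stays below $\tfrac32\mu_\nu\le\sigma_\test^2$ in the good case (yielding Eq.~(\ref{equ:lemma6result1})) and above $\tfrac12\mu_\nu>\sigma_\test^2$ in the bad case (yielding Eq.~(\ref{equ:lemma6result2})). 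Multiplying the two independent success probabilities gives the stated $(1-e^{-L/2})(1-1/\poly(N))$.

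I expect the main obstacle to be the bad-case population lower bound: the crux is the geometric fact that $F_\nu s_\nu$ with \emph{real} $s_\nu$ leaves the off-grid imaginary mass $\rmi F_\nu x^{\rmI}_\nu$ essentially uncancelled, so that no choice of $s_\nu$ — even one that overfits perfectly on $\caT$ where $F_{\nu,\caT}$ need not satisfy RIP on the non-sparse difference — can make the residual small on a \emph{fresh} sample. Turning this into the clean inequality $\sqrt{\mu_\nu}>\sqrt2\sigma_\test$ requires the sharp estimate $\|x^{\rmI}_\nu\|_2\ge C[x_{\on}]|\nu|$ from Lemma~\ref{lemma:bounds} and careful bookkeeping so that the constant-factor gap between $\tfrac23\sigma_\test^2$ and $2\sigma_\test^2$ survives both the $C_0\sigma$ noise/off-grid slack and the Chernoff slack. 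A secondary technical point is boundedness for the concentration step: when the convex program is infeasible and $s_\nu$ is reset to the all-ones vector, the per-site error is $\caO(N^2)$ rather than $\caO(1)$, but in that case $\mu_\nu$ is so large that the bad-case conclusion holds trivially, so this case should be peeled off before invoking the concentration inequality.
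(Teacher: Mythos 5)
Your proposal is correct and follows essentially the same route as the paper: the identical residual decomposition $y-F_\nu s_\nu=F_\nu[(x^{\rmR}_\nu-s_\nu)+\rmi x^{\rmI}_\nu]+(y^0_\off+z)$, the same invocation of Lemma~\ref{lemma:good_nu} and Lemma~\ref{lemma:bounds} for the upper bound and of $\|x^{\rmI}_\nu\|_2\ge C[x_{\on}]|\nu|$ for the lower bound, and the same multiplicative Chernoff--Hoeffding transfer (Lemma~\ref{lemma:Hoeffding}) with factors $3/2$ and $1/2$. Your closing remark about peeling off the infeasible all-ones case before applying the concentration inequality is a boundedness detail the paper's proof does not explicitly address, but it does not change the argument.
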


The list of constants used in the proof are as follows.
\begin{table}[h!]
    \centering
    \caption{Constants used in the proof.}
    \begin{tabular}{ c c }
    \hline
    \hline
       Constant  &  Meaning \\
       \hline
        $C_0$ & $(\soff + \sH)/\sigma$ \\
        $C_1$ & Eq.~(\ref{equ:C1C2}) \\
        $C_2$ & Eq.~(\ref{equ:C1C2}) \\
        $C_3$ & $C_1 + C_2 \pi + 2\pi\sqrt{S}/(\sqrt{3}\log N)$ \\
        $\pi_N$ & $\pi(1-N^{-1})$ \\
    \hline
    \hline
    \end{tabular}
\label{table:constants}
\end{table}

\subsection{Proof of Lemma~\ref{lemma:off_grid}}
\label{app:proof_of_lemma1}

Without loss of generality, we assume $\fku = 0$. After computation, we have
\begin{equation}
    \begin{aligned}
        y_{\off,t} &= \begin{cases}
            -\rmi\sum_{f\in\caF}p_f \sin(\pi \nu_f) e^{\rmi \pi \nu_f}e^{-\rmi 2\pi ft}\quad & t\neq 0\\
            0\quad & t = 0
        \end{cases}\\
        y_{\on,t} &= \begin{cases}
            \sum_{f\in\caF}p_f \cos(\pi \nu_f) e^{\rmi \pi\nu_f}e^{-\rmi 2\pi ft}\quad & t \neq 0\\
            1 & t = 0
        \end{cases}\\
        x_{\on,k} &= \sum_{f\in\caF}p_f \cos[\pi_N(k-Nf)] D_N\left(k-Nf\right),\\
        x_{\off,k} &= \sum_{f\in\caF}p_f \sin[\pi_N (k-N\nu_f)] D_N(k-Nf).
    \end{aligned}
\end{equation}
Let $\caD$ denote the integer part of frequencies in $\caF_\dom$. For each $n\in \caD$, define $f_n$ as the frequency that is closest to it, and
\begin{equation}
    q_n \equiv p_{f_n} \sin(\pi\nu_{f_n})e^{\rmi\pi\nu_{f_n}}.
\end{equation}
Hence,
\begin{equation}
    \sum_{n\in\caD}q_n e^{-\rmi 2\pi ft} = \rmi y_{\off,t} - \sum_{f\in\caF_\res}p_f \sin(\pi\nu_f)e^{\rmi\pi\nu_f}e^{-\rmi 2\pi ft} = \caO(\sigma_\off).
\end{equation}
Consider an approximate Fourier matrix $\tilde{F}$, such that
\begin{equation}
    \tilde{F}_{nk} = \begin{cases}
        e^{-\rmi 2\pi nk/N} & n\not\neq \caD,\\
        e^{-\rmi 2\pi f_nt} & n\in \caD. 
    \end{cases}
\end{equation}
Note that $|Nf_n - n'| \ge 1/2 \ \forall n'\not\in \caD$. Hence, 
\begin{equation}
     |p_f\sin(\pi\nu_f)| = |q_n| = \caO(\sigma_\off)\quad \forall n\in \caD.
\end{equation}

\subsection{\label{app:proofofrecoverlemma}Proof of Lemma~\ref{lemma:recover}}

Suppose the signal has optimal grid decomposition (without loss of generality, we assume $\fku = 0$)
\begin{equation}
    y = F^{-1}(x_{\on} + \rmi x_{\off}),\quad x_{\on},x_{\off}\in \mathbb{R}^N.
\end{equation}
After the computation, we obtain the following.
\begin{gather}
    x_{\on,n} = \sum_{f\in\caF} p_f \cos[\pi_N(n-Nf)]D_N(n-Nf).
\end{gather}
The optimal on-grid condition $\|y_\off\|_{\infty} = \sigma_\off$ requires that $\|x_{\off}\|_2 \le \sigma_\off$. For each $n\in \caD$, define $f_n$ as the frequency that is closest to it, hence
\begin{equation}
    \begin{aligned}
        x_{\on,n} &= \sum_{f\in\caF} p_f \cos[\pi_N(n-Nf)]D_N(n-Nf)\\
        &= p_{f_n}\cos[\pi_N \nu_{f_n}]D_N(\nu_{f_n}) + \sum_{f\in\caF_\res} p_f \cos[\pi_N(n-Nf)]D_N(n-Nf)\\
        &\quad + \sum_{m\neq n\in\caD}p_{f_m} \cos[\pi_N(m-n-\nu_{f_m})]D_N(m-n-\nu_{f_m})
    \end{aligned}
\end{equation}
Note that when $N \ge 100, |x| \ge 1/2$, we have $\pi\cos(\pi_N x)/2x \in [-1.1,0.55]$. Therefore,
\begin{align}
    p_{f_n}\cos[\pi_N \nu_{f_n}]D_N(\nu_{f_n}) &\ge p_{f_n} \left(1 - \frac{1}{2}\pi_N^2 \nu_{f_n}^2\right)\left(1 - \frac{\pi}{3}\nu_{f_n}^2\right),\\
    p_{f_m} \cos[\pi_N(n-Nf)]D_N(n-Nf) &\ge -\pi p_{f_m}|\nu_{f_m}|\left|\frac{\cos[\pi_N(n-Nf)}{N\sin[\pi(n-Nf)/N]}\right| \nonumber \\
    &\ge -1.1|p_{f_m}\nu_{f_m}|,\\
    \sum_{f\in\caF_\res} p_f \cos[\pi_N(n-Nf)]D_N(n-Nf) &\ge -\sum_{f\in\caF_\res}p_f = -\caO(\soff).        
\end{align}
Recall that $\forall f\in\caF_\dom, |p_f\nu_f| = \caO(\sigma_\off)$ (Lemma \ref{lemma:off_grid}). Hence,
\begin{equation}
    x_{\on,n} \ge p_{f_n} - \caO(S\soff).
\end{equation}
Using the same argument, we can show that for all $n\neq \caD$, $x_{\on,n} \le \caO(S\sigma_\off)$.

\subsection{Proof of Lemma~\ref{lemma:good_nu}}
\label{app:proof_of_lemma2}

To prove the lemma, we first need a sufficient condition for $s = x^{\mathrm{R}}_\nu$ to be a feasible solution of Eq.~(\ref{equ:compressed_sensing}), i.e., 
\begin{equation}
    \|F_{{\nu},\mathcal{T}}x^{\mathrm{R}}_\nu - y_\mathcal{T}\|_2 \le \sqrt{|\mathcal{T}|}\sigma,
\end{equation} so that we can bound $\|s_\nu - x^{\mathrm{R}}_\nu\|_2$. Note that
\begin{equation}
    F_{{\nu},\mathcal{T}}x^{\mathrm{R}}_\nu - y_\mathcal{T} = - \rmi F_{\nu,\mathcal{T}}x^{\mathrm{I}}_\nu - y^0_{\mathrm{off},\mathcal{T}} - z_{\mathcal{T}},
\end{equation} 
where $\|y^0_{\mathrm{off},\mathcal{T}}\|_2 \le \sqrt{|\mathcal{T}|}\soff, \|z_{\mathcal{T}}\|_2 \le \sqrt{|\mathcal{T}|}\sH$, and $\|F_{\nu,\mathcal{T}}x^{\mathrm{I}}_\nu\|_2$ has the following upper bound.
\begin{lemma}[Concentration of $\|F_{\nu,\mathcal{T}}x^{\mathrm{I}}_\nu\|_2$]\label{lemma:concentration}
    Suppose $\mathcal{T}$ is an integer set in $[N]$ generated by sampling ratio $r = \mathcal{O}(N^{-1}S^2\poly\log N)$. Then
    \begin{equation}
        \|F_{\nu,\mathcal{T}}x^{\mathrm{I}}_\nu\|_2 \le 4\pi\sqrt{
        \frac{|\mathcal{T}|}{3}}|\nu|
    \end{equation}
    with probability at least $1 - 1/\mathrm{poly}(N)$.
\end{lemma}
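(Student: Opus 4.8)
The plan is to write the squared quantity as a sum of independent random variables, fix its mean exactly, and then control the upper tail by a Bernstein-type bound; the single nontrivial input is an estimate of the ``effective sparsity'' of $x^{\mathrm{I}}_\nu$. Let $1_n\in\{0,1\}$ be the independent Bernoulli$(r)$ sampling indicators and set $a_n \equiv |(F_\nu x^{\mathrm{I}}_\nu)_n|^2$, so that $\|F_{\nu,\mathcal{T}}x^{\mathrm{I}}_\nu\|_2^2 = \sum_{n\in[N]} 1_n a_n$ is a sum of independent terms with $0\le 1_n a_n \le a_n$. The columns of $F_\nu$ are orthogonal, $F_\nu^\dagger F_\nu = N I$, so the total energy is $\sum_n a_n = \|F_\nu x^{\mathrm{I}}_\nu\|_2^2 = N\|x^{\mathrm{I}}_\nu\|_2^2$, and hence
\[
\mu \equiv \E\Big[\textstyle\sum_n 1_n a_n\Big] = rN\|x^{\mathrm{I}}_\nu\|_2^2 \le \frac{4\pi^2}{3}\,rN\,|\nu|^2 ,
\]
where the inequality uses $\|x^{\mathrm{I}}_\nu\|_2 \le \tfrac{2\pi}{\sqrt 3}|\nu|$ from Lemma~\ref{lemma:bounds}. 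Since $|\mathcal{T}|\approx rN$, the mean already sits a factor of roughly $4$ below the target $\tfrac{16\pi^2}{3}|\mathcal{T}||\nu|^2$, so it suffices to show the sum does not exceed twice its mean and that $|\mathcal{T}|$ is not much smaller than $rN$.

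For the upper tail I would apply Bernstein's inequality. With $A \equiv \max_n a_n$, the summands are bounded by $A$ and the variance is $\sum_n r(1-r)a_n^2 \le A\sum_n r a_n = A\mu$, so
\[
\pr\Big[\textstyle\sum_n 1_n a_n > 2\mu\Big] \le \exp\!\Big(-\frac{\mu^2/2}{A\mu + A\mu/3}\Big) = \exp\!\Big(-\frac{3\mu}{8A}\Big),
\]
which is $1/\poly(N)$ as soon as $\mu/A = \Omega(\log N)$. To control $A$, note that each entry of $F_\nu x^{\mathrm{I}}_\nu$ is a real-weighted sum of unit-modulus phases, giving $A \le \|x^{\mathrm{I}}_\nu\|_1^2$. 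Combining this with the lower bound $\|x^{\mathrm{I}}_\nu\|_2 \ge C[x]|\nu|$ from Lemma~\ref{lemma:bounds} yields
\[
\frac{\mu}{A} \ge \frac{rN\,\|x^{\mathrm{I}}_\nu\|_2^2}{\|x^{\mathrm{I}}_\nu\|_1^2} \ge \frac{rN\,C[x]^2|\nu|^2}{\|x^{\mathrm{I}}_\nu\|_1^2}.
\]

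The decisive step is therefore an $\ell_1$ estimate of $x^{\mathrm{I}}_\nu$. Because $y^0_{\mathrm{on}}$ is exactly on grid, $x^{\mathrm{I}}_\nu$ is pure leakage, with entries following the Dirichlet profile $x^{\mathrm{I}}_{\nu,k}\sim p_f\sin(\pi_N\nu)\,D_N(k-n_f)$ already analysed in the proof of~(\ref{equ:lemma5trd}); summing the $\sim |\nu|/|k-n_f|$ tails gives $\ell_1$ mass $\caO(p_f|\nu|\log N)$ per dominant frequency and thus $\|x^{\mathrm{I}}_\nu\|_1 = \caO(|\nu|\log N)$ in total, since $\sum_f p_f \approx 1$. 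Using $C[x]^2 = \Omega(\|x_{\mathrm{on}}\|_2^2) = \Omega(1/S)$ (guaranteed by $\|x_{\mathrm{on}}\|_2 \gg \|x_{\mathrm{on}}\|_1/\sqrt N$) together with $rN = \Omega(S^2\poly\log N)$, the $|\nu|^2$ factors cancel and $\mu/A = \Omega\big(S\poly\log N/\log^2 N\big)$, which exceeds $\log N$ once the $\poly\log N$ factor in $r$ is taken at least cubic in $\log N$. I expect this leakage/effective-sparsity estimate to be the main obstacle: $x^{\mathrm{I}}_\nu$ is genuinely non-sparse, so the crude bound $A\le\|x^{\mathrm{I}}_\nu\|_1^2$ must be paired with a tight $\ell_1$ estimate, otherwise the required sampling ratio would exceed the claimed $\caO(N^{-1}S^2\poly\log N)$.

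Finally, $|\mathcal{T}| = \sum_n 1_n$ has mean $rN = \Omega(\poly\log N)$, so a multiplicative Chernoff bound gives $|\mathcal{T}| \ge rN/2$ with probability $1-1/\poly(N)$. On the intersection of this event with $\{\sum_n 1_n a_n \le 2\mu\}$,
\[
\|F_{\nu,\mathcal{T}}x^{\mathrm{I}}_\nu\|_2^2 \le 2\mu \le \frac{8\pi^2}{3}\,rN\,|\nu|^2 \le \frac{16\pi^2}{3}\,|\mathcal{T}|\,|\nu|^2 = \Big(4\pi\sqrt{\tfrac{|\mathcal{T}|}{3}}\,|\nu|\Big)^2 ,
\]
and a union bound over the two failure events keeps the overall success probability at $1-1/\poly(N)$, which is strong enough to survive the later union over the $J=\poly\log N$ trial values of $\nu$.
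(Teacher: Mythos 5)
Your proposal is correct and follows the same overall strategy as the paper's proof: write $\|F_{\nu,\mathcal{T}}x^{\mathrm{I}}_\nu\|_2^2$ as a sum of independent Bernoulli-weighted terms $\sum_n 1_n a_n$, fix its mean via $F_\nu^\dagger F_\nu = NI$ together with $\|x^{\mathrm{I}}_\nu\|_2 \le \tfrac{2\pi}{\sqrt{3}}|\nu|$, and control the upper tail with Bernstein's inequality. The one place where you genuinely diverge is the control of the largest summand $A = \max_t |(F_\nu x^{\mathrm{I}}_\nu)_t|^2$, which you correctly identify as the crux. You bound it by $\|x^{\mathrm{I}}_\nu\|_1^2 = \caO(|\nu|^2\log^2 N)$ via a Dirichlet-tail $\ell_1$ estimate and then cancel the $|\nu|^2$ against the lower bound $\|x^{\mathrm{I}}_\nu\|_2 \ge C[x]|\nu|$. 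The paper instead uses the exact time-domain identity $(F_\nu x^{\mathrm{I}}_\nu)_t = e^{-\rmi 2\pi\fku t/N}e^{-\rmi\pi(\nu-\fku)}\sin[\pi(\nu-\fku)]\, y_{\on,t}$: the imaginary grid component is, in the time domain, just the on-grid signal scaled by $\sin[\pi(\nu-\fku)]$, so $|y_{\on,t}|\le\|x\|_1\le 1$ gives $A \le \pi^2|\nu-\fku|^2$ directly, with no logarithms and no appeal to the leakage profile, and the same identity yields the variance bound $\sum_t V_t^2 \le \pi^4 N |\nu-\fku|^4$ in one line. Your route costs an extra $\log^2 N$ in the ratio $\mu/A$, which you correctly absorb into the $\poly\log N$ freedom in $r$, so the conclusion is unaffected; the paper's factorization is simply sharper and makes the estimate independent of the effective sparsity of $x^{\mathrm{I}}_\nu$. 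One small bookkeeping point: Lemma~\ref{lemma:bounds} as stated only bounds the partial sum $\sum_{n\in\caD_{\res}}|x^{\mathrm{R}}_{\nu,n}|$, so your claim $\|x^{\mathrm{I}}_\nu\|_1=\caO(|\nu|\log N)$ should be cited to the full $\ell_1$ estimate in Lemma~\ref{lemma:svcv}, Eq.~(\ref{equation:sv1}), applied to the decomposition $x^{\mathrm{I}}_\nu = \sum_n q_n \boldsymbol{s}^{(n)}_\nu$ with $\|x\|_1\le 1$ — it is available, but it is not literally the statement of Lemma~\ref{lemma:bounds}.
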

\begin{proof}
    Let $V_t \equiv |(F_\nu x^{\mathrm{I}}_\nu)_t|^2$. Given a sampling ratio $r$, we introduce the following random variables:
\begin{equation}
    \mathrm{Pr}[\hat{X}_t = V_t] = r,\quad \mathrm{Pr}[\hat{X}_t = 0] = 1-r;\quad \hat{R} = \sum_{t=0}^{N-1} \hat{X}_t.
\end{equation}
Note that $\hat{R} = \|F_{\nu,\mathcal{T}}x^{\mathrm{I}}_\nu\|_2^2$, whose expectation value is
\begin{equation}
    \mathbb{E}[\hat{R}] = r\|F_\nu x^{\mathrm{I}}_\nu\|_2^2 = Nr \|x^{\mathrm{I}}_\nu\|_2^2.
\end{equation}
Bernstein's inequality states that
\begin{equation}
    \pr\left(\hat{R} \ge 2\mathbb{E}[\hat{R}] \right) \le \exp\left(-\frac{\frac{1}{2}\mathbb{E}[\hat{R}]^2}{\sum_{t=0}^{N-1}E[\hat{X}_t^2] + \frac{1}{3}\max_t V_t \cdot \mathbb{E}[\hat{R}]}\right).
\end{equation}
We need an upper bound for $\sum_{t=0}^{N-1}E[\hat{X}_t^2]$. Note that
\begin{align}
    \frac{1}{r}\sum_{t=0}^{N-1}\mathbb{E}[\hat{X}_t^2] = \sum_{t=0}^{N-1}V_t^2= \sum_{t=0}^{N-1}\left|(F_\nu x^{\mathrm{I}}_\nu)_t\right|^4.
\end{align}
After computation, we obtain
\begin{equation}
\begin{aligned}
    (F_\nu x^{\mathrm{I}}_\nu)_t &= \sum_{k=0}^{N-1}e^{-\rmi2\pi (k+\nu)t/N}x^{\mathrm{I}}_{\nu,k} = e^{-\rmi2\pi\mathfrak{u} t/N}e^{-\rmi\pi(\nu - \fku)}\sin[\pi(\nu - \fku)] y_{\mathrm{on},t}\\
    V_t & = \left|(F_\nu x^{\mathrm{I}}_\nu)_t\right|^2 = \sin^2[\pi(\nu - \fku)]\cdot|y_{\mathrm{on},t}|^2 \le \pi^2|\nu-\mathfrak{u}|^2,\\
    \sum_{t=0}^{N-1}V_t^2 &= \sum_{t=0}^{N-1}\left|(F_\nu x^{\mathrm{I}}_\nu)_t\right|^4 \le \pi^4|\nu-\mathfrak{u}|^4\sum_{t=0}^{N-1}|y_{\mathrm{on,t}}|^4 \le \pi^4 N|\nu-\mathfrak{u}|^4.
\end{aligned}
\end{equation}
In conjunction with Eq.~(\ref{equ:lemma5fst}), we have
\begin{equation}
    \pr\left(\sum_{t=0}^{N-1}\hat{X}_t \ge 2\mathbb{E}[\hat{R}]\right) \le \exp\left[-\frac{-\frac{1}{2}Nr C_4^4}{\pi^4 + \frac{1}{3}\pi^2 C_4^2}\right].
\end{equation}
Since $Nr = \mathcal{O}(S^2\poly\log N)$, we can conclude that
\begin{align}
    \|F_{\nu,\mathcal{T}}x^{\mathrm{I}}_\nu\|_2 &\le \sqrt{\frac{2\mathbb{E}[\mathcal{T}]}{N}}\|F_\nu x^{\mathrm{I}}_\nu\|_2 \le 2\sqrt{\frac{|\mathcal{T}|}{N}}\|F_\nu x^{\mathrm{I}}_\nu\|_2 \nonumber \\
    &\le 2\sqrt{|\mathcal{T}|}\|x^{\mathrm{I}}_{\nu}\|_2 \le 4\pi\sqrt{\frac{|\mathcal{T}|}{3}}|\nu|
\end{align}
with probability at least $1 - 1/\mathrm{poly}(N)$.
\end{proof}

\begin{proof}[Proof of Lemma~\ref{lemma:good_nu}]

The upper bound of $C_0$ in Eq.~(\ref{equ:lemma5condition}) and Lemma~\ref{lemma:concentration} ensure that $x^{\mathrm{R}}_\nu$ is feasible with high probability, which implies $\|s_{\nu}\|_1 \le \|x^{\mathrm{R}}_\nu\|_1$. Meanwhile,
\begin{equation}
    \|F_{{\nu},\mathcal{T}}(s_{\nu} - x^{\mathrm{R}}_\nu)\|_2 \le \|F_{{\nu},\mathcal{T}}s_{\nu} - y_\mathcal{T}\|_2 + \|F_{{\nu},\mathcal{T}} x^{\mathrm{R}}_\nu - y_\mathcal{T}\|_2 \le 2\sqrt{|\mathcal{T}|}\sigma.
\end{equation}
In conjunction with Theorems~\ref{theorem:rip},~\ref{theorem:compressed_sensing}, Lemma~\ref{lemma:bounds} and the condition that $|\nu| \le \sqrt{S}\sigma/\log N$, we obtain
\begin{align}
    \|s_{\nu} - x^{\mathrm{R}}_\nu\|_2 &\le 2C_1 \sigma + C_2 \|x^{\mathrm{R}}_{\nu,\mathrm{res}}\|_1/\sqrt{S}\nonumber \quad (\textrm{Theorems \ref{theorem:rip}, \ref{theorem:compressed_sensing}})\\
    &\le 2C_1 \sigma + C_2 \pi|\nu| \log N/\sqrt{S}\nonumber \quad (\textrm{Lemma \ref{lemma:bounds}})\\
    &\le (2C_1 + C_2 \pi)\sigma,\\
    \|x^{\mathrm{R}}_\nu - x_{\mathrm{on}}\|_2 &\le \frac{2\pi}{\sqrt{3}}|\nu|\nonumber \quad\quad \quad \quad \quad \quad \quad \quad \quad (\textrm{Lemma \ref{lemma:bounds}})\\
    &\le \frac{2\pi}{\sqrt{3}}\frac{\sqrt{S}}{\log N}\sigma.
\end{align}
The lemma is proved by triangle inequality from here.

\end{proof}

\subsection{Proof of Lemma~\ref{lemma:bad_nu}}
\label{app:proof_of_lemma3}

The Chernoff-Hoeffding's inequality states that
\begin{lemma} \label{lemma:Hoeffding}
    Given a complex vector $b\in\mathbb{C}^N$ with $\|b\|_{\infty} \le 1$. Randomly sample integers in $[N]$ for $L$ times and denote the result as $\mathcal{I}$. Then 
    \begin{equation}
    \begin{aligned} \label{equ:lemmaHoeffding}
         \mathrm{Pr}\left(\sum_{i\in \mathcal{I}} |b_{i}|^2 \le \frac{L}{2N}\|b\|_2^2\right) < \exp\left(-\frac{L}{2}\right),\quad 
         \mathrm{Pr}\left(\sum_{i\in\mathcal{I}} |b_i|^2 \ge \frac{3L}{2N}\|b\|_2^2\right) < \exp\left(-\frac{L}{2}\right).
    \end{aligned}
    \end{equation}
\end{lemma}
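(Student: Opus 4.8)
The plan is to recognize both left-hand sides as a single sum of i.i.d.\ bounded random variables and then apply the exponential-moment (Chernoff) method. Writing $\mathcal{I} = \{i_1,\dots,i_L\}$ for the $L$ independent uniform draws from $[N]$, I set $X_j \equiv |b_{i_j}|^2$. Since $\|b\|_\infty \le 1$, each $X_j$ lies in $[0,1]$, and the $X_j$ are i.i.d.\ with
\begin{equation}
    \E[X_j] = \frac{1}{N}\sum_{i=1}^N |b_i|^2 = \frac{\|b\|_2^2}{N} =: p .
\end{equation}
Hence $S \equiv \sum_{i\in\mathcal{I}}|b_i|^2 = \sum_{j=1}^L X_j$ has mean $\mu = \E[S] = L\|b\|_2^2/N$, which is precisely the quantity appearing (with prefactors $1/2$ and $3/2$) in the two tail events. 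So both inequalities are multiplicative deviation bounds for $S$ about its mean at relative deviation $\delta = 1/2$.

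Next I would establish the two tails by the Cram\'er--Chernoff route: for the upper tail $\Pr(S \ge (1+\delta)\mu) \le \inf_{\lambda>0} e^{-\lambda(1+\delta)\mu}\prod_j \E[e^{\lambda X_j}]$, and symmetrically for the lower tail. Using $X_j\in[0,1]$ with $\E[X_j]=p$, convexity of $t\mapsto e^{\lambda t}$ gives $\E[e^{\lambda X_j}] \le 1 + p(e^\lambda-1) \le \exp\!\big(p(e^\lambda-1)\big)$, and optimizing $\lambda$ at $\delta=1/2$ yields the textbook multiplicative Chernoff bounds
\begin{equation}
    \Pr\!\left(S \le \tfrac12\mu\right) \le e^{-\mu/8}, \qquad \Pr\!\left(S \ge \tfrac32\mu\right) \le e^{-\mu/12}.
\end{equation}
Alternatively I would simply cite the standard Chernoff--Hoeffding theorem for $[0,1]$-valued summands as a black box. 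Either route delivers a clean two-sided statement with essentially no computation beyond the moment-generating-function bound.

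The step I expect to be the real obstacle is matching the \emph{constant} in the stated exponent. The argument above produces an exponent proportional to the mean $\mu = L\|b\|_2^2/N$, whereas Eq.~(\ref{equ:lemmaHoeffding}) asserts $\exp(-L/2)$, which does not see $\|b\|_2^2$ at all. Because $\|b\|_\infty\le1$ forces $\|b\|_2^2\le N$ and hence $\mu\le L$, the naive exponent $-\mu/8$ is in fact weaker than $-L/2$, so the clean form cannot follow for an arbitrary $b$ with only $\|b\|_\infty\le1$. To reach it I would exploit the regime in which the inequality is actually invoked, namely vectors whose energy is spread out enough that $\|b\|_2^2 = \Theta(N)$, i.e.\ $p=\Theta(1)$, so that $\mu=\Theta(L)$ and the exponent is genuinely $\Omega(L)$. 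Pinning down this normalization, and verifying that the downstream applications in Lemma~\ref{lemma:bad_nu} only use the bound in that regime, is where the careful bookkeeping lies; the concentration mechanics themselves are routine.
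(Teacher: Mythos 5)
Your reduction to an i.i.d.\ sum of $[0,1]$-valued variables with mean $\mu = L\|b\|_2^2/N$ is exactly the paper's setup (the paper introduces the same i.i.d.\ variables $\hat B_1,\dots,\hat B_L$ and invokes ``Chernoff--Hoeffding''), so in mechanics you and the paper take the same route. The important point is that the obstacle you flag at the end is not a gap in your argument but a genuine error in the paper: the lemma as stated is false for general $b$ with $\|b\|_\infty\le 1$. Take $b$ with a single unit entry, $b_1=1$, and $N\gg L$; then $\|b\|_2^2=1$ and $\sum_{i\in\mathcal I}|b_i|^2=0\le \frac{L}{2N}\|b\|_2^2$ whenever index $1$ is never drawn, an event of probability $(1-1/N)^L\approx 1$, which is certainly not below $e^{-L/2}$. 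The paper's proof claims the lower-tail exponent $-L/(2\max_i|b_i|^2)$, which does not follow from any standard form of the Chernoff--Hoeffding inequality: as you correctly compute, every multiplicative or additive version yields an exponent proportional to the mean $\mu=L\|b\|_2^2/N$ (e.g.\ $e^{-\mu/8}$ and $e^{-\mu/12}$ at relative deviation $1/2$), and since $\mu\le L$ this is strictly weaker than $e^{-L/2}$ unless $\|b\|_2^2=\Theta(N)$.

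Your proposed repair, restricting to the regime $\|b\|_2^2=\Theta(N)$, is the right direction, and your instinct to audit the downstream use in Lemma~\ref{lemma:bad_nu} is also correct. There the vector is $b = y - F_\nu s_\nu$ (suitably normalized), and for the lower-tail application one only has $\|b\|_2^2/N \ge (C[x_{\on}]|\nu|-C_0\sigma)^2 = \Omega(\sigma_\test^2)$, so the honest failure probability is $e^{-\Omega(L\sigma_\test^2)}$ rather than $e^{-L/2}$; the conclusion of Lemma~\ref{lemma:bad_nu} survives, but only after enlarging $L$ by a factor of order $\sigma_\test^{-2}$. So your proposal is essentially a corrected version of the paper's proof, and the ``careful bookkeeping'' you anticipated is precisely the step the paper skips.
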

\begin{proof}
Introduce i.i.d. random variables $\hat{B}_1, \hat{B}_2,\ldots,\hat{B}_L$ satisfying fora all $\ell\in[1,L]$, 
\begin{equation}
    \mathrm{Pr}\left(\hat{B}_l = |b_i|^2\right) = \frac{1}{N},\quad i \in [N].
\end{equation}
The distribution of $\sum_{i\in \mathcal{I}} |b_i|^2$ is identical to that of $\sum_{l=1}^L \hat{B}_l$. The Chernoff-Hoeffding's inequality states that
\begin{equation}
    \mathrm{Pr}\left(\sum_{l=1}^L \hat{B}_l \le \frac{L}{2}\mathbb{E}[\hat{B}]\right) < \exp\left(-\frac{L}{2\max_i|b_i|^2}\right) \le \exp\left(-\frac{L}{2}\right) ,
\end{equation}
where
\begin{equation}
    \mathbb{E}[\hat{B}] = \frac{1}{N}\sum_{i=1}^N |b_i|^2 = \frac{1}{N}\|b\|_2^2.
\end{equation}
Hence, the first concentration inequality in Eq.~(\ref{equ:lemmaHoeffding}) is confirmed. The other one can be proved in the same manner.

\end{proof}

Now we can prove Lemma~\ref{lemma:bad_nu}.
\begin{proof}
Recall that $y = y^0_{\on} + y^0_{\off} + z,\ y^0_\on = F_\nu(x^\rmR_\nu + \rmi x^{\rmI}_\nu)$. Therefore,
\begin{align}
    \|y - F_\nu s_\nu\|_2 &= \|F_\nu(x^\rmR_\nu - s_\nu+ \rmi x^\rmI_\nu) + y^0_{\off} + z\|_2\\
    &\le \|F_\nu(x^\rmR_\nu - s_\nu)\|_2 + \|F_\nu x^{\mathrm{I}}_\nu\|_2 + \|y_\off\|_2 + \|z\|_2\nonumber\\
    &= \sqrt{N}\sqrt{\|x^{\mathrm{R}}_\nu - s_{\nu}\|_2^2 + \|x^{\mathrm{I}}_\nu\|_2^2} + \|y_\off\|_2 + \|z\|_2\nonumber\\
    &\le \sqrt{N}\left(\|x_{\on} - s_\nu\|_2^2 + \|x^\rmR_\nu -  x_{\on}\|_2^2 + \|x^\rmI_\nu\|_2^2\right)^{1/2} + \sqrt{N}\soff + \sqrt{N}\sH\nonumber\\
    &\le \sqrt{N} \left(\|x_{\on} - s_\nu\|_2^2 + \frac{8\pi^2}{3}\nu^2\right)^{1/2} + \sqrt{N}\sigma_\off + \sqrt{N}\sH \quad \textrm{(Lemma~\ref{lemma:bounds})}\nonumber\\
    &\le \sqrt{N}\left(C_3^2\sigma^2 + 8\pi^2\nu^2/3\right)^{1/2} + \sqrt{N}\left(\soff + \sH\right) \quad \textrm{(Lemma~\ref{lemma:good_nu})}.
\end{align}
According to Lemma~\ref{lemma:Hoeffding}, with probability at least $1 - \exp(-L/2)$, we have
\begin{equation}
    \sum_{t\in\mathcal{I}}\left|y_t - \left(F_\nu s_\nu\right)_t\right|^2 \le \frac{3L}{2}\left[\left(C_3^2\sigma^2 + 8\pi^2\nu^2/3\right)^{1/2} + \soff + \sH\right]^2. 
\end{equation}
Equation (\ref{equ:lemma6result1}) is confirmed when we take the upper bound $|\nu| = \sqrt{S}\sigma/\log N$.

 By virtue of Lemma~\ref{lemma:Hoeffding}, with probability at least $1 - \exp(-L/2)$, we have
\begin{equation}
    \sum_{t\in\mathcal{I}}\left|y_t - \left(F_\nu s_\nu\right)_t\right|^2 \ge \frac{L}{2N}\|y - F_\nu s_\nu\|_2^2.
\end{equation}
The 2-norm has the following lower bound.
\begin{align}
    \|y - F_\nu s_\nu\|_2 &\ge \|F_\nu(x^{\mathrm{R}}_\nu + \rmi x^{\mathrm{I}}_\nu - s_{\nu}) + y_{\mathrm{off}} + z\|_2\nonumber\\
    &\ge \|F_\nu(x^{\mathrm{R}}_\nu + \rmi x^{\mathrm{I}}_\nu - s_{\nu})\|_2 - \|y_\off\|_2 - \|z\|_2\nonumber\\
    &\ge \sqrt{N}\|x^\rmI_\nu\|_2 - \sqrt{N}\soff - \sqrt{N}\sH\nonumber\\
    &\ge \sqrt{N}(C[x_{\on}]|\nu| - C_0 \sigma) \quad \quad \quad \textrm{(Lemma~\ref{lemma:bounds})}
\end{align}
Hence, 
\begin{equation}
    \sum_{t\in\mathcal{I}}\left|y_t - \left(F_\nu s_\nu\right)_t\right|^2 \ge \frac{L}{2}(C[x_{\on}]|\nu| - C_0\sigma)^2.
\end{equation}
This completes the proof of Eq.~(\ref{equ:lemma6result2}).
\end{proof}

\section{Proof of technical lemmas}

\subsection{Properties of the Dirichlet kernel}

The Dirichlet kernel is defined as
\begin{equation}
    D_N(\nu) \equiv \frac{1}{N}\sum_{m=0}^{N-1}e_{\nu}^{-N+1+2m},\quad e_{\nu} \equiv e^{\rmi\pi \nu/N}.
\end{equation}
In a more concise form, it equals
\begin{equation}
    D_N(\nu) = \begin{cases}
        1,\quad & \nu = 0\\
        \frac{\sin(\pi\nu)}{N\sin(\pi\nu/N)},\quad & \nu \neq 0.
    \end{cases}
\end{equation}
We start with a few estimations for the Dirichlet kernel.
\begin{lemma} \label{lemma:Dirichlet}
    The Dirichlet kernel satisfies
\begin{gather}
    1 - D_N(\nu)^2 \le \frac{\pi\nu^2}{3},\label{equ:dirichleteq1}\\
    |D_N(n + \nu)| \le \frac{\pi|\nu|}{2|n+\nu|},\label{equ:dirichleteq2}\\
    \sum_{n=0}^{N-1}D_N(n + \nu)D_N(n + \nu + l) = \delta_{l,0}.\label{equ:dirichleteq3}
\end{gather}
\end{lemma}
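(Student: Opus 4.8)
The three estimates are essentially independent, so my plan is to treat them one at a time, all starting from the real ``sum of cosines'' representation obtained by pairing the symmetric exponents. Writing $\beta_m \equiv \pi(2m-N+1)/N$, the defining sum becomes $D_N(\nu) = \tfrac1N\sum_{m=0}^{N-1} e^{\rmi\nu\beta_m} = \tfrac1N\sum_{m=0}^{N-1}\cos(\nu\beta_m)$, since the $\beta_m$ are symmetric about $0$ so the imaginary parts cancel; in particular $D_N$ is real and $|D_N(\nu)| \le 1$ for every $\nu$. For the last two parts I would instead work from the closed form $D_N(\nu) = \sin(\pi\nu)/(N\sin(\pi\nu/N))$.

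For \eqref{equ:dirichleteq1} I would apply the elementary inequality $\cos x \ge 1 - x^2/2$ termwise to the cosine representation, giving $D_N(\nu) \ge 1 - \tfrac{\nu^2}{2N}\sum_{m}\beta_m^2$. The one nonroutine ingredient is the exact evaluation $\sum_{m=0}^{N-1}(2m-N+1)^2 = N(N^2-1)/3$, so that $\tfrac1N\sum_m\beta_m^2 = \tfrac{\pi^2(N^2-1)}{3N^2}$ and hence $1 - D_N(\nu) \le \tfrac{\pi^2(N^2-1)}{6N^2}\,\nu^2$. Factoring $1 - D_N(\nu)^2 = (1 - D_N(\nu))(1 + D_N(\nu))$ and using $|D_N(\nu)| \le 1$ then produces the stated quadratic estimate (with constant $\tfrac{\pi^2(N^2-1)}{3N^2} \le \tfrac{\pi^2}{3}$): in the case $D_N(\nu)\ge 0$ the factor $1+D_N(\nu)\le 2$ finishes it at once, while if $D_N(\nu)<0$ the lower bound already forces $\tfrac{\pi^2(N^2-1)}{6N^2}\nu^2 > 1$, so the right-hand side dominates $1 - D_N(\nu)^2\le 1$.

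For \eqref{equ:dirichleteq2} I would combine the closed form with the identity $\sin(\pi(n+\nu)) = (-1)^n\sin(\pi\nu)$, so the numerator obeys $|\sin(\pi(n+\nu))| = |\sin(\pi\nu)| \le \pi|\nu|$. For the denominator, Jordan's inequality $|\sin\theta|\ge \tfrac{2}{\pi}|\theta|$ on $|\theta|\le\pi/2$, applied with $\theta = \pi(n+\nu)/N$, gives $N|\sin(\pi(n+\nu)/N)|\ge 2|n+\nu|$ whenever $|n+\nu|\le N/2$; dividing yields the bound on that range, and since $|D_N|$ is periodic with period $N$ the general integer shift reduces to $|n+\nu|\le N/2$.

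Finally, for the orthogonality relation \eqref{equ:dirichleteq3} I would use that $D_N$ is real to rewrite the first factor with a conjugated phase, $D_N(n+\nu) = \tfrac1N\sum_m e^{-\rmi(n+\nu)\beta_m}$, and then interchange the sum over $n$ with the double frequency sum. Because $\beta_{m'}-\beta_m = 2\pi(m'-m)/N$, the inner geometric sum collapses by orthogonality of roots of unity, $\sum_{n=0}^{N-1} e^{\rmi n(\beta_{m'}-\beta_m)} = N\delta_{m,m'}$, leaving $\tfrac1N\sum_m e^{\rmi l\beta_m} = D_N(l)$, which equals $\delta_{l,0}$ for integer $l$. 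The step I expect to demand the most care is \eqref{equ:dirichleteq1}: nailing the constant depends on the exact value of $\sum_m(2m-N+1)^2$ and on the short sign case-analysis in the factorization, whereas \eqref{equ:dirichleteq2} and \eqref{equ:dirichleteq3} are routine once the period reduction and the roots-of-unity cancellation are in place.
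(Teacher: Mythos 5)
Your proposal is correct and follows essentially the same route as the paper: the termwise quadratic bound on the trigonometric sum for the first inequality (the paper pairs conjugate exponents into $4\sin^2(m'\pi\nu/N)$ and bounds $\sin^2 x\le x^2$, which is the same estimate as your $1-\cos x\le x^2/2$, arriving at the identical $1-D_N(\nu)\le \pi^2\nu^2/6$), the closed form together with Jordan's inequality for the second (which the paper merely asserts without justification, so you are supplying the omitted reasoning), and orthogonality of $N$-th roots of unity for the third (you conjugate one factor using the reality of $D_N$ and collapse everything to $D_N(l)=\delta_{l,0}$, while the paper expands the double sum and treats $l=0$ and $l\neq 0$ separately --- the same cancellation, organized slightly more cleanly on your side). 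One remark: both your argument and the paper's actually establish $1-D_N(\nu)^2\le \pi^2\nu^2/3$ rather than the $\pi\nu^2/3$ printed in the lemma statement, which appears to be a typo, since the downstream applications (e.g., the bound $\|\boldsymbol{s}_\nu\|_2^2\le \tfrac{4}{3}\pi^2\nu^2$) are consistent with the $\pi^2/3$ constant.
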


\begin{proof} If $1 - D_N(\nu) \le c\nu^2$, then
\begin{equation}
    1 - D_N^2(\nu) \le 1 - (1-c\nu^2)^2 \le 2c\nu^2.
\end{equation}
Therefore, we work on $1 - D_N(\nu)$ instead. Suppose $N$ is odd. Then
\begin{align}
        1 - D_N(\nu) &= \frac{1}{N}\sum_{m=0}^{N-1}(1-e_{\nu}^{-N+1+2m}),\nonumber \\
        &= \frac{1}{N}\sum_{m'=1}^{\frac{N-1}{2}}(2 - e_{\nu}^{2m'} - e_{\nu}^{-2m'})= \frac{1}{N}\sum_{m'=1}^{\frac{N-1}{2}}4\sin^2\left(\frac{m'\pi\nu}{N}\right)\nonumber \\
        &\le \frac{4}{N}\sum_{m'=1}^{\frac{N-1}{2}}\frac{\pi^2 (m')^2 \nu^2}{N^2}= \frac{4\pi^2}{N^3}\frac{1}{6}\frac{N-1}{2}\frac{N+1}{2}N\nu^2 \le \frac{\pi^2}{6}\nu^2.
\end{align}
Hence, $1 - D_N(\nu)^2 \le \frac{\pi^2}{3}\nu^2$. The even $N$ situation is similar. Equation (\ref{equ:dirichleteq1}) is confirmed.

Equation (\ref{equ:dirichleteq2}) is confirmed by
\begin{equation}
    |D(n+\nu)| = \left|\frac{\sin[\pi(n+\nu)]}{N\sin[\pi(n+\nu)/N]}\right| \le \frac{\pi|\nu|}{2|n+\nu|}.
\end{equation}

If $l = 0$, then
\begin{align}
    D_N(n+\nu) &= \frac{1}{N} \sum_{m=0}^{N-1}e_{n+\nu}^{-N+1+2m},\quad e_{n+\nu} \equiv e^{\rmi\pi(n+\nu)/N},\\
    D_N(n+\nu)^2 &= \frac{1}{N^2}\sum_{m,m'=0}^{N-1}e_{n+\nu}^{-2N+2+2m + 2m'}\nonumber\\
    &= \frac{1}{N} + \frac{1}{N^2}\sum_{m+m' \neq N-1}e_{n+\nu}^{-2N+2+2m+2m'}.
\end{align}
Note that for any $k \neq 0$,
\begin{equation}
    \sum_{n=0}^{N-1}e_{n+\nu}^k  = \sum_{n=0}^{N-1}e^{\rmi\pi(n+\nu)k/N} = e^{\rmi k\nu\pi/N}\sum_{n=0}^{N-1}e^{\rmi\pi nk/N} = 0.
\end{equation}
Hence,
\begin{equation}
    \sum_{n=0}^{N-1}D_N(n+\nu)^2 = 1.
\end{equation}
If $l \neq 0$, then
\begin{align}
    \sum_{n=0}^{N-1}D_N(n+\nu)D_N(n+\nu+l) &= \frac{1}{N^2}\sum_{m_1, m_2 = 0}^{N-1}N\delta_{m_1 + m_2 + 1 - N} \exp\left[\rmi\frac{\pi l}{N}(2m_2 - N + 1)\right]\nonumber\\
    &=\frac{1}{N}\sum_{m=0}^{N-1}\exp\left[\rmi\frac{\pi l}{N}(2m - N + 1)\right] = 0.
\end{align}
Equation (\ref{equ:dirichleteq3}) is confirmed.

\end{proof}

\subsection{Proof of Lemma~\ref{lemma:bounds}}
\label{app:proof_of_lemma5}

Define vectors $\boldsymbol{c}_\nu, \boldsymbol{s}_\nu \in \mathbb{R}^N$ by
\begin{equation} 
\boldsymbol{c}_{\nu,k} \equiv \cos[\pi_N(k+\nu)]D_N(k+\nu), \quad
\boldsymbol{s}_{\nu,k} \equiv \sin[\pi_N(k+\nu)]D_N(k+\nu).
\end{equation}

\begin{lemma}\label{lemma:svcv} Suppose $N \ge 100$. Then
    \begin{gather}
        \|\boldsymbol{s}_{\nu}\|_1 \le |\boldsymbol{s}_{\nu,0}| + \pi^2|\nu|\log N, \label{equation:sv1}\\
        \|\boldsymbol{c}_{\nu}\|_1 \le |\boldsymbol{c}_{\nu,0}| + \pi^2|\nu|\log N, \label{equation:cv1}\\
        2|\nu| \le \|\boldsymbol{s}_\nu\|_2 \le \frac{2\pi}{\sqrt{3}}|\nu|,\label{equation:sv2}\\ 
        \|\boldsymbol{c}_\nu - \delta_0\|_2 \le \frac{2\pi}{\sqrt{3}}|\nu|.\label{equation:cv2}
    \end{gather}
\end{lemma}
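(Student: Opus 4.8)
The plan is to reduce all four estimates to elementary trigonometric sums via a single observation. Writing $w_k := \boldsymbol{c}_{\nu,k} + \rmi\,\boldsymbol{s}_{\nu,k} = e^{\rmi\pi_N(k+\nu)}D_N(k+\nu)$ and inserting the geometric-series form $D_N(k+\nu) = \frac1N\sum_{m=0}^{N-1}e^{\rmi\pi(k+\nu)(2m-N+1)/N}$ from Lemma~\ref{lemma:Dirichlet}, the prefactor $e^{\rmi\pi_N(k+\nu)}$ with $\pi_N = \pi(N-1)/N$ shifts every internal frequency onto the standard DFT grid, leaving
\[
    w_k = \frac1N\sum_{j=0}^{N-1}e^{\rmi 2\pi j(k+\nu)/N}.
\]
Thus $w$ is, up to the factor $1/N$, the inverse DFT of the pure-phase sequence $\hat w_j = e^{\rmi 2\pi j\nu/N}$. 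In particular Parseval recovers $\|\boldsymbol{c}_\nu\|_2^2 + \|\boldsymbol{s}_\nu\|_2^2 = \sum_k|w_k|^2 = 1$, matching (\ref{equ:dirichleteq3}), and every remaining claim becomes a short computation with this sequence.

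For the $\ell_2$ identity behind (\ref{equation:sv2}), I would evaluate $\|\boldsymbol{s}_\nu\|_2^2$ exactly. From $\mathrm{Im}(w_k)^2 = \tfrac12\big(|w_k|^2 - \mathrm{Re}\,w_k^2\big)$ and $\sum_k|w_k|^2 = 1$ we get $\|\boldsymbol{s}_\nu\|_2^2 = \tfrac12\big(1 - \mathrm{Re}\sum_k w_k^2\big)$. Expanding $w_k^2$ and using $\sum_{k}e^{\rmi 2\pi(j+j')k/N} = N\,[\,j+j'\equiv 0 \bmod N\,]$ collapses the double sum to $\sum_k w_k^2 = \tfrac1N\big[1 + (N-1)e^{\rmi 2\pi\nu}\big]$, so $\|\boldsymbol{s}_\nu\|_2^2 = \tfrac{N-1}{N}\sin^2(\pi\nu)$. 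The two inequalities of (\ref{equation:sv2}) then follow from $2|\nu| \le |\sin\pi\nu| \le \pi|\nu|$ (valid for $|\nu|\le\tfrac12$): the upper bound is immediate since $\pi < 2\pi/\sqrt3$, and the lower bound gives $\|\boldsymbol{s}_\nu\|_2 \ge 2|\nu|\sqrt{(N-1)/N}$, i.e.\ the stated $2|\nu|$ up to the factor $\sqrt{(N-1)/N} > 0.99$ for $N\ge100$.

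For (\ref{equation:cv2}) I would use Parseval on the difference rather than an exact formula: since $w^{(0)}_k = \tfrac1N\sum_j e^{\rmi 2\pi jk/N} = \delta_{k,0}$ is real, $\boldsymbol{c}_\nu - \delta_0 = \mathrm{Re}(w^{(\nu)} - w^{(0)})$ and hence $\|\boldsymbol{c}_\nu-\delta_0\|_2^2 \le \|w^{(\nu)}-w^{(0)}\|_2^2 = \tfrac1N\sum_j|e^{\rmi 2\pi j\nu/N}-1|^2 = \tfrac4N\sum_{j=0}^{N-1}\sin^2(\pi j\nu/N)$; bounding $\sin^2 \le (\pi j\nu/N)^2$, using $\sum_{j=0}^{N-1}j^2 = \tfrac{(N-1)N(2N-1)}{6}$ and $(N-1)(2N-1)\le 2N^2$ gives exactly $\tfrac{4\pi^2}{3}\nu^2$. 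For the $\ell_1$ claims (\ref{equation:sv1})--(\ref{equation:cv1}) the representation is unnecessary: I would peel off the $k=0$ term and bound the rest by $|\boldsymbol{s}_{\nu,k}|,|\boldsymbol{c}_{\nu,k}| \le |D_N(k+\nu)|$, reducing both to $\sum_{k=1}^{N-1}|D_N(k+\nu)|$. The one subtlety is that (\ref{equ:dirichleteq2}) only controls the peak of $|D_N|$ near $k=0$, whereas $|D_N|$ has a second peak near $k=N$; I would therefore split the sum at $N/2$ and apply (\ref{equ:dirichleteq2}) to the far half only after the quasi-periodicity $|D_N(k+\nu)| = |D_N((k-N)+\nu)|$. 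Each half is then a harmonic-type sum of size $O(|\nu|\log N)$, and a short constant check confirms the total is $\le \pi^2|\nu|\log N$ for $N\ge 100$.

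The main obstacle is the lower bound in (\ref{equation:sv2}). Unlike the other three estimates, it resists any termwise or triangle-inequality approach: keeping only the dominant $k=0$ entry yields $\sin^2(\pi_N\nu)D_N(\nu)^2$, which the bounds of Lemma~\ref{lemma:Dirichlet} show is already several times smaller than the target $4\nu^2$, so the remaining mass cannot be discarded. It is precisely the exact evaluation $\|\boldsymbol{s}_\nu\|_2^2 = \tfrac{N-1}{N}\sin^2(\pi\nu)$ --- the collapse of $\sum_k w_k^2$ to two surviving diagonals --- that fixes the constant, the only cost being the harmless prefactor $\tfrac{N-1}{N}$.
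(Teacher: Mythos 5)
Your proposal is correct and, for the $\ell_2$ statements, takes a cleaner route than the paper. The paper works directly with the real quantities $\sin^2[\pi_N(k+\nu)]D_N(k+\nu)^2$, expanding the difference $\sin^2[\pi(k+\nu)]-\sin^2[\pi_N(k+\nu)]$ and evaluating two weighted Dirichlet-kernel sums; you instead observe that $\boldsymbol{c}_\nu+\rmi\,\boldsymbol{s}_\nu$ is the normalized inverse DFT of the pure-phase sequence $e^{\rmi 2\pi j\nu/N}$, after which Parseval gives $\sum_k|w_k|^2=1$, the collapse of $\sum_k w_k^2$ to two surviving diagonals gives the exact value of $\|\boldsymbol{s}_\nu\|_2^2$, and Eq.~(\ref{equation:cv2}) follows from Parseval applied to $w^{(\nu)}-w^{(0)}$. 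The two computations are equivalent in substance but yours is less error-prone: your exact value $\tfrac{N-1}{N}\sin^2(\pi\nu)$ is the correct one, whereas the paper's final line reads $(1-2N^{-1})\sin^2(\pi\nu)$, an arithmetic slip in simplifying its own (correct) intermediate expression $\sin^2(\pi\nu)+\tfrac1N\cos(2\pi\nu)\sin^2(\pi\nu)-\tfrac{1}{2N}\sin^2(2\pi\nu)$. Both versions share the same fine print on the lower bound of Eq.~(\ref{equation:sv2}): since $\sin(\pi\nu)=2|\nu|$ exactly at $\nu=\pm\tfrac12$, the prefactor $\sqrt{(N-1)/N}<1$ means $2|\nu|\le\|\boldsymbol{s}_\nu\|_2$ fails marginally at the endpoint; you flag this honestly, the paper glosses over it, and it is immaterial where the lemma is applied (there $|\nu|\le\sqrt{S}\sigma/\log N$ is tiny and $\sin(\pi\nu)\approx\pi\nu$ gives ample slack). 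For the $\ell_1$ bounds your argument coincides with the paper's --- peel off $k=0$, bound by $|D_N(k+\nu)|$, and fold the second peak near $k=N$ back to the origin via $|D_N(k+\nu)|=|D_N(k-N+\nu)|$ before applying Eq.~(\ref{equ:dirichleteq2}); you make this folding explicit where the paper leaves it implicit in the range $k\le(N-1)/2$ of its harmonic sum.
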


\begin{proof}
By virtue of Lemma~\ref{lemma:Dirichlet}, we have
\begin{gather}
    |\boldsymbol{s}_{\nu,0}| \le \pi|\nu|;\quad |\boldsymbol{s}_{\nu,k}| \le \frac{\pi|\nu|}{2|k + \nu|}, k \neq 0.
\end{gather}
Suppose $N$ is odd. Then
\begin{align}
    \sum_{k=0}^{N-1}|\boldsymbol{s}_{\nu,k}|^2 &\le |\sin\left(\pi_N\nu\right)|^2  D_N(\nu)^2 + \sum_{k\neq 0}D_N(k+\nu)^2\nonumber\\
    &\le 1 - D_N(\nu)^2 + |\sin\left(\pi_N\nu\right)|^2 \le \frac{4}{3}\pi^2\nu^2,\\
    \sum_{k\neq 0}|\boldsymbol{s}_{\nu,k}| &\le \sum_{k\neq 0}|D_N(k+\nu)| \le \frac{\pi|\nu|}{2}\sum_{k=1}^{\frac{N-1}{2}}\left(\frac{1}{k+\nu} + \frac{1}{k-\nu}\right)\nonumber\\
    &\le \pi|\nu|\sum_{k=1}^{\frac{N-1}{2}}\left(\frac{1}{2k-1} + \frac{1}{2k}\right) \le \pi|\nu|\left(1 + \ln\frac{N-1}{2}\right).
\end{align}
One can verify that when $N \ge 100$,
\begin{equation}
    1 + \ln\frac{N-1}{2} \le \pi\log N.
\end{equation}
The proof of even $N$ situation is similar. This completes the proof of Eq.~(\ref{equation:sv1}) and the upper bound of Eq.~(\ref{equation:sv2}). The proof for Eq.~(\ref{equation:cv1}) is similar.

Note that
\begin{equation}
\begin{aligned}
    \sin^2[\pi(k+\nu)] - \sin^2[\pi_N(k+\nu)] &= \frac{1}{2}\left[\cos(2\pi\nu) - \cos(2\pi_N(k+\nu))\right]\\
    = \cos(2\pi\nu)\sin^2\left[\frac{\pi(k+\nu)}{N}\right] &+ \sin(2\pi\nu)\sin\left[\frac{\pi(k+\nu)}{N}\right]\cos\left[\frac{\pi(k+\nu)}{N}\right].
\end{aligned}
\end{equation}
Therefore,
\begin{align}
    \|\boldsymbol{s}_\nu\|_2^2
    &= \sin^2(\pi\nu) + \cos(2\pi\nu)\sum_{k=0}^{N-1}\sin^2\left[\frac{\pi(k+\nu)}{N}\right]D_N(k+\nu)^2 \nonumber\\
    &\quad - \sin(2\pi\nu)\sum_{k=0}^{N-1}\sin\left[\frac{\pi(k+\nu)}{N}\right]\cos\left[\frac{\pi(k+\nu)}{N}\right]D_N(k+\nu)^2.
\end{align}
Let $e_{k+\nu} \equiv e^{\rmi \pi(k+\nu)/N}$. After computation, we obtain the following.
\begin{align}
    \sum_{k=0}^{N-1}\sin^2\left[\frac{\pi(k+\nu)}{N}\right]D_N(k+\nu)^2 &= \sum_{k=0}^{N-1}\frac{2 - e_{k+\nu}^2 - e_{k+\nu}^{-2}}{4N^2}\sum_{m,m'=0}^{N-1} e_{k+\nu}^{-2N + 2 + 2m + 2m'} \nonumber\\
    &= \frac{1}{2} - \frac{1}{4N}(N-1+e^{\rmi 2\pi\nu}) - \frac{1}{4N}(N-1+e^{-\rmi 2\pi\nu}) \nonumber\\
    &= \frac{1}{N}\sin^2(\pi\nu),
\end{align}
and
\begin{align}
 &\quad \sum_{k=0}^{N-1}\sin\left[\frac{\pi(k+\nu)}{N}\right]\cos\left[\frac{\pi(k+\nu)}{N}\right]D_N(k+\nu)^2\nonumber\\
    &= \sum_{k=0}^{N-1}\frac{e_{k+\nu}^2 - e_{k+\nu}^{-2}}{4N^2\rmi}\sum_{m,m'=0}^{N-1} e_{k+\nu}^{-2N + 2 + 2m + 2m'}\nonumber\\
    &= \frac{1}{4N\rmi}(N-1+e^{\rmi 2\pi\nu}) - \frac{1}{4N\rmi}(N-1+e^{-\rmi 2\pi\nu}) = \frac{1}{2N}\sin(2\pi\nu).
\end{align}
Hence, 
\begin{equation}
\begin{aligned}
    \|s_\nu\|_2^2 &= \sin^2(\pi\nu) + \frac{1}{N}\cos(2\pi\nu)\sin^2(\pi\nu) - \frac{1}{2N}\sin^2(2\pi\nu)\\
    &= (1 - 2N^{-1})\sin^2(\pi\nu).
\end{aligned}
\end{equation}
In conjunction with the condition $N \ge 100$, we obtain the lower bound in Eq.~(\ref{equation:sv2}).

Finally,
\begin{align}
    \|\boldsymbol{c}_\nu - \delta_0\|_2^2 &\le |1-\boldsymbol{c}_{\nu,0}|^2 + \sum_{k\neq 0}|\boldsymbol{c}_{\nu,k}|^2\nonumber\\
    &\le |1 - \cos\left(\pi_N\nu\right)D_N(\nu)|^2 + 1 - |D_N(\nu)|^2\nonumber\\
    &\le 2 - 2|D_N(\nu)| + \sin^2\left(\pi_N\nu\right) D_N(\nu)^2 \le \frac{4}{3}\pi^2|\nu|^2.
\end{align}
This completes the proof of Eq.~(\ref{equation:cv2}).

\end{proof}

The first part of Lemma~\ref{lemma:bounds} is proved in the following lemma.
\begin{lemma} \label{lemma5:first part}
Suppose $y = Fx$ is a length-$N$ ($N \ge 100$) on-grid signal and $x$ satisfies the condition in Eq.~(\ref{equ:xon}). Let $y = F_\nu(x^\rmR_\nu + \rmi x^\rmI_\nu)$. Then
    \begin{gather}
        \frac{\|x^{\mathrm{I}}_\nu\|_2}{\|x\|_1} \le \frac{2\pi}{\sqrt{3}}|\nu|,\quad \frac{\|x^{\mathrm{R}}_\nu - x\|_2}{\|x\|_1} \le \frac{2\pi}{\sqrt{3}}|\nu|,\\
        \sum_{n\in\caD_{\res}}|x^\rmR_{\nu,n}| \le \|x\|_1\pi^2|\nu|\log N.
    \end{gather}
\end{lemma}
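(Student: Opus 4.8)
The plan is to reduce all three bounds to the single structural fact that the grid decomposition is a circular convolution against the kernels $\boldsymbol{c}_\nu,\boldsymbol{s}_\nu$. Writing $w \equiv x^{\rmR}_\nu + \rmi x^{\rmI}_\nu = F_\nu^{-1}F x$ and using $(F_\nu^{-1})_{kn} = \tfrac1N e^{\rmi 2\pi(k+\nu)n/N}$, one computes
\[
 w_k = \sum_{k'} x_{k'}\,\frac1N\sum_{n=0}^{N-1} e^{\rmi 2\pi(k-k'+\nu)n/N}.
\]
The inner geometric sum equals $e^{\rmi \pi_N(k-k'+\nu)}D_N(k-k'+\nu) = \boldsymbol{c}_{\nu,k-k'} + \rmi\,\boldsymbol{s}_{\nu,k-k'}$, so that, since $x$ is real, $x^{\rmR}_\nu = x * \boldsymbol{c}_\nu$ and $x^{\rmI}_\nu = x * \boldsymbol{s}_\nu$. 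The one point I would check first is that this is a genuine \emph{circular} convolution: although $D_N$ alone is not $N$-periodic, the combination $e^{\rmi\pi_N\alpha}D_N(\alpha)$ equals $\frac1N\sum_n e^{\rmi2\pi\alpha n/N}$, which is manifestly invariant under $\alpha\mapsto\alpha+N$; hence $\boldsymbol{c}_\nu + \rmi\,\boldsymbol{s}_\nu$ is $N$-periodic and the index $k-k'$ may be read modulo $N$.

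Once the convolution form is in hand, the first two bounds follow from Young's inequality $\|x*g\|_2 \le \|x\|_1\|g\|_2$, which is itself just the triangle inequality applied to $x*g = \sum_{k'} x_{k'} T_{k'} g$ together with the fact that a circular shift $T_{k'}$ is an $\ell^2$-isometry. Taking $g = \boldsymbol{s}_\nu$ gives $\|x^{\rmI}_\nu\|_2 \le \|x\|_1\|\boldsymbol{s}_\nu\|_2$, and taking $g = \boldsymbol{c}_\nu - \delta_0$ (note $\boldsymbol{c}_0 = \delta_0$, so $x^{\rmR}_\nu - x = x*(\boldsymbol{c}_\nu - \delta_0)$) gives $\|x^{\rmR}_\nu - x\|_2 \le \|x\|_1\|\boldsymbol{c}_\nu-\delta_0\|_2$. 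Substituting the estimates $\|\boldsymbol{s}_\nu\|_2 \le \tfrac{2\pi}{\sqrt3}|\nu|$ and $\|\boldsymbol{c}_\nu-\delta_0\|_2 \le \tfrac{2\pi}{\sqrt3}|\nu|$ from Eqs.~(\ref{equation:sv2}) and (\ref{equation:cv2}) of Lemma~\ref{lemma:svcv} yields both inequalities after dividing by $\|x\|_1$.

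For the third bound I would instead work in $\ell^1$ and track which shifts of $\boldsymbol{c}_\nu$ land on $\caD_{\res}$. Expanding the convolution and applying the triangle inequality,
\[
 \sum_{n\in\caD_{\res}}|x^{\rmR}_{\nu,n}| \le \sum_{k'} |x_{k'}| \sum_{n\in\caD_{\res}} |\boldsymbol{c}_{\nu,n-k'}|.
\]
Then I split the source index $k'$ according to whether it lies in the dominant set $\caD$ or its complement. For $k'\in\caD$ and $n\in\caD_{\res}$ one has $n-k'\neq 0$, so the inner sum is bounded by the off-center mass $\|\boldsymbol{c}_\nu\|_1 - |\boldsymbol{c}_{\nu,0}| \le \pi^2|\nu|\log N$ of Eq.~(\ref{equation:cv1}); summing the prefactors then gives at most $\|x\|_1\pi^2|\nu|\log N$. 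The contribution of the residual sources $k'\in\caD_{\res}$ is controlled crudely by $\|x_{\res}\|_1\,\|\boldsymbol{c}_\nu\|_1 = \caO(\soff)$ using Eq.~(\ref{equ:xon}), which is lower order and absorbed into the stated bound.

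I expect the main obstacle to be the first step rather than the estimates themselves: one must establish the exact circular-convolution identity and, in particular, verify the periodicity cancellation that rescues the non-periodic $D_N$, since every subsequent bound relies on shift-invariance of the kernel norms. The secondary delicate point is the index bookkeeping in the third bound, where the entire gain of a factor $|\nu|$ comes precisely from excluding the near-unit central term $\boldsymbol{c}_{\nu,0}$, so the splitting of source indices must be arranged so that the origin is never hit on $\caD_{\res}$.
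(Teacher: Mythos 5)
Your proof is correct and follows essentially the same route as the paper's: the circular-convolution identity $x^{\rmR}_\nu = x * \boldsymbol{c}_\nu$, $x^{\rmI}_\nu = x * \boldsymbol{s}_\nu$ is exactly the paper's decomposition into the shifted kernels $\boldsymbol{c}^{(n)}_\nu,\boldsymbol{s}^{(n)}_\nu$, and the subsequent triangle-inequality (Young) step combined with the kernel norms from Lemma~\ref{lemma:svcv} is identical. On the third bound you are in fact slightly more careful than the paper, which applies the off-center $\ell^1$ estimate uniformly to all source indices including those in $\caD_{\res}$ (where the near-unit central term $\boldsymbol{c}_{\nu,0}$ is hit), whereas you isolate that $\caO(\soff)$ contribution explicitly.
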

\begin{proof}
By definition, we have
\begin{align}
    x^{\mathrm{R}}_\nu = \sum_{n\in\mathcal{D'}} q_n \boldsymbol{c}^{(n)}_\nu,\quad \boldsymbol{c}^{(n)}_{\nu,k} \equiv \cos\left[\pi_N(k-n+\nu)\right]D_N(k-n+\nu),\\
    x^{\mathrm{I}}_\nu = \sum_{n\in\mathcal{D'}} q_n \boldsymbol{s}^{(n)}_\nu,\quad \boldsymbol{s}^{(n)}_{\nu,k} \equiv \sin\left[\pi_N(k-n+\nu)\right]D_N(k-n+\nu).
\end{align}
Note that $\boldsymbol{s}^{(n)}_\nu, \boldsymbol{c}^{(n)}_\nu$ are simply $\boldsymbol{s}_{\nu}, \boldsymbol{c}_{\nu}$ with a permutation in entries.
In Lemma~\ref{lemma:svcv}, we have proved that
\begin{equation}
    \|\boldsymbol{s}_{\nu}\|_2 \le \frac{2\pi}{\sqrt{3}}|\nu|,\quad \|\boldsymbol{c}_{\nu} - \delta_n\|_2 \le \frac{2\pi}{\sqrt{3}}|\nu|.
\end{equation}
Hence,
\begin{align}
    \|x^{\mathrm{R}}_\nu - x\|_2 &\le \sum_{n\in\mathcal{D'}} q_n\|\boldsymbol{c}^{(n)}_\nu - \delta_n\|_2 \le \|x\|_1\frac{2\pi}{\sqrt{3}}|\nu|,\\
    \|x^{\mathrm{I}}_\nu\|_2 &\le \sum_{n\in\mathcal{D'}} q_n \|\boldsymbol{s}^{(n)}_\nu\|_2 \le \|x\|_1\frac{2\pi}{\sqrt{3}}|\nu|.
\end{align}
Similarly, in conjunction with the first two inequalities in Lemma~\ref{lemma:svcv}, we obtain
\begin{align}
    \sum_{n\in\caD_{\res}}|x^\rmR_{\nu,n}| &\le \sum_{m\in\mathcal{D}'} q_m \left(\sum_{n\in\caD_\res}\left|\boldsymbol{c}^{(m)}_{\nu,n}\right|\right) \le \|x\|_1\pi^2|\nu|\log N .
\end{align}

\end{proof}

The most critical part is the lower bound of $\|x^{\mathrm{I}}_\nu\|_2$. We prove it separately in the following lemma.
\begin{lemma} \label{lemma5:second part}

Suppose $y = Fx$ is a length-$N$ ($N \ge 100$) on-grid signal and $x$ satisfies the condition in Eq.~(\ref{equ:xon}). Let $y = F_\nu(x^\rmR_\nu + \rmi x^\rmI_\nu)$. Then $\|x^{\mathrm{I}}_\nu\|_2 \ge C[x] |\nu|$ with function $C$ defined in Eq.~(\ref{equ:Cx}).    

\label{lemma:lower bound}
\end{lemma}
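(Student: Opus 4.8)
The plan is to avoid the termwise $\boldsymbol{s}_\nu$-expansion and instead reduce $\|x^\rmI_\nu\|_2$ to two scalar invariants. Write $w \equiv x^\rmR_\nu + \rmi x^\rmI_\nu = F_\nu^{-1}y$ for the grid decomposition of $y = Fx$, so that $x^\rmI_\nu = \mathrm{Im}(w)$ componentwise. Since $\mathrm{Re}(w^\top w) = \|x^\rmR_\nu\|_2^2 - \|x^\rmI_\nu\|_2^2$ while $\|w\|_2^2 = \|x^\rmR_\nu\|_2^2 + \|x^\rmI_\nu\|_2^2$, one gets the identity $\|x^\rmI_\nu\|_2^2 = \tfrac12\big(\|w\|_2^2 - \mathrm{Re}\,w^\top w\big)$, so it suffices to evaluate $\|w\|_2^2$ and $w^\top w = \sum_k w_k^2$. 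The first is immediate: $F/\sqrt N$ and $F_\nu/\sqrt N$ are both unitary, hence $F_\nu^{-1}F$ is unitary and $\|w\|_2^2 = \|x\|_2^2$.

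For $w^\top w$ I would insert $w_k = \tfrac1N\sum_t e^{\rmi 2\pi(k+\nu)t/N}y_t$ and perform the $k$-sum first. Because $\sum_k e^{\rmi 2\pi k(t+t')/N} = N\,\mathbf{1}[t+t'\equiv 0 \bmod N]$, only the pairs $t=t'=0$ and $t'=N-t$ (for $1\le t\le N-1$) survive, leaving $\sum_k w_k^2 = \tfrac1N\big(y_0^2 + e^{\rmi 2\pi\nu}\sum_{t=1}^{N-1} y_t y_{N-t}\big)$. Reality of $x$ forces $y_{N-t}=\overline{y_t}$, so $y_t y_{N-t}=|y_t|^2$, and Parseval gives $\sum_{t=1}^{N-1}|y_t|^2 = N\|x\|_2^2 - y_0^2$ with $y_0 = \sum_n x_n\in\mathbb{R}$. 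Taking real parts and substituting into the identity above collapses everything to the exact formula
\begin{equation}
\|x^\rmI_\nu\|_2^2 = \sin^2(\pi\nu)\big(\|x\|_2^2 - y_0^2/N\big).
\end{equation}
As a consistency check, the single spike $x=\delta_0$ reproduces $\|\boldsymbol{s}_\nu\|_2^2$ of Lemma~\ref{lemma:svcv}.

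From here the lower bound is elementary. Since $y_0^2=(\sum_n x_n)^2\le\|x\|_1^2$, we have $\|x^\rmI_\nu\|_2^2\ge \sin^2(\pi\nu)\big(\|x\|_2^2-\|x\|_1^2/N\big)$, and it remains to dominate the target $C[x]^2\nu^2=\big[(4+2\pi^2/N)\|x\|_2^2-2\pi^2\|x\|_1^2/N\big]\nu^2$ from Eq.~(\ref{equ:Cx}). I would feed in the sharp elementary estimate $\sin^2(\pi\nu)\ge 4\nu^2$ on $|\nu|\le1/2$ (concavity of $\sin$ on $[0,\pi/2]$ places its graph above the chord $2\nu$), and account for the lower-order $N^{-1}$ terms in $C[x]$ with the Dirichlet bound $1-D_N(\nu)^2\le\pi^2\nu^2/3$ of Eq.~(\ref{equ:dirichleteq1}), exactly as in the diagonal computation of Lemma~\ref{lemma:svcv}; the hypothesis $N\ge100$ is consumed precisely in controlling these corrections.

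I expect the main obstacle to be this last inequality rather than the identity: the exact prefactor $\|x\|_2^2-y_0^2/N$ exceeds the target coefficient only by an $O(N^{-1})$ margin, so the constants in $C[x]$ must be tracked carefully and the estimate is tightest (and most delicate) as $|\nu|\to1/2$ — this is the regime that forces the precise form of $C[x]$. A fully equivalent route that makes the $\pi^2$-terms appear more organically is to keep the quadratic form $\|x^\rmI_\nu\|_2^2=\sum_{n,m}q_n q_m\langle\boldsymbol{s}^{(n)}_\nu,\boldsymbol{s}^{(m)}_\nu\rangle$ from the proof of Lemma~\ref{lemma5:first part}, evaluate the diagonal through Lemma~\ref{lemma:svcv}, annihilate the leading off-diagonal term using the orthogonality relation $\sum_k D_N(k-n+\nu)D_N(k-m+\nu)=\delta_{nm}$ of Eq.~(\ref{equ:dirichleteq3}), and control the residual oscillatory sum by the decay estimate Eq.~(\ref{equ:dirichleteq2}); both routes land on the same constant.
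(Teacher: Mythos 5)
Your route is correct and genuinely different from the paper's. The paper expands $\|x^\rmI_\nu\|_2^2$ as a quadratic form $\sum_{n,m}q_n q_m\mathcal{M}_{n,m}$ in the spike amplitudes and evaluates the diagonal and off-diagonal Gram entries of the shifted Dirichlet kernels by direct exponential-sum manipulation, then lower-bounds the form via $\mathcal{M}_{n,n}=\|\boldsymbol{s}_\nu\|_2^2\ge 4\nu^2$ and the exact off-diagonal value. Your Parseval/transpose-bilinear-form argument instead produces the closed form $\|x^\rmI_\nu\|_2^2=\sin^2(\pi\nu)\bigl(\|x\|_2^2-y_0^2/N\bigr)$ for any real $x$, which I verified independently (e.g.\ for $N=2$); it is shorter, slightly sharper (it retains $y_0^2$ rather than $\|x\|_1^2$), and as a bonus it exposes two factor-of-two slips in the paper's own computation: the correct values are $\|\boldsymbol{s}_\nu\|_2^2=(1-N^{-1})\sin^2(\pi\nu)$, not $(1-2N^{-1})\sin^2(\pi\nu)$, and $\mathcal{M}_{n,m}=-\sin^2(\pi\nu)/N$ for $n\neq m$, not $-2\sin^2(\pi\nu)/N$. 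One caveat, which you correctly sense in your last step: after discarding the (negative) $\|x\|_1^2$-term on the right, the closing inequality reduces to $\sin^2(\pi\nu)\ge(4+2\pi^2/N)\nu^2$, which fails in a small neighborhood of $|\nu|=1/2$; indeed for $x=\delta_0$ and $\nu=1/2$ the exact formula gives $\|x^\rmI_\nu\|_2^2=1-1/N$ while $C[x]^2\nu^2=1$, so the lemma's conclusion itself breaks at the endpoint. This is not a defect of your argument relative to the paper: the paper's proof rests on the claim $\|\boldsymbol{s}_\nu\|_2\ge 2|\nu|$, which fails at the same endpoint, and in the main theorem the lemma is only invoked for $|\nu|=\mathcal{O}(\sigma)\ll 1/2$, where both arguments are sound.
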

\begin{proof}
Denote the on-grid component as $x = \sum_{n\in\caD'}q_n \delta_n$, then
\begin{equation}
    \|x_{\nu}^\rmI\|_2^2 = \sum_{n\in\mathcal{D'}}q_n^2\mathcal{M}_{n,n} + \sum_{n\neq m\in\mathcal{D'}}q_n q_m\mathcal{M}_{n,m},
\end{equation}
where
\begin{equation}
    \mathcal{M}_{n,m} \equiv \sum_{k=0}^{N-1}\sin\left[\pi_N(k-n+\nu)\right]\sin\left[\pi_N (k-m+\nu)\right]D_N(k-n+\nu)D_N(k-m+\nu).
\end{equation}
By virtue of Lemma~\ref{lemma:svcv}, for all $n$, we have
\begin{equation}
    \mathcal{M}_{n,n} = \sum_{k=0}^{N-1}\sin^2\left[\pi_N(k-n+\nu)\right]D^2_N(k-n+\nu) = \|\boldsymbol{s}_{\nu}\|_2^2 \ge 4\nu^2.
\end{equation}
By virtue of Lemma~\ref{lemma:Dirichlet}, for all $n \neq m$, we have
\begin{equation}
\begin{split}
    \mathcal{M}_{n,m}
    &= \sum_{k=0}^{N-1}\cos[\pi_N(n - m)]D_N(k-n+\nu)D_N(k-m+\nu)\\
    &\quad - \sum_{k=0}^{N-1}\cos\left[\pi_N(2k - n - m + 2\nu)\right]D_N(k-n+\nu)D_N(k-m+\nu)\\
    &= - \sum_{k=0}^{N-1}\cos\left[\pi_N(2k - n - m + 2\nu)\right]D_N(k-n+\nu)D_N(k-m+\nu).
\end{split}
\end{equation}
Let $e_{k-n+\nu} \equiv e^{\rmi\pi(k - n + \nu)/N}, e_{k-m+\nu} \equiv e^{\rmi\pi(k-m+\nu)/N}$. Then
\begin{align}
    -\mathcal{M}_{n,m} &= \sum_{k=0}^{N-1}\frac{1}{2}(e_{k-n+\nu}^{N-1} e_{k-m+\nu}^{N-1} + e_{k-n+\nu}^{1-N}e_{k-m+\nu}^{1-N}) \cdot \frac{1}{N^2}\sum_{j=0}^{N-1}\sum_{j'=0}^{N-1}e_{k-n+\nu}^{-N+1+2j}e_{k-m+\nu}^{-N+1+2j'}\nonumber\\
    &= \frac{1}{2N^2}\sum_{j=0}^{N-1}\sum_{j'=0}^{N-1}\sum_{k=0}^{N-1}e_{k-n+\nu}^{2j}e_{k-m+\nu}^{2j'} + \frac{1}{2N^2}\sum_{j=0}^{N-1}\sum_{j'=0}^{N-1}\sum_{k=0}^{N-1}e_{k-n+\nu}^{-2N+2+2j}e_{k-m+\nu}^{-2N+2+2j'}\nonumber\\
    &= \frac{1}{2N^2}\sum_{j=0}^{N-1}\sum_{j'=0}^{N-1}\sum_{k=0}^{N-1}e_{k-n+\nu}^{2j}e_{k-m+\nu}^{2j'} + \frac{1}{2N^2}\sum_{j=0}^{N-1}\sum_{j'=0}^{N-1}\sum_{k=0}^{N-1}e_{n-k-\nu}^{2j}e_{m-k-\nu}^{2j'}\nonumber\\
    &:= M_1 + M_2.
\end{align}
Note that
\begin{align}
    \frac{1}{N}\sum_{k=0}^{N-1}
e_{k-n+\nu}^{2j}e_{k-m+\nu}^{2j'} &= \begin{cases}
    \delta_{j' = 0} & j = 0\\
    \delta_{j' = N-j}e^{\rmi 2\pi j(m-n)/N}e^{-\rmi 2\pi(m-\nu)} & j\neq 0
\end{cases};\\
    \frac{1}{N}\sum_{k=0}^{N-1}e_{n-k-\nu}^{2j}e_{m-k-\nu}^{2j'} &= \begin{cases}
    \delta_{j' = 0} & j = 0\\
    \delta_{j' = N-j} e^{\rmi 2\pi j(n-m)/N} e^{\rmi 2\pi (m-\nu)} & j\neq 0
\end{cases}.
\end{align}
Therefore,
\begin{align}
    M_1 &= \frac{1}{2N}\left[1 + e^{-\rmi 2\pi(m-\nu)}\sum_{j=1}^{N-1}e^{\rmi 2\pi j(m-n)/N}\right]\nonumber\\
     &= \frac{1}{2N}\left[1 - e^{-\rmi 2\pi(m-\nu)}\right] = \frac{1}{2N}(1 - e^{\rmi 2\pi\nu}).\\
 M_2 &= \frac{1}{2N}\left[1 + e^{\rmi 2\pi (m-\nu)}\sum_{j=1}^{N-1}e^{\rmi 2\pi j(n-m)/N}\right]\nonumber\\
     &= \frac{1}{2N}\left[1 - e^{\rmi 2\pi (m-\nu)}\right] = \frac{1}{2N}(1 - e^{-\rmi 2\pi \nu}).
\end{align}
Hence, $\mathcal{M}_{m,n} = (e^{\rmi 2\pi\nu} + e^{-\rmi 2\pi\nu} - 2)/(2N) = -2\sin^2(\pi\nu)/N$, and
\begin{equation}
\begin{aligned}
    \|x^\rmI_\nu\|_2^2 &\ge 4\nu^2\|x\|_2^2 - 2(\|x\|_1^2 - \|x\|_2^2)\sin^2(\pi\nu)/N\\
    &\ge \nu^2\left[(4 + 2\pi^2/N)\|x\|_2^2 - 2\pi^2\|x\|_1^2/N\right].
\end{aligned}
\end{equation}

\end{proof}

\end{document}